  \setlist{nolistsep}
\newif\ifnotfinal
\newif\ifprog
\newif\ifall
\newif\ifblack
  \newcommand{\blackout}[1]{{\color{black!50} #1}}  
  \newcommand{\blackout}[1]{}
\let\note\relax
\renewcommand\tableofcontents{%
\@starttoc{toc}%
}
  \let\LLNCSproof\proof                 
  \let\LLNCSendproof\endproof         
  \let\proof\@undefined                  
  \let\endproof\@undefined              
  \let\proof\LLNCSproof                         
  \let\endproof\LLNCSendproof              
  \newtheorem*{rep@theorem}{\rep@title}
  \newcommand{\newreptheorem}[2]{%
  \newenvironment{rep#1}[1]{%
   \def\rep@title{#2 \ref{##1}}%
   \begin{rep@theorem}}%
   {\end{rep@theorem}}}
\newtheorem{observation}[theorem]{Observation}
\newif\iffig
\renewcommand{\P}{\mathcal{P}}
\newcommand{\ak}[1]{\note[AK]{#1}}
\newcommand{\sj}[1]{\note[SJ]{#1}}
\newcommand{\sr}[1]{\note[SR]{#1}}
\newcommand{\rb}[1]{\note[RB]{#1}}
\newcommand\destut{\mathsf{destutter}}
\newcommand\tokengraph[2]{{{#1}}|{#2}}
\newcommand\abstraction[2]{{{#1}}|{#2}}
\newcommand\dcon[2]{\textsc{con}_d\tokengraph{#1}{#2}}
\newcommand{\trans}[3]{#1 \stackrel{\mathsf{#3}}{\rightarrow} #2}
\newcommand{\Trans}{\delta}
\newcommand{\LabelFun}{\lambda}
\newcommand{\AP}{\textsf{AP}}
\newcommand{\APproc}{\textsf{AP}_{\textsf{pr}}}
\newcommand{\APsys}{\textsf{AP}_{\textsf{sys}}}
\newcommand{\Pproctemp}{{\mathcal{P}}}
\newcommand{\Psimptok}{\Pproctemp_{\textsf {\textit{u}}}}
\newcommand{\Psnd}{\Pproctemp_{\textsf {\textit{snd}}}}
\newcommand{\Prcv}{\Pproctemp_{\textsf {\textit{rcv}}}}
\newcommand{\Prcvsnd}{\Pproctemp_{\textsf{\textit{sndrcv}}}}
\newcommand\IndSet{\mathbb{N}}
\newcommand{\ActionsInt}{\Sigma_{\mathsf{int}}}
\newcommand{\ActionsProc}{\Sigma_{\mathsf{pr}}}
\newcommand{\Locals}{Q}
\newcommand{\LocalsI}{\Locals_0}
\newcommand{\trcv}{\mathsf{rcv}}
\newcommand{\tsnd}{\mathsf{snd}}
\newcommand{\dirset}{\mathsf{Dir}}
\newcommand{\dir}{\mathsf{dir}}
\newcommand{\tok}{\mathsf{tok}}
\newcommand{\pspec}{\varphi}
\newcommand{\Pspec}{{\mathcal F}}
\def\AQLTL{AQLTL}
\newcommand{\PMCP}{{\textsf{PMCP}}}
\newcommand{\pgraph}{\ensuremath{\mathbf {G}}}
\newcommand{\pgraphH}{\ensuremath{\mathbf {H}}}
\newcommand{\pring}{\ensuremath{\mathbf {R}}}
\newcommand\ltlF{\ensuremath{\eventually}}
\newcommand\ltlG{\always}
\newcommand\ltlU{\until}
\newcommand\CTLstar{\ensuremath{\textsf{CTL}^\ast}}
\newcommand\CTLmX{\ensuremath{\textsf{CTL}\backslash\textsf{X}}}
\newcommand\CTLstarmX{\ensuremath{\textsf{CTL}^\ast\backslash\textsf{X}}}
\newcommand\LTL{{\textsf{LTL}}}
\newcommand\LTLmX{\ensuremath{\textsf{LTL}\backslash\textsf{X}}}
\newcommand\TL{\ensuremath{\textsf{TL}}}
\newcommand\piTL{\ensuremath{\{\forall,\exists\}^\ast{\textsf{-TL}}}}
\newcommand\kTL{\ensuremath{\{\forall,\exists\}^k{\textsf{-TL}}}}
\newcommand\kLTLmX{\ensuremath{\{\forall,\exists\}^k{\textsf{-LTL}\backslash\textsf{X}}}}
\newcommand\piCTLstarmX{\ensuremath{\{\forall,\exists\}^*\textsf{-CTL}^\ast\backslash\textsf{X}}}
\newcommand\universalpiCTLstarmX{\ensuremath{\{\forall\}^*\textsf{-CTL}^\ast\backslash\textsf{X}}}
\newcommand\twoCTLmX{\ensuremath{\{\forall,\exists\}^2\textsf{-CTL}\backslash\textsf{X}}}
\newcommand\twoallCTLstarmX{\ensuremath{\forall\forall\textsf{-CTL}^\ast\backslash\textsf{X}}}
\newcommand\twoexistsCTLstarmX{\ensuremath{\exists\exists\textsf{-CTL}^\ast\backslash\textsf{X}}}
\newcommand\kCTLstarmX{\ensuremath{\{\forall,\exists\}^k\textsf{-CTL}^*\backslash\textsf{X}}}
\newcommand\allkCTLstarmX{\ensuremath{\{\forall\}^k\textsf{-CTL}^*\backslash\textsf{X}}}
\newcommand\pidCTLstarmX{\ensuremath{\{\forall,\exists\}^*\textsf{-CTL}_d^*\backslash\textsf{X}}}
\newcommand\twoallCTLmX{\ensuremath{\{\exists\}^2\textsf{-CTL}\backslash\textsf{X}}}
\newcommand\dkCTLstarmX{\ensuremath{\{\forall,\exists\}^k\textsf{-CTL}^\ast_d\backslash\textsf{X}}}
\newcommand\xCTLstarmX[1]{\ensuremath{\textsf{CTL}^\ast_{#1}\backslash\textsf{X}}}
\newcommand\dCTLstarmX{\xCTLstarmX{d}}
\newcommand\nineexistsCTLmX{\ensuremath{\{\exists\}^9\textsf{-CTL}\backslash\textsf{X}}}
\newcommand\piLTLmX{\ensuremath{\{\forall,\exists\}^*\textsf{-LTL}\backslash\textsf{X}}}
\newcommand\nineallLTLmX{\ensuremath{\{\forall\}^9\textsf{-LTL}\backslash\textsf{X}}}
\newcommand\oneSLTLmX{{\textsf{1-SLTL}}\backslash\textsf{X}}
\newcommand\VarNames{\textsf{Vars}}
\newcommand\Nat{\mathbb{N}}
\newcommand{\card}[1]{{|}#1{|}}
\newcommand{\ValidRun}{\textit{VALID}_{x,\overline{y}}}
\newcommand{\specialcell}[2][c]{%
  \begin{tabular}[#1]{@{}c@{}}#2\end{tabular}}
\newcommand{\ourtitle}{Parameterized Model Checking \\ of Token-Passing Systems}
\title{\ourtitle}
\author{Benjamin Aminof$^1$, Swen Jacobs$^2$, Ayrat Khalimov$^2$, Sasha Rubin$^{1,3}$
\thanks{This work was supported by 
the Austrian Science Fund through grant P23499-N23 and
through the RiSE network (S11403, S11405, S11406, S11407-N23);
ERC Starting Grant (279307: Graph Games);
Vienna Science and Technology Fund (WWTF) grants PROSEED, ICT12-059, and VRG11-005.}}
\institute{$^1$IST Austria ({\sf first.last@ist.ac.at}), 
$^2$TU Graz ({\sf first.last@iaik.tugraz.at}),
$^3$TU Wien}
\authorrunning{B. Aminof, S. Jacobs, A. Khalimov, S. Rubin}
\titlerunning{\ourtitle}
\begin{document}
\maketitle

\begin{abstract}
We revisit the parameterized model checking problem for token-passing systems
and specifications in indexed $\textsf{CTL}^\ast \backslash \textsf{X}$. Emerson and Namjoshi (1995, 2003) have shown that parameterized model checking of indexed
$\textsf{CTL}^\ast \backslash \textsf{X}$ in uni-directional token rings can be reduced to
checking rings up to some \emph{cutoff} size. Clarke et al. (2004) have shown a similar
result for general topologies and indexed $\textsf{LTL} \backslash \textsf{X}$, provided
processes cannot choose the directions for sending or receiving the token.

~~~~We unify and substantially extend these results by systematically exploring fragments of indexed $\textsf{CTL}^\ast
\backslash \textsf{X}$ with respect to general topologies. For each
fragment we establish whether a cutoff exists, and for some concrete topologies, such as rings, cliques and stars, we infer small cutoffs. Finally, we show that the problem becomes undecidable, and thus no cutoffs exist, if processes are allowed to choose the directions in which they send or from which they receive the token.

\blackout{
The parameterized model checking problem for concurrent systems composed of identical
processes is to decide if a given temporal logic specification holds irrespective of the number of
processes. We address the parameterized model checking problem for token-passing systems
and specifications in indexed $\textsf{CTL}^\ast \backslash \textsf{X}$.


\
%

Previous results have shown that parameterized model checking can sometimes be reduced to model checking systems with a number of processes up to some {\em cutoff} value.  Notably, for token-passing systems, such cutoffs are known for uni-directional ring topologies with respect to the prenex fragment of indexed  $\textsf{CTL}^\ast \setminus \textsf{X}$ (Emerson and Namjoshi 1995), and for general topologies with respect to the prenex fragment of indexed $\textsf{LTL} \setminus \textsf{X}$ provided processes are not allowed to choose the direction to which the token is sent or from which it is received (Clarke, Talupur, Touili and Veith 2004).

\

We complete the picture by systematically exploring indexed $\textsf{CTL}^\ast \setminus \textsf{X}$ and its fragments with respect to general topologies, and provide for each fragment either a cutoff or a proof that no cutoff exists. We unify and substantially extend the cutoff results mentioned above. In almost all cases where a cutoff does not exist we show that the parameterized model checking problem is actually undecidable. We also consider how the problem changes if processes are allowed to choose the directions in which they send or receive the token.

}

\end{abstract}

\ifnotfinal
\setcounter{tocdepth}{2}
\tableofcontents

%
%
%
\listoftodos

\section{Loose Ends and Next Steps}

\begin{enumerate}
\item Does $1$-index \CTLstarmX\ have cutoff?

\item What happens if there is no fairness assumption on $P$?

\item Characterise those topologies \pgraph that have decidable PMCP for
  \piCTLstarmX; or at least describe a big class. 

\item Do the Reduction+Finiteness properties hold for omega-regular languages
  that are closed under stuttering? i.e., what are the minimal assumptions that
  are needed to make the proof work.  

\item Can we generalise EK04 and show that PMCP is decidable for bi-rings with N/C or C/N? 

\item multiple tokens may be very interesting for applications

\item decidability is usually obtained by finiteness restrictions: values on tokens, directions, etc. - what is the common denominator? what is the connection to more general communication primitives? (e.g., tokens with values come close to arbitrary pairwise rendezvous)

\item topologies: notion in our paper may not be well-used. is there something from computational topologies etc., that we could use? (People at IST and Ljubljana working on this)

\item can we deal with $\ctlE \forall i: \ltlF \ltlG p_i$. That is, formulas in which index quantifiers can be in the scope of path quantifiers but not in the scope of temporal operators? The undecidability fragment of Igor is not of this form.

\item for which topologies is PMCP with binary-valued token decidable? Cliques, ...

\item I wonder what is the exact relationship between, say $k$-LTL on rings, and FOL with depth k quantifiers on rings (in the signature of $(+1,<)$.
\end{enumerate}

\subsection{Related work that we might cite}

\begin{enumerate}
\item From a reviewer: There are recent works on models with broadcast communication that consider similar problems for
different computational models (broadcast protocols distributed on a graph):
a-Delzanno-Sangnier-Zavattaro, Parameterized Verification of Ad Hoc Networks, CONCUR 2010
b-Delzanno-Sangnier-Zavattaro, On the Complexity of Parameterized Reachability in Reconfigurable Broadcast Networks, FSTTCS 2012
Some of the negative results exploit encodings that are similar to those considered in this paper
(but in a,b the topology is not known a priori and thus the structures have to be discovered dynamically by controlling interferences).
It would be interesting to consider specifications like indexed LTL and CTL for broadcast protocols.
\item There are a bunch of papers on mobile robots on graphs. These seem like k-tokens with values.

\item From a reviewer: The review of closely related work is very good, but for a broader
audience, it would be useful to have some background comparing
token-passing systems to other models for parameterized systems.
In particular, Finkel and Schnoebelen's Well-Structured Transition
Systems seem to be related although obviously not identical.

\item From a reviewer: Find a decidable version of TPS that does allow directionality. Helmut suggested: fix k many types of processes. Allow processes to send token to process of a given type. Does this have decidable PMCP?

\item Notion of nominals in enriched mu calculus.

\end{enumerate}

\subsection{Pairwise Rendezvous}
TPS can be expressed as PR. If the token is binary valued then we use two action symbols $\{a,b\}$.
Thus PMCP is undecidable for PR communication, ring topology, two actions.

\begin{enumerate}
\item If topology is a clique then using ideas in GS92, $k$-index $\omega$-regular specs is decidable. Can extract a cutoff that depends on $P$.
\item I think we had an example showing the cutoff must depend on $P$.
\item Can similar things be done with probabilistic $P$? There is a dfn of probabilistic concurrent system based on broadcast (in bertrand's work).
\item If topology is arbitrary but there is one action symbol, under what fairness restrictions on $P$ can we make the reduction theorem work? (certainly it is enough to require that from every state and every action command $a!$ and $a?$ there is a path of internal transitions and then that action. But in a way this is too restrictive, since can't capture token passing!) 
\item The paper "Network grammars, communication behaviours and automatic verification" by grumberg et al. considers PR on topologies generated by context free grammar and formulas from dept 2 formulas (in which index quantifiers are not nested). Q: is PMCP undecidable for this logic? The main result states that for certain CFG, establishing spec for deriviation of size 1 is the same as of size n (for all n). 
\item For which other topologies is PMCP decidable for PR? Not uni-rings (see Suzuki). Bi-rings? Bi-lines?
\end{enumerate}

\subsection{Comments from presentation at TU, Sep}
- It would be nice to automate cutoff finding. this is something we have all discussed at one time or another. tomer, a new postdoc with helmut who comes from a finite model theory background, suggested we look at topologies that have bounded clique width. this includes rings, trees, cliques, stars, etc. he and i sketched some of this on a board and it looks neat and promising (actually much neater than my MSO/automata idea). We should discuss the best way to incorporate any successful theorem. Maybe a separate tool paper? Would this be interesting for synthesis?

- Helmut mentioned cutoffs are also useful for testing.

- Helmut suggested a different variation. Fix finitely many process colours and require that a node in the topology have a process colour associated with it. Then a process template may say something like 'send to a neighbour with color red'. This is very loose, but there should be a way to formalise it that gives cutoffs.

- People asked for examples where the cutoff depends on d. I guess we can come up with examples (all graphs for instance). In fact, this should follow from the strictness of the $CTL_d$ hierarchy (thm 3.9 in orna's paper).

\subsection{Musings}
\begin{enumerate}
\item In what way is the reduction theorem a composition theorem in the sense of Rabinovich?
\item Can we get the same results with no fairness restriction?
\item What do rings and cliques have in common? All the points look identical
  (for every two points there is an automorphism sending one to the other).
  This seems a useful property for showing that the $d$-contractions stabilise.
  \item what happens if we allow the next operator? lose finiteness thm.
\end{enumerate}

\fi

\section{Introduction}

As executions of programs and protocols are increasingly distributed
over multiple CPU cores or even physically separated computers,
correctness of concurrent systems is one of the primary challenges of
formal methods today.  Many concurrent systems consist of an arbitrary
number of identical processes running in parallel. The parameterized
model checking problem (PMCP)  
for concurrent systems is 
to decide if a given temporal logic specification holds 
irrespective of the number of participating processes. 

The PMCP is undecidable in many cases. For example, it is undecidable already for safety specifications and finite-state processes communicating by passing a binary-valued token around a uni-directional ring~\cite{Suzuki,EN95}. However, decidability may be regained by restricting the communication primitives, the topologies under consideration (i.e., the underlying graph describing the communication paths between the processes), or the specification language. In particular, previous results have shown that parameterized model checking can sometimes be reduced to model checking a finite number of instances of the system, up to some {\em cutoff} size.


For token-passing systems (TPSs) with uni-directional ring topologies,
such cutoffs are known for specifications in the prenex fragment of
indexed \CTLstar\ without the next-time operator (\CTLstarmX)~\cite{EN95popl,EN95}.\sr{What do you think of this footnote?  It is common in all these results (as well as this work) to consider specification languages
without $X$ (the "next" operator). The basic reason is that since we discretize time, when a process makes an internal transition time has to proceed by one step. In particular, if a formula only talks about certain processes, then when some other processes make internal moves, time can advance any number of steps. But this advance, which could be expressed by a formula that uses $X$, should not be allowed to effect the truth of the formula.}
\sr{or this...  It is common in all these results (as well as this work) to consider specification languages
without $X$ (the "next" operator). The basic reason is that by
considering discreet time, one has to assume a new time step whenever
any process in the system changes its internal state. Thus,  event
that happened exactly one time step after another event when the
number of processes is small, is considered to happen a few time steps
later when more processes are involved, simply because one of the new
processes made some internal move. Since this should usually not
affect the validity of the formula, the use of ``next'' is
disallowed, and is replaced by ``eventually'', or by
``until'' when one wishes to express the requirement the some
configuration q is entered immediately after a configuration p.}
\sj{I think it should be much shorter: ``It is common in all these
  results to consider specifications without ${\sf X}$, as these are
  stuttering-insensitive. This is desirable since the execution model
  is in general asynchronous, and we are interested in specifications
  that hold regardless of the number of components.''}
For token-passing in 
general topologies, cutoffs are known for the prenex fragment of
indexed $\LTLmX$, provided that processes are not
allowed to choose the direction to which the token is sent or from
which it is received~\cite{CTTV04}. In this paper we generalize these
results and elucidate what they have in common. 
\medskip
\noindent{\textbf{Previous Results.}}
In their seminal paper~\cite{EN95}, Emerson and Namjoshi consider systems where the token does not carry messages, and specifications are in prenex indexed temporal logic --- i.e., quantifiers $\forall$ and $\exists$ over processes appear in a block at the front of the formula.
They use the important concept of a {\em cutoff} --- a number $c$ such
that the PMCP for a given class of systems and specifications can be
reduced to model checking systems with up to $c$
processes.\sr{rephrase?: They present results about  cutoffs --- a
  number $c$ such that a PMCP can be reduced to model checking systems
  with up to $c$ processes.} If model checking is decidable, then
existence of a cutoff implies that the PMCP is
decidable.\sj{Should we also say here ``if MC is decidable'' instead
  of ``if processes are finite''?} Conversely, if the PMCP is undecidable, then there can be no
cutoff for such systems.

For uni-directional rings, Emerson and Namjoshi provide cutoffs for
formulas with a small number $k$ of quantified index variables, and
state that their proof method allows one to obtain cutoffs for other
quantifier prefixes. In brief, cutoffs exist for the branching-time
specification language prenex indexed \CTLstarmX\ and the highly regular topology of uni-directional rings.

Clarke et al.~\cite{CTTV04} consider the PMCP for token-passing systems arranged in general topologies.  Their main result is that the PMCP for systems with arbitrary topologies and $k$-indexed $\LTLmX$ specifications (i.e., specifications with $k$ quantifiers over processes in the prenex of the formula) can be reduced to combining the results of model-checking finitely many topologies of size at most $2k$ \cite[Theorem $4$]{CTTV04}. Their proof implies that, for each $k$, the PMCP for linear-time specifications in $k$-indexed \LTLmX\ and general topologies has a cutoff. 
\begin{table}[b]
\vspace{-0.4cm}
\parbox{.49\linewidth}{
\centering
\input{map_du}
\caption{Direction-Unaware TPSs.}
\label{fig:DU}
}
\hfill
\parbox{.49\linewidth}{
\centering
\input{map_da}
\caption{Direction-Aware TPSs.}
\label{fig:DA}
}
\vspace{-0.4cm}
\end{table}

\smallskip
\noindent{\textbf{Questions.}}
Comparing these results, an obvious question is: are there cutoffs for
branching time temporal logics and arbitrary topologies (see
Table~\ref{fig:DU})? Clarke et al. already give a first
answer~\cite[Corollary $3$]{CTTV04}. They prove that there is no
cutoff for token-passing systems with arbitrary topologies and
specifications from $2$-indexed \CTLmX. However, their proof makes use
of formulas with unbounded nesting-depth of path quantifiers. This lead us to the first question.

\medskip
\noindent
\emph{Question $1$.} Is there a way to stratify $k$-indexed \CTLstarmX\ such that for each level of the stratification there is a cutoff for systems with arbitrary topologies? In particular, does stratification by nesting-depth of path quantifiers do the trick?

\medskip
\noindent Cutoffs for $k$-indexed temporal logic fragments immediately yield that for each $k$ there is an algorithm (depending on $k$) for deciding the PMCP for $k$-indexed temporal logic. However, this does not imply that there is an algorithm that can compute the cutoff for a given $k$. In particular, it does not imply that PMCP for full prenex indexed temporal logic is decidable.

\medskip
\noindent
\emph{Question $2$.}  For which topologies (rings, cliques, all?) can one conclude that the PMCP for the full prenex indexed temporal logic is decidable?

\medskip
\noindent
Finally, an important implicit assumption in Clarke et al.~\cite{CTTV04} is that processes are not {\em direction aware}, i.e., they cannot sense or choose in which direction the token is sent, or from which direction it is received. In contrast, Emerson and Kahlon~\cite{EK04} show that cutoffs exist for certain direction-aware systems in bi-directional rings (see Section~\ref{sec:related}). We were thus motivated to understand to what extent existing results about cutoffs can be lifted to direction-aware systems, see Table~\ref{fig:DA}.

\medskip
\noindent
\emph{Question $3$.} Do cutoffs exist for direction-aware systems on arbitrary topologies and $k$-indexed temporal logics (such as \LTLmX\ and \CTLmX)?

\blackout{
\todo{Old intro+questions}
Comparing these results, one obvious question is: are there cutoffs for branching temporal logics and arbitrary topologies? Clarke et al. already give a first answer. They prove that 
there is no cutoff for token-passing systems with arbitrary topologies and specifications from $2$-indexed \CTLmX \cite[Corollary $3$]{CTTV04}.  The proof provides a way to construct, for every number $n$, a \sr{satisfiable} formula that holds in a system only if the topology has size at least $n$. Thus, there cannot be a cutoff.

An important assumption in Clarke et al. is that processes are not {\em direction aware}, i.e., they cannot sense or choose in which direction the token is sent, or from which direction it is received. In contrast, Emerson and Kahlon~\cite{EK04} show that cutoffs exist for certain direction-aware systems in bi-directional rings, see Section~\ref{sec:related}.\footnote{In uni-directional rings the notion is irrelevant because every process has exactly one outgoing direction and exactly one incoming direction.} It is an interesting question whether the results of Clarke et al. can be extended to direction-aware systems.\sr{Suggestion to replace second+third sentences: In contrast, cite[Theorem XX]{EK04} prove cutoffs exist for certain direction-aware systems in bi-directional rings, see Section~ref{sec:related}. We were curious if the results of Clarke et al., and Emerson and Namjoshi, could be extended to direction-aware systems.\\ SJ: I implemented some of the changes, but I don't like the ``we were curious'' part - can we leave it like this?\\ AK: i like it, especially `curious'\\SR. 'it is an interesting question' is too weak, especially if we don't say why it is interesting. How about:  We were thus motivated to understand to what extent these existing results can be lifted to branching-time specifications on arbitrary topologies with direction-aware systems. or: We were curious to what extent the CTTV04 results relied on the processes being direction-unaware.}

\subsubsection{Motivating questions.} \sr{shall we call these 'motivating problems' to distinguish them from the questions above? \\ AK: `Challenges'?}
\todo{The point about these questions is that they should follow from the discussion above. Every reader that is paying attention should say 'Yes, these are the right questions to ask given the previous work'.}
The considerations above lead to the following questions:
\todo{The questions need additional explanation to make them meaningful. It is there in Q1, but looks to complicated. Q2 also needs an explanation (why this is a different problem from cutoffs for all $k$). Q3 may be OK. Can we find a different structure? Maybe not separate Q from A, but have three paragraphs were we give Q and A together, in each?}

\emph{Question $1$.} Is there a way to stratify $k$-indexed \CTLstarmX\ such that, for each level of the stratification, there is a cutoff for direction-unaware systems with arbitrary topologies?\\ In particular, the construction of the formulas in \cite[Corollary $3$]{CTTV04} is such that with increasing $n$, the nesting depth of path quantifiers grows unboundedly. So, does the stratification by nesting depth of path quantifiers do the trick?


\emph{Question $2$.} For which topologies is the PMCP decidable for specifications from prenex indexed \CTLstarmX\ and direction-unaware systems? 


\emph{Question $3$.} Do cutoffs exist for direction-aware systems on arbitrary topologies and $k$-indexed temporal logics (such as \LTLmX)?
}

\medskip
\noindent
\textbf{Our contributions.} 
In this paper, we answer the questions above, unifying and substantially extending the known cutoff results:

\medskip
\noindent
\emph{Answer to Question $1$.}
Our main positive result (Theorem~\ref{thm:cutoff_kctld}) states that for arbitrary parameterized topologies $\pgraph$ there is a cutoff for specifications in $k$-indexed \dCTLstarmX\ --- the cutoff depends on $\pgraph$, the number $k$ of the process quantifiers, and the nesting depth $d$ of path quantifiers. In particular, indexed \LTLmX\ is included in the case $d = 1$, and so our result generalizes the results of Clarke et al.~\cite{CTTV04}. 
 

\smallskip
\noindent
\emph{Answer to Question $2$.}
We prove (Theorem~\ref{thm:undecprenex}) that there exist topologies for
which the PMCP is undecidable for specifications in prenex indexed
\CTLmX\ or \LTLmX. Note that this 
undecidability result does not contradict the existence of cutoffs
(Theorem~\ref{thm:cutoff_kctld}), since cutoffs may not be computable
from $k,d$ (see the note on decidability in
Section~\ref{sec:definition-pmcp}).
However, for certain topologies our positive result is constructive and we can
compute cutoffs given $k$ and $d$
(Theorem~\ref{thm:explicit_cutoffs}). To illustrate, we show that
rings have a cutoff of $2k$, cliques of $k+1$, and stars of $k+1$
(independent of $d$). In particular, PMCP is decidable for these
topologies and specifications in prenex indexed \CTLstarmX. 

\medskip
\noindent
\emph{Answer to Question $3$.}
The results just mentioned assume that processes are not direction-aware.  Our main negative result (Theorem~\ref{thm:DA}) states that if processes can control at least one of the directions (i.e., choose in which direction to send or from which direction to receive) then the PMCP for arbitrary topologies and $k$-indexed logic (even \LTLmX\ and \CTLmX) is undecidable, and therefore does not have cutoffs. Moreover, if processes can control both in- and out-directions, then the PMCP is already undecidable for bi-directional rings and $1$-indexed \LTLmX.

\blackout{
\subsubsection{Summarizing table}

Table~\ref{tab:summary} summarizes what is now known about prenex indexed-temporal logic on token-passing systems with valueless token. All results are new  except that in the first row of the table
the following was known: 
\begin{itemize}
  \item $4$ is a cutoff for $\twoallCTLstarmX$ on rings~\cite{EN95},
  \item $2k$ is a cutoff for $\kLTLmX$ on rings~\cite{KJB13},
  \item there does not exist a cutoff for $\twoallCTLstarmX$ on arbitrary topologies~\cite{CTTV04},
  \item for every $k$ there exists a cutoff for  $\kLTLmX$ on arbitrary topologies~\cite{CTTV04}.
\end{itemize}

\todo{complete table}

\noindent
\begin{table} 
\resizebox{\linewidth}{!}{
\begin{tabular}{ |c|c|c|}
\hline
Dir I/O &  \dkCTLstarmX\  & \piCTLstarmX\ \\
 \hline    
 \hline                    
N/N &  {cutoffs} exist for arbitrary topologies and $\forall k,d$       & cutoffs exist for certain topologies; not in general for $k=2$ \\
N/C & {undecidable} for certain topologies and $d = 1,k = 9$       & {undecidable}  \\
C/N & {undecidable} for certain topologies and  $d = 1,k = 9$     & {undecidable}  \\
C/C & {undecidable} for bi-rings        & {undecidable}  \\
  \hline  
\end{tabular}
}
 \caption{
Summary of cutoff and undecidability results for prenex indexed temporal logic.
The column `Dir I/O' refers to direction-awareness of incoming and outgoing directions: an `N' means the process can not choose directions, and entry `C' means the process can choose directions. Thus `N/C' means that the process can not choose incoming directions but can choose outgoing directions. In the first row, uni-directional rings and bi-directional rings have a cutoff of $2k$, cliques have a cutoff of $k+1$, stars have a cutoff of $k+2$.
 }
 \label{tab:summary}
 \end{table}

{\em Note on missing cases $\kCTLstarmX$ and $\pidCTLstarmX$.} These are undecidable for N/C, C/N, and C/C.
For N/N $\pidCTLstarmX$ has no cutoff on arbitrary topologies already for $d = 1$ since with $k$ quantifiers and suitable process $P$ one can express that the topology has size at least $k$. 
}

\medskip
\noindent
\emph{Technical contributions relative to previous work.}
\blackout{
\todo{This section needs a complete rewrite. Must answer: What is new? What can be reused? What are the similarities and differences with existing techniques?}

\todo{We should very carefully explain how CTTV + EN fit into our framework. We don't want them to think we didn't understand what they did.}
}
Our main positive result (Theorem~\ref{thm:cutoff_kctld}) generalizes
proof techniques and ideas from previous results~\cite{EN95,CTTV04}. We observe that in
both of these papers the main idea is to abstract a TPS by simulating
the quantified processes exactly and simulating the movement of the
token between these processes. The relevant
information about the movement of the token is this: whether there
is a direct edge, or a path (through unquantified processes) from one
quantified process to another. This abstraction does not work for
$\dCTLstarmX$ and general topologies since the formula can express
branching properties of the token movement. Our main observation is
that the relevant information about the branching-possibilities of the
token can be expressed in $\dCTLstarmX$ over the topology itself. We
develop a composition-like theorem, stating that if two topologies
(with $k$ distinguished vertices) are indistinguishable by
$\dCTLstarmX$ formulas, then the TPSs based on these topologies and an
arbitrary process template $P$ are indistinguishable by $\dCTLstarmX$
(Theorem~\ref{theorem:reduction}). The machinery involves a
generalization of stuttering trace-equivalence~\cite{SKS}, a notion of
$d$-contraction that serves the same purpose as the connection
topologies of Clarke et al.~\cite[Proposition $1$]{CTTV04},  and also
the main simulation idea of Emerson and Namjoshi~\cite[Theorem $2$]{EN95}. 

Our main negative result, undecidability of PMCP for direction-aware systems (Theorem~\ref{thm:DA}), is proven by a reduction from the non-halting problem for $2$-counter machines
(as is typical in this area~\cite{EN95,EsparzaFM99}). Due to the lack of space, full proofs are omitted, and can be found in the full version~\cite{FullVersion}.
\sj{Add a remark that a version with full proofs is available as
  TechReport or on ArXiv?\\techreport URL?}

\section{Definitions and Existing Results}


Let $\IndSet$ denote the set of positive integers. Let $[k]$ for
$k\in\IndSet$ denote the set $\{1,\dots,k\}$. 
The concatenation
of strings $u$ and $w$ is written $uw$ or $u \cdot w$. 

Let $\AP$ denote a countably infinite set of {\em atomic propositions} or {\em atoms}. A \emph{labeled transition system (LTS) over $\AP$} is a tuple
$
(Q,Q_0,\Sigma,\delta,\lambda)
$
where 
 $Q$ is the set of {\em states}, $Q_0 \subseteq Q$ are
the {\em initial states}, $\Sigma$ is the set of {\em transition labels} (also called {\em action labels}), $\delta
\subseteq Q \times \Sigma \times Q$ is the {\em transition relation}, and
$\lambda:Q \to 2^{\AP}$ is the {\em state-labeling} and satisfies that $\lambda(q)$ is finite (for every $q \in Q$).
Transitions $(q,\sigma,q') \in \delta$ may
be written $\trans{q}{q'}{\sigma}$.

A \emph{state-action path} of an LTS $(Q,Q_0,\Sigma,\delta,\lambda)$
 is a finite sequence of the form $q_0 \sigma_0 q_1 \sigma_1 \dots q_n \in (Q \Sigma)^*Q$ or an infinite sequence  $q_0\sigma_0 q_1 \sigma_1 \dots \in (Q  \Sigma)^\omega$ such that $(q_i,\sigma_i,q_{i+1}) \in \delta$ (for all $i$).
A \emph{path} of an LTS is the projection $q_0 q_1 \dots$ of a state-action path onto states $Q$.
An \emph{action-labeled path} of an LTS is the projection $\sigma_0 \sigma_1 \dots$ of a state-action path onto transition labels $\Sigma$.

\subsection{System Model (Direction-Unaware)}
\label{dfn:paramsys}\label{def:process_template}
In this section we define the LTS $P^G$ --- it consists of replicated copies of a process $P$ placed on the vertices of a graph $G$. Transitions in $P^G$ are either internal (in which exactly one process moves) or synchronized (in which one process sends the token to another along an edge of $G$). The token starts with the process that is at the initial vertex of $G$.

Fix a countably infinite set of (local) atomic propositions
$\APproc$ (to be used by the states of the individual processes).

\smallskip\noindent\textbf{Process Template $P$.}
Let $\ActionsInt$ denote a finite non-empty set of {\em internal-transition labels}.
Define $\ActionsProc$ as the disjoint union $\ActionsInt \cup \{\trcv, \tsnd\}$ where $\trcv$ and $\tsnd$ are new symbols.

A {\em process template} $P$ is a LTS
    $(\Locals, \LocalsI, \ActionsProc, \Trans, \LabelFun)$ over $\APproc$ such that:
\begin{enumerate}[label*=\roman*)]
    \item the state set $\Locals$ is finite and can be partitioned into two non-empty sets, say $T \cup N$. States in $T$ are said to {\em have the token}.
    \item The initial state set is $Q_0  = \{\iota_t, \iota_n\}$ for some $\iota_t \in T, \iota_n \in N$.
    \item Every  transition $\trans{q}{q'}{\tsnd}$ satisfies that $q$ has the token and $q'$ does not.
\item Every  transition $\trans{q}{q'}{\trcv}$ satisfies that
  $q'$ has the token and $q$ does not.
\item Every  transition $\trans{q}{q'}{a}$ with $\textsf{a} \in  \ActionsInt$ satisfies that $q$ has the token if and only if $q'$ has the token. 

\item The transition relation $\Trans$ is total in the first
  coordinate: for every $q \in \Locals$ there exists $\sigma \in
  \ActionsProc, q' \in \Locals$ such that $(q,\sigma,q') \in \Trans$
  (i.e., the process $P$ is non-terminating).

\item Every infinite action-labeled path
$a_0 a_1 \dots$ is in the set 
$(\ActionsInt^* \ \tsnd\  \ActionsInt^* \ \trcv)^\omega \cup 
 (\ActionsInt^* \ \trcv \  \ActionsInt^*\  \tsnd)^\omega$ 
(i.e., $\tsnd$ and $\trcv$ actions alternate continually along every
  infinite action-labeled path of $P$).
\footnote{This restriction was introduced by Emerson and Namjoshi in~\cite{EN95}. Our positive results that cutoffs exist also hold for a more liberal restriction (see Section~\ref{sec:extensions}).}
\end{enumerate}
           
The elements of $\Locals$ are called {\em local states} and the
transitions in $\Trans$ are called {\em local transitions (of
  $P$)}. 
A local state $q$ such that the only transitions are of the form $\trans{q}{q'}{\tsnd}$ (for some $q'$) is said to be {\em send-only}. A local state $q$ such that the only transitions are of the form $\trans{q}{q'}{\trcv}$ (for some $q'$) is said to be {\em receive-only}.

\smallskip\noindent\textbf{Topology $G$.}
 A {\em topology}
 is a directed graph $G = (V,E,x)$ where $V = [k]$ for some $k \in \IndSet$, vertex $x \in V$ is the {\em initial vertex}, $E \subseteq V \times V$,  and $(v,v) \not \in E$ for every $v \in V$. Vertices are
 called {\em process indices}.

We may also write $G = (V_G,E_G,x_G)$ if we need to disambiguate.

\smallskip\noindent\textbf{Token-Passing System $P^G$.}
	Let $\APsys := \APproc  \times \IndSet$ be the \emph{indexed atomic propositions}. 
	For $(p,i) \in \APsys$ we may also write $p_i$. 
	Given a process template $P =  (\Locals, \LocalsI, \ActionsProc, \Trans, \LabelFun)$ over $\APproc$ and a topology $G = (V,E,x)$, define the {\em token-passing system (TPS)} $P^G$ as the finite LTS 
$(S,S_0,\ActionsInt \cup \{\tok\},\Delta,\Lambda)$
over atomic propositions $\APsys :=  \APproc \times \IndSet$,
 where:
\begin{itemize}

%

\item The set $S$ of \emph{global states} is $Q^V$, i.e., all functions from $V$ to $Q$. If $s \in Q^V$ is a global state then $s(i)$ denotes the local state of the process with index $i$.

\item 
   The set of \emph{global initial states} $S_0$
    consists of the unique global state $s_0 \in Q_0^{V}$ such  that only
    $s_0(x)$ has the token (here $x$ is the initial vertex of $G$).

\item The labeling $\Lambda(s) \subset \APsys$ for $s \in S$ is defined as follows: $p_i \in \Lambda(s)$ if and only if $p \in \LabelFun(s(i))$, for $p \in \APproc$ and $i \in V$.

\item The {\em global transition relation} $\Delta$ is defined to consist of the set of all internal transitions and synchronous transitions:


%
%
    


\begin{itemize}
\item An {\em internal transition} is an element $(s,{\sf a},s')$ of $S \times \ActionsInt \times S$ for which there exists a process index $v \in V$ such that 
\begin{enumerate}[label*=\roman*)]
\item $\trans{s(v)}{s'(v)}{a}$ is a local transition of $P$, and 
\item for all $w \in V \setminus \{v\}$, $s(w) = s'(w)$.
\end{enumerate}

\item
A {\em token-passing transition}  is an element $(s,\tok,s')$ of $S \times \{\tok\} \times S$ for which there exist process indices $v,w \in V$ such that $(v,w) \in E$ and

\begin{enumerate}[label*=\roman*)]
  \item $\trans{s(v)}{s'(v)}{\tsnd}$ is a local transition of $P$,      
  \item $\trans{s(w)}{s'(w)}{\trcv}$ is a local transition of $P$, and
  \item for every $u \in V \setminus \{v,w\}$, $s'(u) = s(u)$. 
 \end{enumerate}
\end{itemize}

\end{itemize}

In words, the system $P^G$ can be thought of the asynchronous parallel composition of $P$ over topology $G$. The token starts with process $x$. At each time step either exactly one process makes an internal transition, or exactly two processes synchronize when one process sends the token to another along an edge of $G$.



\subsection{System Model (Direction-Aware)}
\label{sec:DATPS}
Inspired by direction-awareness in the work of Emerson and
Kahlon~\cite{EK04}, we extend the definition of TPS  
to include additional labels on edges, called \emph{directions}. 
The idea is that processes can restrict which directions are used when they send or receive the token. \sr{add more intuitive explanation. mention unbounded degree. eg: so when a process sends in direction $d$ the token non-deterministically goes along an edge labeled $d$ (there may be unboundedly many such edges)} \sr{add picture}

Fix finite non-empty disjoint sets $\dirset_{\tsnd}$ of \emph{sending directions} and $\dirset_{\trcv}$ of 
\emph{receiving directions}. A \emph{direction-aware token-passing system} is a TPS with the following modifications.


\smallskip\noindent\textbf{Direction-aware Topology.}\label{def:graph_with_dir}
A {\em direction-aware topology} is a topology $G = (V,E,x)$ 
with labeling functions 
$\dir_\trcv: E \rightarrow \dirset_\trcv$,
$\dir_\tsnd: E \rightarrow \dirset_\tsnd$. 
%



\smallskip\noindent\textbf{Direction-aware Process Template.}
For process templates of direction-aware systems, 
transition labels are taken from $\ActionsProc := \ActionsInt \cup 
\dirset_{\tsnd} \cup 
\dirset_{\trcv}$. 
The definition of a \emph{direction-aware process template} is like
that in Section~\ref{dfn:paramsys},
except that 
in item iii) $\tsnd$ is replaced by ${\sf d} \in \dirset_{\tsnd}$, 
in iv) $\trcv$ is replaced by ${\sf d} \in \dirset_{\trcv}$, and 
in vii) $\tsnd$ is replaced by $\dirset_{\tsnd}$ and $\trcv$ by $\dirset_{\trcv}$
.

\smallskip\noindent\textbf{Direction-aware Token Passing System.}
Fix $ \dirset_{\tsnd}$ and $\dirset_{\trcv}$, 
let $G$ be a direction-aware topology and $P$ a direction-aware process template.
%
Define the \emph{direction-aware token-passing system} $P^G$ as in
Section~\ref{dfn:paramsys}, except that token-passing transitions are now direction-aware:
%
%
\emph{direction-aware token-passing transitions} are elements
$(s,\tok,s')$ of $S \times \{\tok\} \times S$ for which 
there exist process indices $v,w \in V$ with $(v,w) \in E$, 
$\dir_\tsnd(v,w) = {\sf d}$, and 
$\dir_\trcv(v,w) = {\sf e}$, such that:

\begin{enumerate}[label*=\roman*)]

  \item
    $\trans{s(v)}{s'(v)}{d}$ is a local transition of
    $P$.

  \item
    $\trans{s(w)}{s'(w)}{e}$ is a
    local transition of $P$.
                 
  \item For every $u \in V \setminus \{ v,w\}$, $s'(u) = s(u)$. 

\end{enumerate}\sr{i think we should provide an intuition about what this means. eg. if a process wants to send in direction d but no process wants to receive from my direction then the system blocks.}

\smallskip\noindent\textbf{Notations $\Psimptok$, $\Psnd$, $\Prcv$, $\Prcvsnd$.}
Let $\Psimptok$ denote the set of all process templates for which
$\card{\dirset_{\tsnd}} = \card{\dirset_{\trcv}} = 1$. In this case
$P^G$ degenerates to a direction-unaware TPS as defined in 
Section~\ref{dfn:paramsys}.
If we require $\card{\dirset_{\trcv}} = 1$, then processes
cannot choose from which directions to receive the token, 
but possibly in which direction to send it.
Denote the set of all such process templates by $\Psnd$.
Similarly define $\Prcv$ to be all process templates where
$\card{\dirset_{\tsnd}} = 1$ --- processes cannot choose where to send
the token,  
but possibly from which direction to receive it. 
Finally, let $\Prcvsnd$ be the set of all direction-aware process templates.

\smallskip\noindent\textbf{Examples.}
Figure~\ref{fig:biring} shows a bi-directional ring with
directions {\sf cw} (clockwise) and {\sf ccw}
(counterclockwise). Every edge $e$ is labeled with
an outgoing 
direction $dir_\tsnd(e)$ and an incoming direction
$dir_\trcv(e)$.\footnote{For notational simplicity,
  we denote both outgoing direction $\tsnd_{\sf cw}$ and incoming
  direction $\trcv_{\sf cw}$ by {\sf cw}, and similarly for {\sf ccw}.} 
Using these directions, a process that has the token can choose
whether he wants to send it in direction {\sf cw} or {\sf
  ccw}. Depending on its local state, a process waiting for the token
can also choose to receive it only from direction {\sf cw} or {\sf
  ccw}.

Figure~\ref{fig:mult_directions} depicts a topology in which process
$1$ can choose between two outgoing directions. If it sends
the token in direction $\tsnd_1$, it may be received by either process
$2$ or $3$. If however process $2$ blocks receiving from direction
$\trcv_1$, the token can only be received by process $3$. If
$3$ additionally blocks receiving from $\trcv_2$, then this
token-passing transition is disabled.
\begin{figure}[t]
\vspace{-0.2cm}
\centering
\subfigure[Bi-directional ring with directions.]{%
\makebox[0.47\linewidth]{%
\scalebox{0.9}{
\begin{tikzpicture}[->,
node distance = 2cm, 
auto, 
semithick, 
inner sep=.01cm,
bend angle=55]
    \tikzset{every state/.style={rectangle,rounded corners,minimum size=2em}}
    \tikzset{every edge/.append style={font=\small, right}}
    \tikzset{box state/.style={draw,rectangle,rounded corners,inner sep=.1cm}}
    
    \tikzset{SnakeStyle/.style = {snake,segment amplitude=.2mm,segment length=1mm,line after snake=2mm}}
    
    \node[state] (1) {$1$};
    \node[state] (2) [below right of=1] {$2$};
    \node[state] (3) [below left of=2] {$3$};
    \node[state] (4) [below left of=1] {$4$};
    
    \path (1)  edge[bend left] node[anchor=south,near start,above,yshift=2pt] {\sf cw} 
                               node[anchor=west,near end,right,xshift=1pt] {\sf cw} (2);
    \path (2)  edge[bend right=35] node[anchor=north,near end,below,xshift=-3pt,yshift=-3pt] {\sf ccw} 
                               node[anchor=east,near start,left,xshift=-1pt,yshift=-3pt] {\sf ccw} (1);
    \path (2)  edge[bend left] node[anchor=west,near start,right,xshift=1pt] {\sf cw} 
                               node[anchor=north,near end,below,yshift=-2pt] {\sf cw} (3);
    \path (3)  edge[bend right=35] node[anchor=east,near end,left,xshift=-1pt,yshift=3pt] {\sf ccw} 
                               node[anchor=south,near start,above,xshift=-1pt,yshift=3pt] {\sf ccw} (2);
    \path (3)  edge[bend left] node[anchor=north,near start,below,yshift=-2pt] {\sf cw} 
                               node[anchor=east,near end,left,xshift=-1pt] {\sf cw} (4);
    \path (4)  edge[bend right=35] node[anchor=south,near end,above,xshift=1pt,yshift=3pt] {\sf ccw} 
                               node[anchor=west,near start,right,xshift=1pt,yshift=3pt] {\sf ccw} (3);
    \path (4)  edge[bend left] node[anchor=east,near start,left,xshift=-1pt] {\sf cw} 
                               node[anchor=south,near end,above,yshift=2pt] {\sf cw} (1);
    \path (1)  edge[bend right=35] node[anchor=west,near end,right,xshift=1pt,yshift=-3pt] {\sf ccw} 
                               node[anchor=north,near start,below,xshift=3pt,yshift=-3pt] {\sf ccw} (4);
    
    
    
    
    
    
\end{tikzpicture}
}
}
\label{fig:biring}}
\quad
\subfigure[Directed topology with multiple edges of same direction.]{%
\makebox[0.47\linewidth]{%
\scalebox{0.9}{
\begin{tikzpicture}[->,
node distance = 2cm, 
auto, 
semithick, 
inner sep=.01cm,
bend angle=55]
    \tikzset{every state/.style={rectangle,rounded corners,minimum size=2em}}
    \tikzset{every edge/.append style={font=\small, right}}
    \tikzset{box state/.style={draw,rectangle,rounded corners,inner sep=.1cm}}
    
    \tikzset{SnakeStyle/.style = {snake,segment amplitude=.2mm,segment length=1mm,line after snake=2mm}}
    
    \node[state] (1) {$1$};
    \node[state] (2) [above right of=1] {$2$};
    \node[state] (3) [right of=1] {$3$};
    \node[state] (4) [below right of=1] {$4$};
    
    \path (1)  edge node[anchor=east,near start,left,yshift=2pt] {$\tsnd_1$} 
                    node[anchor=east,near end,left,yshift=2pt] {$\trcv_1$} (2);
    \path (1)  edge node[anchor=north,near end,below,yshift=-1pt] {$\trcv_2$} 
                    node[anchor=north,near start,below,yshift=-1pt] {$\tsnd_1$} (3);
    \path (1)  edge node[anchor=east,near end,left,yshift=-2pt] {$\trcv_3$} 
                    node[anchor=east,near start,left,yshift=-2pt] {$\tsnd_2$} (4);
    
    
    
    
    
    
\end{tikzpicture}}
}
\label{fig:mult_directions}}
\caption{Direction-aware Topologies.}
\label{fig:fig_directions}
\vspace{-0.4cm}
\end{figure}
\subsection{Indexed Temporal Logics}
\label{sec:definition-itl}

\emph{Indexed temporal logics (ITL)} were introduced in \cite{BCG1989,ES94,EN95} to model specifications of certain distributed systems. Subsequently one finds a number of variations of indexed temporal-logics
in the literature (linear vs. branching, restrictions on the
quantification). Thus we introduce Indexed-$\CTLstar$ which has these variations (and those in \cite{CTTV04})  as syntactic fragments. 

\sr{add as footnote? ITL formulas are interpreted over systems of the form $P^G$. Although transitions of such systems are labeled by actions, ITL formulas only talk about the atomic propositions/labeling.\\SJ. YES, see survey}




\smallskip\noindent\textbf{Syntactic Fragments of \CTLstar, and $\equiv_\TL$.}
We assume the reader is familiar with the syntax and semantics of \CTLstar, for a reminder see \cite{PrinciplesMC}.
For $d \in \Nat$ let \dCTLstarmX\ denote the syntactic fragment of \CTLstarmX\ in which the nesting-depth of path quantifiers is at most $d$ (for a formal definition see \cite[Section $4$]{SKS}).

Let \TL\ denote a temporal logic (in this paper these are fragments of \CTLstarmX). For temporal logic \TL\ and LTSs $M$ and $N$, write $M \equiv_{\TL} N$ to mean that for every formula $\phi \in \TL$, $M \models \phi$ if and only if $N \models \phi$.

\medskip\noindent\textbf{Indexed-$\CTLstar$.}
%
 Fix an infinite set $\VarNames = \{x,y,z,\dots\}$ of index
 variables, i.e., variables with values from $\IndSet$. These variables refer to vertices in the topology.

%
\smallskip\noindent\textit{Syntax.}\sr{the rest of this section has been reworked. proofread}
The {\em Indexed-$\CTLstar$\ formulas over variable
set $\VarNames$ and atomic propositions $\AP$} are formed by
adding the following rules to the syntax of $\CTLstar$\ over atomic propositions $\ \AP \times \VarNames$.
We write $p_x$ instead of $(p,x) \in \AP \times \VarNames$.

If $\phi$ is an indexed-$\CTLstar$\ state (resp. path) formula and $x,y \in \VarNames, Y \subset \VarNames$, then
the following are also indexed-$\CTLstar$ state (resp. path) formulas:
\begin{itemize}
\item $\forall x.\ \phi$ and $\exists x. \phi$ (i.e., for all/some vertices in the topology, $\phi$ should hold),
\item $\forall x.\ x \in Y \to \phi$ and $\exists x. x \in Y \wedge \phi$ (for all/some vertices that that are designated by variables in $Y$),
\item $\forall x.\ x \in E(y) \to \phi$ and $\exists x. x \in E(y) \wedge \phi$ (for all/some vertices to which there is an edge from the vertex designated by the variable $y$).
\end{itemize}

\smallskip\noindent\textit{Index Quantifiers.}
We use the usual shortands (e.g., $\forall x \in Y.\ \phi$ is shorthand for $\forall x.\ x \in Y \to \phi$).
The quantifiers introduced above are called called {\em index quantifiers}, denoted $Qx$.



\smallskip\noindent\textit{Semantics.}
Indexed temporal logic is interpreted over a system instance $P^G$ (with $P$ a process template and $G$ a topology). The formal  semantics are in the full version of the paper~\cite{FullVersion}.
Here we give some examples. The formula $\forall i. \ctlE \eventually p_i$ states that for every process there exists a path such that, eventually, that process is in a state that satisfies atom $p$. The formula $\ctlE \eventually \forall i. p_i$ states that there is a path such that eventually all processes satisfy atom $p$ simultaneously. We now define the central fragment that includes the former example and not the latter.

\smallskip\noindent\textbf{Prenex indexed \TL\ and \kTL.}
 \emph{Prenex indexed temporal-logic} is a syntactic fragment of indexed temporal-logic in which all quantifiers 
are at the front of the formula, e.g., prenex indexed $\LTLmX$\ consists of formulas of the form $(Q_1x_1) \dots (Q_kx_k)\ \varphi$ where $\varphi$ is an $\LTLmX$\ formula over atoms $\AP \times \{x_1,\dots,x_k\}$, and the $Q_ix_i$s are index quantifiers. Such formulas with $k$ quantifiers will be referred to as {\em $k$-indexed}, collectively written $\kTL$. The union of $\kTL$ for $k \in \Nat$ is written $\piTL$ and called (full) prenex indexed \TL. 

\todo{add note that there is no prenex normal form}

\ifblack

{\color{black!50}
\newcommand{\Lang}{\ensuremath{{\cal L}}}
\begin{conjecture}
The fragments prenex indexed \LTL\ and \AQLTL\ are expressively incomparable.\footnote{since prenex LTL is decidable while AQLTL is not, there is no effective way to transform an AQ formula into an equivalent prenex formula. but we want more...}
\end{conjecture}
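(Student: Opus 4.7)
The plan is to separate the two fragments in both directions, using model-theoretic arguments (Ehrenfeucht--Fra\"iss\'e--style comparisons between token-passing systems $P^G$) to show that certain formulas in one fragment cannot be matched by any formula in the other. I would first fix a minimal signature (e.g., a single proposition $p$) and work over a fixed well-behaved class of topologies (e.g., uni-directional rings) to keep the combinatorics manageable.

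For the direction AQLTL $\not\subseteq$ prenex indexed LTL, my candidate witness formula is $\phi_1 = \ltlG\,\exists i.\, p_i$, which says "at every moment some process carries the label $p$.''  The strategy is to build, for every $k$, a pair of systems $P^{G_k}, P^{G_k'}$ that agree on all $k$-indexed prenex formulas but disagree on $\phi_1$.  The construction would be a ``rotating witness'': in $P^{G_k}$ exactly one process at a time satisfies $p$ and the carrier rotates around the ring, so that $\phi_1$ holds; in $P^{G_k'}$ the witness disappears for one step every so often, falsifying $\phi_1$.  A prenex formula $(Q_1 x_1)\dots(Q_k x_k)\varphi$ can only refer to $k$ concrete process indices, and by choosing the ring sufficiently large and the scheduling so that any $k$ fixed indices see the same LTL trace in both systems (up to stutter-equivalence, invoked via the machinery behind Theorem~\ref{theorem:reduction}), we conclude that no such prenex formula distinguishes them.

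For the direction prenex indexed LTL $\not\subseteq$ AQLTL, the approach depends on the precise quantifier discipline built into AQLTL; the guiding idea is that prenex formulas can bind several indices simultaneously and then refer to their \emph{joint} traces, whereas the AQLTL discipline (scattering quantifiers inside temporal operators) typically forces existential and universal choices to be made at different times.  I would pick a formula of the form $\phi_2 = \forall i.\, \exists j.\, (\ltlG\,p_i \leftrightarrow \ltlG\,p_j)$, or a similar formula whose truth depends on pairing up indices globally across time.  To show this is not in AQLTL, I would exhibit pairs of systems that differ on $\phi_2$ but look indistinguishable to any AQLTL formula, again by a stuttering/EF argument: whenever an AQLTL formula introduces a quantifier, its scope depends only on the \emph{local} temporal context, so one can rearrange the witnesses in the two systems to fool every such scope.

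The hard part will be the EF-style step in both directions: one has to prove that two carefully constructed families of TPSs are indistinguishable by all formulas of a given fragment, even though the underlying rings or graphs differ in subtle global ways.  I expect the main obstacle to be formalizing the right notion of ``$k$-equivalence'' for prenex indexed LTL (roughly, stutter-equivalence of the $k$-process projections) and the analogous ``local-scope equivalence'' for AQLTL, and then engineering the separating models so that both equivalences hold at the desired level while the target formula distinguishes them.  Once these equivalences are in hand, verifying that the chosen $\phi_1$ and $\phi_2$ witness the separation should reduce to a routine induction on formula structure.
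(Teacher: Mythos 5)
First, be aware that the paper does not actually prove this statement: it appears only as a \emph{conjecture}, accompanied by two incomplete proof sketches (one for each direction of the incomparability), so there is no finished argument to measure yours against. Your direction ``\AQLTL{} $\not\subseteq$ prenex indexed \LTL'' follows essentially the same idea as the paper's sketch: the paper uses the witness $\ltlF\,\forall i.\,p_i$ and the observation that a $k$-prenex formula only sees the $k$ chosen projections up to stuttering, whereas simultaneity is not stutter-invariant; your dual witness $\ltlG\,\exists i.\,p_i$ with a rotating carrier is the same mechanism. Note, however, that the genuinely hard step is the one you (and the paper) defer: for $k=1$ it suffices that each coordinate projection be stutter-equivalent, but a $k$-indexed formula sees the \emph{joint} trace of $k$ coordinates, and the obvious two-process examples (e.g.\ $(\emptyset)(p_1p_2)\emptyset^\omega$ versus $(\emptyset)(p_1)(p_2)\emptyset^\omega$) are already distinguishable as joint traces. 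Making $k$ arbitrary indices jointly indistinguishable while preserving the global property is where the entire content of the proof lies, and neither your plan nor the paper's sketch supplies it.

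The concrete gap is in your other direction. Your candidate $\phi_2 = \forall i.\,\exists j.\,(\ltlG\,p_i \leftrightarrow \ltlG\,p_j)$ is a tautology in the paper's semantics: the index quantifiers carry no implicit distinctness (the paper only \emph{adds} distinctness as a simplifying assumption in one proof), so $j=i$ always witnesses the existential, $\phi_2$ is equivalent to $\top$, and $\top$ is trivially expressible in \AQLTL. So this formula cannot separate anything. The paper's intended witness is of the form $\exists i.\,(\ltlF\,p_i \wedge \ltlF\,q_i)$: the point is that a prenex formula can demand that the \emph{same} process eventually satisfies $p$ and, possibly at a different time, eventually satisfies $q$, whereas an \AQLTL{} formula must re-quantify over indices inside each temporal context and therefore can only assert the uncorrelated $\ltlF\,\exists i.\,p_i \wedge \ltlF\,\exists i.\,q_i$. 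The separating models should then be a system in which one process does both versus a system in which process $1$ does only $p$ and process $2$ does only $q$, together with an argument that every \AQLTL{} formula evaluates identically on the two. You should replace your $\phi_2$ with a witness of this cross-temporal-binding kind; as stated, your direction ``prenex $\not\subseteq$ \AQLTL'' does not get off the ground.
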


\todo{Prove conjecture about QATL and ITL being expressively incomparable}

\begin{proposition} The QATL formula $\ltlF \forall x\, p_x$ can not be expressed in prenex indexed-TL.
\end{proposition}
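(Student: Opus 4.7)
The plan is to establish non-expressibility by exhibiting, for each prenex indexed-TL formula $\Phi$, a token-passing system on which $\Phi$ and $\ltlF \forall x.\, p_x$ take different truth values. Since every prenex indexed-TL formula has only finitely many index quantifiers in its prefix, such a family of counterexamples rules out any prenex formulation.

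Given $\Phi$ with $k$ index quantifiers, I invoke the cutoff theorem for uni-directional rings (Theorem~\ref{thm:explicit_cutoffs}): any two systems $P^{R_n}$ and $P^{R_m}$ with $n, m \geq 2k$ agree on $\Phi$, and more generally on every $k$-prenex indexed $\CTLstarmX$ formula. The task therefore reduces to finding a process template $P$ and ring sizes $n \neq m$, both at least $2k$, on which $\ltlF \forall x.\, p_x$ disagrees.

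I propose a template $P$ whose local states carry a bounded ``lifetime counter'' $c \in \{0,\dots,L\}$ together with the usual token flag: each token receipt resets $c$ to $L$, and each of the process's own subsequent moves (internal or send) decrements $c$ by one, together with a prescribed number $M$ of forced internal moves between each receive and the following send. The atomic proposition $p$ labels exactly those states with $c > 0$. On a uni-ring $R_n$ under the resulting canonical cyclic token schedule, each process's $p$-label is on during a periodic window of fixed length determined by $L$ and $M$, and all $n$ such windows contain a common time instant only when the ring is small enough relative to $L$. Tuning $L$ and $M$ so that the resulting threshold lies between two ring sizes $n^\star$ and $n^\star + 1$ both exceeding $2k$ yields the desired separation: $P^{R_{n^\star}} \models \ltlF \forall x.\, p_x$ while $P^{R_{n^\star + 1}} \not\models \ltlF \forall x.\, p_x$.

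The main obstacle is realizing this threshold under the process-template constraints of Section~\ref{def:process_template}, most notably items (vi) and (vii): every state must have an outgoing transition, and every infinite path of $P$ must alternate $\tsnd$ and $\trcv$. The ``receive, $M$ internals, send'' discipline inherent in the template makes alternation automatic and keeps the counter non-negative by saturation at zero, so both items are easy to verify; what is more delicate is checking that the interplay between the counter's dynamics and the ring's receipt pattern $r_i(t) = \lfloor (t-(i-1))/n \rfloor + 1$ produces the sharp threshold as claimed, and in particular that the threshold can be straddled by two ring sizes both above $2k$. If this turns out to be infeasible for uni-rings because of their rotational symmetry, the fallback is to replace uni-rings by another topology family (such as cliques or stars, with explicit cutoffs $k+1$ from Theorem~\ref{thm:explicit_cutoffs}), or by an arbitrary family to which the general cutoff Theorem~\ref{thm:cutoff_kctld} applies; the semantic content of $\ltlF \forall x.\, p_x$, being a reachability property of the all-$p$ global state, then provides the needed separation once a compatible topology-template combination is found.
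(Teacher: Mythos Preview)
Your approach has two concrete problems.

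First, Theorem~\ref{thm:explicit_cutoffs} gives the cutoff $2k$ only for $\allkCTLstarmX$, i.e.\ purely universal index prefixes; it does not cover arbitrary prenex formulas. Remark~\ref{rem:kjb_error} exhibits a $3$-indexed formula with prefix $\exists\exists\forall$ that distinguishes the uni-rings of size $6$ and~$7$, directly refuting your claim that all rings of size $\geq 2k$ agree on every $k$-prenex $\CTLstarmX$ formula. One might hope to repair this via Theorem~\ref{thm:cutoff_kctld}, but the existence of a cutoff for the PMCP does not by itself say that any two large rings are mutually equivalent; that stronger statement would need its own argument.

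Second, and more seriously, your process template cannot produce a size threshold. A process's counter $c$ decrements only on that process's own transitions. With exactly $M$ forced internals between receive and send, after its $\tsnd$ the process sits with $c = L - M - 1$ until its next $\trcv$: the scheduler need not schedule it at all in the meantime. Hence whether $p$ holds for a non-token-holding process depends solely on the sign of $L - M - 1$, independently of the ring size~$n$. Either $L > M+1$ and $\ltlF\forall x.\,p_x$ holds for every~$n$, or $L \leq M+1$ and for $n \geq 2$ the formula fails because at most one process holds the token. Your ``canonical cyclic schedule'' is only one run among many, and $\ltlF$ in the \LTL\ reading carries an implicit universal path quantifier, so a single favourable schedule is not enough.

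The paper's argument avoids both difficulties by working directly with traces rather than systems. For $k=1$ it exhibits two concrete label sequences, $(\emptyset)(p_1 \wedge p_2)\emptyset^\omega$ and $(\emptyset)(p_1)(p_2)\emptyset^\omega$, whose projections onto each individual index are stuttering-equivalent (hence agree on every $1$-indexed $\LTLmX$ formula) yet which disagree on $\ltlF\forall x.\,p_x$. No cutoff theorem, no topology, and no process template are needed; the extension to general $k$ is by padding with additional indices whose projections are trivial.
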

{
\color{black!50}
\begin{proof}[Idea1]
\begin{itemize}
    \item Consider a simpler property: $\ltlF (p_1 \land p_2)$.
    \item Let $\phi(p_i)$ be an $\LTLmX$ formula over a proposition $p_i \in \Sigma=\{p_1,p_2\}$.
          We will show $\nexists{\phi}: 
                \forall{\pi \in \Sigma ^{\omega} }: 
                \pi \models \phi(p_1) \land \phi(p_2) 
                \Leftrightarrow
                \pi \models \ltlF (p_1 \land p_2)$.
          
    \item Let 
            $\pi_1 = \{p_1,p_2\} \{\}^\omega$, 
            $\pi_2 = \{\}\{p_1,p_2\} \{\}^\omega$. 
            Both $\pi_1, \pi_2$ are in $\Lang(\ltlF (p_1 \land p_2))$, 
            therefore they should be also accepted by $\Lang(\phi(p_1) \land \phi(p_2))$.
            Below we will show that this will imply that 
                      $\exists{\pi' \in \Lang(\phi(p_1)) \cap \Lang(\phi(p_2)) }$ 
                        s.t. $\pi' \not\in \Lang(\ltlF (p_1 \land p_2))$.                
      
    \item $\{p_1,p_2\} \{\}^\omega   \in   \Lang(\phi(p_1) \land \phi(p_2)) 
          \Rightarrow 
          \{ p_1, *_2 \} \{ *_2 \}^\omega   \subseteq   \Lang(\phi(p_1))$ \{{\em Swen, why?}\}, 
          where sign $*_2$ can be either removed or replaced by $p_2$.
    
    \item $\{\}\{p_1,p_2\} \{\}^\omega   \in   \Lang(\phi(p_1) \land \phi(p_2))
          \Rightarrow 
          \{ *_1 \} \{ *_1, p_2 \} \{ *_1 \}^\omega   \subseteq   \Lang(\phi(p_2))$. 
    
    \item $\pi' = \{p_1\} \{p_2\} \{ \}^\omega \in
          \{ p_1, *_2 \} \{ *_2 \}^\omega   \cap   \{ *_1 \} \{ *_1, p_2 \} \{ *_1 \}^\omega$ 
          but 
          $\pi' \not\in \Lang(\ltlF (p_1 \land p_2))$
    
    \item The same proof applies to $\phi(p_1) \lor \phi(p_2)$

    \item Impossibility to express $\ltlF (p_1 \land p_2)$ as $\phi(p_1) \land \phi(p_2)$ 
          implies impossibility to express $\ltlF \forall{x} p_x$ as $\forall{x} \phi(p_x)$
    
    \item todo{Now lift this to a general case..}
\end{itemize}
\end{proof}
 
\begin{proof}[Idea2]
The formula $\eventually \forall i. p_i$, where $p$ is an atom, is not expressible in prenex indexed \LTL. Suppose it was, and that the prenex is $k$. We illustrate how to get a contradiction in case $k = 1$. We exhibit two paths (over a system with $2$ components) that are equivalent for all $1$-indexed \LTL\ formulas, but one satisfies $\eventually \forall i. p_i$ and the other does not.  The first path $\pi$ (or rather the sequence of atoms it defines) is $(\emptyset)(p_1 \wedge p_2) \emptyset^\omega$ and satisfies $\eventually \forall i. p_i$. The second path $\pi'$ is $(\emptyset)(p_1)(p_2)\emptyset^\omega$ and does not satisfy the formula. But note that these two paths agree on all $1$-\LTL\ sentences since for every projection onto a co-ordinate of $\pi$ there is a projection onto a co-ordinate of $\pi'$ such that the projections are stuttering equivalent. Push this to $k > 1$.

\end{proof}
}

\begin{proposition}
The formula $\exists i. F p_i \wedge F q_i$ is not expressible in \AQLTL.
\end{proposition}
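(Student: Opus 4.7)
The plan is to prove non-expressibility by a standard separation: I will exhibit two infinite paths $\pi, \pi'$ on which the formula $\exists i.\ \ltlF p_i \wedge \ltlF q_i$ has different truth values, and then show that no \AQLTL\ formula can distinguish $\pi$ from $\pi'$.

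For the paths, take $\pi$ to be the state-sequence with labelings $\{p_1\},\{q_1\},\emptyset,\emptyset,\ldots$ and $\pi'$ to be $\{p_1\},\{q_2\},\emptyset,\emptyset,\ldots$, both over a two-process topology. On $\pi$, process $1$ witnesses both $\ltlF p_1$ and $\ltlF q_1$, so $\pi \models \exists i.\ \ltlF p_i \wedge \ltlF q_i$. On $\pi'$, process $1$ satisfies $\ltlF p_1$ but not $\ltlF q_1$, while process $2$ satisfies $\ltlF q_2$ but not $\ltlF p_2$, so $\pi' \not\models \exists i.\ \ltlF p_i \wedge \ltlF q_i$. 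These paths can be realized as runs of a suitable direction-unaware system $P^G$ on a two-vertex topology by choosing $P$ with enough nondeterminism in its labeling; this step is routine and I would defer it to an appendix.

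The heart of the proof is showing $\pi \equiv_{\AQLTL} \pi'$. The defining feature of \AQLTL\ --- in contrast to prenex indexed \LTL\ --- is that every index quantifier lies \emph{inside} the scope of every temporal operator that surrounds it, so each quantifier binds ``locally'' at a single time point. Equivalently, I would argue that every \AQLTL\ formula can be normalized to an \LTL\ formula whose atomic propositions are closed first-order sentences over the labeling signature, evaluated at each position on the structure whose universe is the set of processes and whose unary relations are given by the local labels. The argument then splits into: (i) an inductive translation of \AQLTL\ into \LTL-with-FO-atoms; and (ii) the observation that $\pi$ and $\pi'$ induce isomorphic FO-structures at every time point (identical at times $0$ and $\geq 2$; at time $1$ both structures have two processes with exactly one $q$-labeled and none $p$-labeled), hence satisfy the same FO-sentences pointwise, hence satisfy the same \LTL-with-FO-atom formulas, hence the same \AQLTL\ formulas.

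The main obstacle I anticipate is pinning down the precise syntax and semantics of \AQLTL, which the excerpt alludes to but does not fully formalize; once ``index quantifiers are strictly inside temporal operators'' is made precise, step (i) is a standard syntactic induction on formulas and step (ii) is immediate from the isomorphism that swaps processes $1 \leftrightarrow 2$ at time $1$ and is the identity elsewhere. A secondary concern --- whether one should work with raw state-sequences or only with paths arising from some $P^G$ --- is cosmetic, since both $\pi$ and $\pi'$ can be realized as distinct paths of a single nondeterministic $P^G$ and the separation argument is unaffected.
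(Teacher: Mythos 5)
Your proof is correct and follows exactly the route the paper intends: the paper's own ``proof'' of this proposition consists solely of the one-line remark that \AQLTL\ cannot talk about the same $i$ across different time points (left explicitly unfinished), and your pair of pointwise-isomorphic labelings $\{p_1\}\{q_1\}\emptyset^\omega$ versus $\{p_1\}\{q_2\}\emptyset^\omega$, together with the reduction of \AQLTL\ to \LTL\ over closed per-position sentences, is precisely the standard way to make that remark rigorous. The only detail left open on both sides is the routine realization of the two labelings as runs of a token-passing system, which the paper does not address either.
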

{\color{black!50}
\begin{proof}
The idea is that \AQLTL\ can't talk about the same $i$.
todo{finish}
\end{proof}
}
}
\fi

\subsection{Parameterized Model Checking Problem, Cutoffs, Decidability}
\label{sec:definition-pmcp}\label{sec:definition-cutoff}
A {\em parameterized topology} $\pgraph$ is a countable set of topologies. E.g., the set of uni-directional rings with all possible initial vertices is a parameterized topology.\sr{ends abruptly}

\smallskip
\noindent
{\bf $\PMCP_{\pgraph}(-,-)$.}
The {\em parameterized model checking problem (PMCP)} for parameterized topology $\pgraph$, processes from $\Pproctemp$, and parameterized specifications from $\Pspec$, written $\PMCP_{\pgraph}(\Pproctemp,\Pspec)$, is the set of pairs $(\pspec,P) \in \Pspec \times \Pproctemp$ such that for all $G \in \pgraph$, $P^G \models \pspec$. 
A {\em solution} to $\PMCP_{\pgraph}(\Pproctemp,\Pspec)$ is an algorithm that, given a formula $\pspec \in \Pspec$ and a process template $P \in \Pproctemp$ as input, outputs 'Yes' if for all $G \in \pgraph$, $P^G \models \pspec$, and 'No' otherwise.

%
%

\medskip
\noindent
{\bf Cutoff.}
A {\em cutoff} for $\PMCP_\pgraph(\Pproctemp,\Pspec)$ is a natural number $c$ such that for every $P \in \Pproctemp$ and $\pspec \in \Pspec$, the following are equivalent:
\begin{itemize}
\item $P^G \models \pspec$ for all $G \in \pgraph$ with $|V_G| \leq c$;
\item $P^G \models \pspec$ for all $G \in \pgraph$.
\end{itemize}

Thus $\PMCP_\pgraph(\Pproctemp,\Pspec)$ {\em does not have a cutoff} iff for every $c \in \Nat$ there exists $P \in \Pproctemp$ and $\pspec \in \Pspec$ such that $P^G \models \pspec$ for all $G \in \pgraph$ with $|V_G| \leq c$, and there exists $G \in \pgraph$ such that $P^G \not \models \pspec$.\sr{compare to EN + CTTV cutoff}

\begin{observation}
\label{obs:cutoffs_implies_decidability}
If $\PMCP_\pgraph(\Pproctemp,\Pspec)$ has a cutoff, then
$\PMCP_\pgraph(\Pproctemp,\Pspec)$ is decidable

\end{observation}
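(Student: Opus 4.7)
The plan is to exhibit, given any cutoff $c$, an algorithm that decides $\PMCP_\pgraph(\Pproctemp,\Pspec)$. The key observation is that $c$ need not be computable from $\pgraph$, $\Pproctemp$, or $\Pspec$: mere existence is enough, since then an algorithm can have $c$ hard-coded as a constant. This is a standard trick used whenever one argues decidability from the existence (rather than computability) of a cutoff.

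On input $(\varphi, P)$ with $\varphi \in \Pspec$ and $P \in \Pproctemp$, the algorithm will enumerate all topologies $G \in \pgraph$ with $|V_G| \le c$, build the finite LTS $P^G$, and model-check $P^G \models \varphi$. If all instances satisfy $\varphi$ it answers 'Yes', otherwise 'No'. By the cutoff property these two cases coincide with the two sides of the PMCP, so correctness is immediate; I only need to argue that each of the three steps (enumeration, construction, model checking) is effective.

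For enumeration: there are only finitely many topologies on at most $c$ vertices up to the representation fixed in Section~\ref{dfn:paramsys} (a finite vertex set $[k]$ with $k \le c$, a subset $E \subseteq V \times V$, and an initial vertex $x \in V$), and membership in $\pgraph$ restricted to these sizes is decidable under the mild assumption that $\pgraph$ is a recursive family of topologies (which is the case for all parameterized topologies of interest in this paper, e.g.\ rings, cliques, and stars). For construction: since $P$ is a finite process template and $|V_G| \le c$, the global state space $Q^V$ is finite and the transition relation $\Delta$ can be computed directly from the definition in Section~\ref{dfn:paramsys}. For model checking: formulas $\varphi \in \Pspec$ are (prenex) indexed-\CTLstar\ sentences, and on a fixed finite $P^G$ with $|V_G|$ known, each index quantifier ranges over a finite set, so $\varphi$ unfolds into a standard \CTLstar\ formula over the indexed propositions $\APsys$ that label the finite LTS $P^G$; decidability then follows from the classical decidability of \CTLstar\ model checking over finite Kripke structures (see e.g.~\cite{PrinciplesMC}).

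There is no real obstacle; the only conceptual point worth emphasising in the proof is the non-uniformity remark above, namely that the algorithm exists even when no procedure is known that extracts $c$ from the inputs. This is exactly the gap that the later undecidability result (Theorem~\ref{thm:undecprenex}) exploits: cutoffs for each fixed $k,d$ (Theorem~\ref{thm:cutoff_kctld}) yield decidability of each bounded fragment, but not necessarily of the union $\piCTLstarmX$, because the cutoff function $(k,d) \mapsto c(k,d)$ need not be computable.
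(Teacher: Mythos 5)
Your proposal is correct and follows essentially the same route as the paper: hard-code the cutoff, model-check the finitely many small instances, and invoke the cutoff property for correctness. One small point: your enumeration step adds an assumption not present in the statement, namely that membership in $\pgraph$ is decidable for topologies of size at most $c$. This is unnecessary — the same non-uniformity trick you apply to $c$ applies equally to the finite list $G_1,\dots,G_n$ of topologies in $\pgraph$ with $|V_{G_i}|\le c$: since there are only finitely many, the algorithm can have them hard-coded as constants rather than computing them, which is exactly what the paper does, so the observation holds for arbitrary (not necessarily recursive) parameterized topologies. The rest of your argument (finiteness of $P^G$ for finite $P$ and bounded $G$, unfolding index quantifiers over the known finite vertex set, and classical \CTLstar\ model checking) is fine, as is your closing remark on non-uniformity, which matches the paper's own ``Note About Decidability.''
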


{Indeed: if $c$ is a cutoff, let $G_1, \dots, G_n$ be all topologies $G$ in $\pgraph$ such that $|V_G| \leq c$.
The algorithm that solves PMCP takes $P,\varphi$ as input and checks whether or not $P^{G_i} \models \pspec$ for all $1 \leq i \leq n$.}

\medskip
\noindent
{\bf Note About Decidability.} \label{sec:nonuniform}
The following statements are not, a priori, equivalent (for given parameterized topology $\pgraph$ and process templates $\Pproctemp$):
\begin{itemize}
\item[-] For every $k \in \Nat$, $\PMCP_\pgraph(\Pproctemp,\kTL)$ is decidable.
\item[-] $\PMCP_\pgraph(\Pproctemp,\piTL)$ is decidable.
\end{itemize}

The first item says that for every $k$ there {\em exists} an algorithm
$A_k$ that solves the PMCP for $k$-indexed \TL. This does not imply
the second item, which says that there exists an algorithm that
solves the PMCP for $\cup_{k \in \Nat} \kTL$. If the function $k
\mapsto A_k$ is also {\em computable} (e.g.,
Theorem~\ref{thm:explicit_cutoffs}) then indeed the second item
follows: given $P,\varphi$, extract the size $k$ of the prenex block of $\varphi$, {\em compute} a description of $A_k$, and run $A_k$ on $P,\varphi$.

For instance, the result of Clarke et al. --- that there are cutoffs for $k$-index \LTLmX\ and arbitrary topologies --- does not imply that the PMCP for $\piLTLmX$ and arbitrary topologies is decidable. Aware of this fact, the authors state (after Theorem $4$) ``Note that the theorem does not provide us with an effective means to find the reduction [i.e. algorithm]...". 

In fact, we prove (Theorem~\ref{thm:undecprenex}) that there is some parameterized topology such that PMCP is undecidable for prenex indexed \LTLmX.
%


\medskip
\noindent
{\bf Existing Results.} We restate the known results using our terminology.

\noindent
A {\em uni-directional ring} $G = (V,E,x)$ is a topology with $V = [n]$ for some $n \in \Nat$, there are edges $(i,i+1)$   for $1 \leq i \leq n$ (arithmetic is modulo $n$), and $x \in V$.
Let $\pring$ be the parameterized topology consisting of all uni-directional rings.

\begin{theorem}[Implict in \cite{EN95}]
For every $k \in \Nat$, there is a cutoff for the problem
$
PMCP_\pring(\Psimptok,\kCTLstarmX).
$
\footnote{The paper explicitly contains the result that $4$ is a cutoff for $\twoallCTLstarmX$ on rings. However the proof ideas apply to get the stated theorem.}
\end{theorem}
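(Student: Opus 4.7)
The plan is to prove a cutoff of roughly $2k$ by a simulation argument in the style of Emerson and Namjoshi. Fix $k \in \Nat$, a process template $P \in \Psimptok$, and a formula $\phi = (Q_1 x_1) \dots (Q_k x_k) \varphi \in \kCTLstarmX$. For any ring $G \in \pring$ of size $n > 2k$ and any assignment $\sigma$ of the $k$ quantified variables to vertices of $G$, the quantified positions split the ring into $k$ consecutive arcs of unquantified ``filler'' vertices. The claim is that $G$ can be replaced by a smaller ring $G'$ in which each non-empty arc is contracted to a single filler vertex, preserving the truth value of the matrix $\varphi$ under the corresponding assignment. Ranging over all $\sigma$ (there are only finitely many shapes modulo rotation) then yields a cutoff of $2k$.

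The key lemma is a \emph{divergence-sensitive stuttering bisimulation} between $P^G$ and $P^{G'}$ that matches the $k$ quantified processes identically. Because $P \in \Psimptok$ is direction-unaware with a unique outgoing and unique incoming direction, a chain of filler processes merely relays the token along its arc; internal and token-passing transitions among fillers do not affect any atom of the form $p_{x_j}$ and thus appear as stuttering at the quantified positions. Since $\varphi$ is X-free it is insensitive to stuttering on linear traces, but because $\varphi$ may contain arbitrarily nested path quantifiers we need a tree-level (divergence-sensitive) variant of stuttering bisimulation to preserve the branching content.

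The main obstacle is verifying divergence sensitivity --- that no filler arc can ``stall'' the token indefinitely in $P^G$ but not in $P^{G'}$, and that every branching option available in one system is available in the other. The progress condition (item vii in the definition of a process template), which forces $\tsnd$ and $\trcv$ to alternate along every infinite action-labeled path, is crucial here: it guarantees that the token must continually move, ruling out spurious divergence introduced by arc shortening. Alternatively, the theorem is an immediate instance of our main positive result (Theorem~\ref{thm:cutoff_kctld}) specialised to $\pring$: uni-directional rings with $k$ distinguished vertices admit a $d$-contraction of size at most $2k$ (in fact independent of $d$, since in a uni-ring the branching choices available from any vertex are trivial), and the composition theorem (Theorem~\ref{theorem:reduction}) then lifts $\kCTLstarmX$-equivalence from topologies to their token-passing systems.
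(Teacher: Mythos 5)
Your simulation/contraction machinery is sound and is essentially the route the paper takes (a projection of $P^G$ onto the quantified positions, a composition step as in Theorem~\ref{theorem:reduction}, and the observation that the contraction of a uni-ring relative to $k$ marked vertices is a ring of size at most $2k$, independent of the path-quantifier nesting depth). The gap is in the final step, ``ranging over all $\sigma$ \dots{} yields a cutoff of $2k$.'' That inference is valid only when the quantifier prefix is alternation-free: for $\{\forall\}^k$ (or $\{\exists\}^k$) prefixes it suffices that the \emph{set} of contraction types realizable in $G$ stabilizes once $|V_G|\ge 2k$, which it does. But $\kCTLstarmX$ allows arbitrary prefixes, and for an alternating prefix the truth value depends not only on which $k$-shapes are realizable but on how partial assignments extend to full ones --- that is, on the Boolean formula $B_G$ from the proof of Theorem~\ref{thm:compfinicutoff}, not merely on its set of atoms. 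Remark~\ref{rem:kjb_error} gives a concrete counterexample: with $k=3$, the formula $\exists i\,\exists j\,\forall k.\ adj(k,i)\lor adj(k,j)$ (with $adj$ built from $\ltlU$) holds on the uni-ring of size $6=2k$ but fails on size $7$, so $2k$ is not a cutoff even for $\kLTLmX\subseteq\kCTLstarmX$.

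The existence claim itself survives, but it needs the extra argument the paper supplies: since there are finitely many contraction types, there are at most $2^{2^n}$ pairwise inequivalent Boolean formulas $B_G$, hence finitely many rings realize all of them, and the maximum of their sizes is a cutoff --- in general larger than $2k$ and not obtained by your arc-contraction count. Your fallback route via Theorem~\ref{thm:cutoff_kctld} has a second, smaller wrinkle: that theorem fixes the nesting depth $d$, whereas the statement concerns full $\kCTLstarmX$. You patch this correctly by noting that on uni-rings the $d$-contractions coincide for all $d$ (and Remark~\ref{rem:reduction} extends the reduction to full \CTLstarmX), but you must then still run the Boolean-formula step rather than conclude $2k$. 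The paper's explicit $2k$ bound (Theorem~\ref{thm:explicit_cutoffs}) is stated only for \allkCTLstarmX, precisely for this reason.
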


Although Clarke et al.~\cite{CTTV04} do not explicitly state the
following theorem, it follows from their proof technique, which we
generalize in Section~\ref{sec:method}. 
\begin{theorem}[Implicit in \cite{CTTV04}]\label{thm:cttv_reworded}
For every parameterized topology $\pgraph$, and every $k \in \Nat$, the problem 
$\PMCP_\pgraph(\Psimptok,\kLTLmX)$ has a cutoff. 
\footnote{Khalimov et al.~\cite[Corollary 2]{KJB13} state that
$2k$ is a cutoff if $\pgraph$ is taken to be $\pring$. However this is an error:
$2k$ is a cutoff only for formulas with no quantifier alternations.
See Remark~\ref{rem:kjb_error} in Section~\ref{sec:explicit_cutoffs}.}.
\end{theorem}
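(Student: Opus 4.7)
The plan is to follow the abstraction-based argument implicit in Clarke et al., reducing the truth of a $\kLTLmX$ formula on $P^G$ to the truth on a bounded-size ``contraction'' of $G$ that retains only the $k$ quantified vertices together with enough extra structure to record how the token can travel between them through non-quantified vertices.

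First I would prove the Contraction Lemma: for each fixed $k$, one defines a map sending any topology $G$ with designated $k$-tuple $(v_1,\ldots,v_k) \in V_G^k$ to a topology $H = C(G, v_1,\ldots,v_k)$ of size bounded by some $c_k = O(k)$, together with an injection $\iota$ of $\{v_1,\ldots,v_k\}$ into $V_H$; the contraction keeps the initial vertex, the $k$ designated vertices, and a small amount of extra structure needed to record the reachability pattern from each $v_i$ to each $v_j$ via paths through non-designated vertices. I would then show that for every $P \in \Psimptok$ and every $\LTLmX$ formula $\psi$ over atoms $\APproc \times \{x_1,\ldots,x_k\}$:
\[ (P^G, v_1, \ldots, v_k) \models \psi \iff (P^H, \iota(v_1), \ldots, \iota(v_k)) \models \psi. \]
The proof constructs a bisimulation up to stuttering between $P^G$ and $P^H$ on the labels of the designated vertices, exploiting stuttering-insensitivity of $\LTLmX$: internal transitions of non-designated processes and the invisible token travel through them can be freely inserted or collapsed without affecting the truth of $\psi$.

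Next I would exploit finiteness to construct the cutoff. Up to isomorphism there are only finitely many contraction topologies of size at most $c_k$, so the set $\mathcal{C}_k$ of contraction types is finite. The truth of a prenex formula $Q_1 x_1 \cdots Q_k x_k \psi$ on $P^G$ is the value of the depth-$k$ quantifier game over $V_G^k$ whose leaves are labeled by contraction types from $\mathcal{C}_k$; two topologies whose induced labeled game trees are equivalent (in the standard Ehrenfeucht--Fra\"iss\'e sense) agree on every $\kLTLmX$ formula, and a straightforward induction on $k$ shows there are only finitely many such game-equivalence classes. For each class realized by some $G \in \pgraph$, fix a representative $G^{cls} \in \pgraph$, and set $c := \max_{cls} |V_{G^{cls}}|$, which is finite. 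This $c$ is a cutoff: if $\varphi$ holds on $P^G$ for every $G \in \pgraph$ with $|V_G| \leq c$, then it holds on every representative, hence on every $G \in \pgraph$ by game-equivalence.

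The hard step is the Contraction Lemma itself: formalizing the contraction while correctly handling direct edges versus paths through non-designated vertices, the location of the initial vertex, and the strict $\tsnd/\trcv$ alternation of condition~(vii); and then proving the stuttering-bisimulation. Note that the resulting cutoff $c$ is not computable in general, because determining which game-equivalence classes are realized from a given $\pgraph$ need not be decidable, consistent with the note in Section~\ref{sec:nonuniform} and with Theorem~\ref{thm:undecprenex}.
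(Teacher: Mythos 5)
Your proposal is correct and takes essentially the same route as the paper: your Contraction Lemma is precisely the conjunction of the paper's \textsc{reduction} property (Theorem~\ref{theorem:reduction}) and \textsc{finiteness} property (Theorem~\ref{thm:finiteness}, whose $1$-contraction $\dcon{G}{\bar{g}}$ plays the role of your connection topology), and your game-equivalence classes with representatives chosen from $\pgraph$ are the Boolean-formula argument in the proof of Theorem~\ref{thm:compfinicutoff}. The only quibble is the claimed contraction size $O(k)$, which should just be ``bounded by a function of $k$''; this is immaterial since the cutoff is extracted from representatives inside $\pgraph$ rather than from the contractions themselves.
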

\begin{theorem}[{\cite[Corollary $3$]{CTTV04}}]
\label{thm:CTTVnocutoff}
There exists a parameterized topology $\pgraph$ and process $P \in \Psimptok$ such that the problem $\PMCP_\pgraph(\{P\},\twoallCTLmX)$ does not have a cutoff.\sr{differing notation, EE}
\end{theorem}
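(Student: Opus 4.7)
By the characterization of ``no cutoff'' in Section~\ref{sec:definition-cutoff}, I must produce a parameterized topology $\pgraph$, a process $P \in \Psimptok$, and for every $c \in \Nat$ a witness formula $\varphi_c \in \twoallCTLmX$ together with a topology $G_c \in \pgraph$ of size $> c$ such that $P^G \models \varphi_c$ for every $G \in \pgraph$ with $|V_G| \le c$, while $P^{G_c} \not\models \varphi_c$. The key idea, following Clarke et al.~\cite{CTTV04}, is to use a family $\{\varphi_c\}$ whose nesting depth of path quantifiers grows with $c$, so that $\varphi_c$ captures a structural feature of the reachability graph of $P^G$ that is detectable only when $|V_G|$ is large enough.

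I would fix $P$ to be the minimal template in $\Psimptok$: two local states $q_0,q_1$ with transitions $q_0 \stackrel{\trcv}{\to} q_1$ and $q_1 \stackrel{\tsnd}{\to} q_0$ and no enabled internal transitions, so every transition in $P^G$ is a token-passing; atom $t$ labels $q_1$, so $t_i$ denotes ``process $i$ has the token''. The reachable global states of $P^G$ are then in bijection with the vertices reachable from the initial vertex of $G$, and the transition structure of $P^G$ mirrors the reachability structure of $G$. For $\pgraph$ I would take a rich family (for instance all topologies, or a specific family such as directed ``chain-with-back-edges'' graphs $G_n$ on $n$ vertices) designed so that as $n$ grows, the reachability graph of $P^{G_n}$ exhibits arbitrarily long alternating sequences of the form $t_i, \neg t_i, t_i, \dots$ or arbitrarily deep branching.

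The heart of the construction is the inductive definition of $\varphi_c$ using only the two index variables $i,j$. Take a base formula $\varphi_0$ of path-quantifier rank $0$ that merely references $t_i, t_j$, and define $\varphi_{c+1}$ by wrapping one more layer of nested path quantifier around a formula depending on $\varphi_c$ and $\theta(i,j)$, schematically $\varphi_{c+1} = \exists i.\, \exists j.\, \ctlE\,(\theta_0(i,j)\,\ltlU\,(\theta_1(i,j)\wedge \varphi_c))$ or an analogous pattern alternating $\ctlE$ and $\ctlA$. The labels $\theta_0,\theta_1$ are chosen so that each added nesting level forces one more distinct ``token-configuration'' of $P^G$ to be witnessed along a path, and so the whole formula can be satisfied only if $P^G$ admits a chain of $c$ such distinct configurations, i.e.\ only if $|V_G|$ exceeds some strictly increasing $f(c)$. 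A straightforward induction on $c$ then verifies that $\varphi_c$ indeed separates topologies of size $\le c$ from some $G_c$ of size $>c$.

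The main obstacle is the limitation to two index quantifiers: with three or more, one could just quantify over witnesses of distinct vertices and directly encode ``there are more than $c$ processes'', but here we must rely purely on the nesting depth of path quantifiers to simulate the additional structural quantification. The solution, as in Clarke et al.~\cite{CTTV04}, is to let each additional layer of $\ctlE\ltlF$ or $\ctlE(\cdot\ \ltlU\ \cdot)$ witness a new intermediate token-holder, and to design $\pgraph$ (and the two distinguished processes $i,j$) so that every such witness must correspond to a genuinely new vertex; this is precisely the point of the remark in Question~$1$ that the original proof uses unbounded nesting depth of path quantifiers. With the construction in place, for each $c$ the topology $G_c$ together with $\varphi_c$ refutes $c$ as a cutoff.
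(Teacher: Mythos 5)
Your overall strategy --- a family of two-indexed formulas whose path-quantifier nesting depth grows with $c$ and which detect the size of the topology --- is indeed the shape of the argument in \cite{CTTV04}, and matches how the paper records it: the cited proof constructs $\pgraph$, $P$ and formulas $\varphi_c$ with $P^G\models\varphi_c$ iff $|V_G|\le c$, where $\varphi_c$ has path-quantifier nesting depth exactly $c$. But your reconstruction has a genuine gap, in two parts. First, the polarity is inverted: your inductive formulas are ``satisfiable only if $|V_G|$ exceeds $f(c)$'', i.e.\ they \emph{fail} on every small topology, whereas the characterization of ``no cutoff'' in Section~\ref{sec:definition-cutoff} requires a formula that \emph{holds} on all $G\in\pgraph$ with $|V_G|\le c$ and fails on some larger $G$. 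Negating repairs the polarity but turns the prefix $\exists i\,\exists j$ into $\forall i\,\forall j$, taking you out of the fragment $\twoallCTLmX$ named in the statement; what is actually needed is a direct $\exists i\,\exists j$ certificate of \emph{smallness}.

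Second, and more fundamentally, the counting mechanism you describe cannot work. A purely existential--sequential nesting such as $\ctlE\,(\theta_0\,\ltlU\,(\theta_1\wedge\ctlE\,(\theta_0'\,\ltlU\,\cdots)))$ with propositional $\theta$'s collapses semantically to a single outer $\ctlE$ (splice the witnessing paths), i.e.\ to path-quantifier depth $1$, and ``witnessing a chain of $c$ distinct token-configurations along one path'' is a linear-time property. By the paper's own positive results (Theorem~\ref{thm:cttv_reworded}, and Theorem~\ref{thm:cutoff_kctld} with $d=1$), every $2$-indexed $\LTLmX$ or depth-$1$ $\CTLstarmX$ family has a cutoff on \emph{every} parameterized topology, so no such family can separate topologies of unbounded size. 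Intuitively this is forced: with only the two tracked processes visible and no $\textsf{X}$, a single run contributes only a destuttered word over a constant-size alphabet, from which unboundedly many intermediate vertices cannot be counted. The separation must therefore exploit genuine alternation of path quantifiers --- branching properties of the token's movement whose ``depth'' grows with $|V_G|$ --- which you mention only as an aside and never use; the ``straightforward induction'' would not go through for the schema as written.
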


The proof of this theorem defines $\pgraph$, process $P$, and for
every $c \in \Nat$ a formula $\varphi_c$, such that if  $G \in
\pgraph$ then $P^G \models \varphi_c$  if and only if $|V_G| \leq
c$. The formula $\varphi_c$ is in $2$-indexed \CTLmX\ and has nesting
depth of path quantifiers equal to $c$.\todo{inconsistent with remark
  in intro}

\sr{Ends abruptly... Also, shall we state the direction-aware result of EK?}

\ifblack

{\color{black!20}

Say that formula $\pspec$ has the {\em suffix property} (wrt $\Pproctemp$ and $\pgraph$) if for all processes $P \in \Pproctemp$ there exists $c$ such the following two properties hold:
\begin{enumerate}
\item if $P^c \models \pspec$ then for all $n \geq c$ $P^n \models \pspec$, and
\item if $P^c \models \neg \pspec$ then for all $n \geq c$ $P^n  \models \neg \pspec$.
\end{enumerate}
Any such $c$ is called a {\em cutoff} for $\pspec$.

Say that a set $\Pspec$ of formulas has {\em effective cutoff} (wrt $\Pproctemp$ and $\pgraph$) if there is a computable function that given $\pspec \in \Pspec$ returns a cutoff for $\pspec$.

An important instance is when the function is constant: a set $\Pspec$ of formulas has {\em constant cutoff} (wrt $\Pproctemp$ and $\pgraph$) if there exists integer $c$ that is a cutoff for every $\pspec \in \Pspec$.

If $\Pspec$ has effective cutoff then PMCP is decidable: given $\pspec \in \Pspec$, simply check whether or not for all $n \leq c$, $P^n \models \pspec$.

For instance,  following the proof ideas in EN95 we show that for every $k$, the set $\kCTLstarmX$ of $\piCTLstarmX$ formulas with at most $k$ quantifiers (so called $k$-indexed properties), over rings, has a cutoff of $2k$, see \cite{}.
}

\fi

\section{Method for Proving Existence of Cutoffs} \label{sec:method}
\sr{this section reworked according to swen's observation that initial states can be fixed. proofread}
We give a method for proving cutoffs 
for direction-\emph{un}aware TPSs that will be used to prove
Theorem~\ref{thm:cutoff_kctld}.\sj{is the reference to the theorem
  useful? the reader does not know what it is about.} 

\todo{say something about limitations of method...?}

In a $k$-indexed \TL\ formula $Q_1 x_1 \dots Q_k x_k.\ \varphi$, every valuation of the variables $x_1, \dots, x_k$ designates $k$ nodes of the underlying topology $G$, say $\bar{g} = g_1, \dots, g_k$. The formula $\varphi$ can only talk about (the processes in) $\bar{g}$. In order to prove that the PMCP has a cutoff, it is sufficient 
(as the proof of Theorem~\ref{thm:compfinicutoff} will demonstrate) to
find conditions on two topologies $G,G'$ and $\bar{g},\bar{g}'$ that
allow one to conclude that $P^G$ and $P^{G'}$ are indistinguishable
with respect to $\varphi$.\sj{with respect to every $\varphi$ in the
  logic, rather?}

We define two abstractions for a given TPS $P^G$. The first
abstraction simulates $P^G$, keeping track only of the local states of
processes indexed by $\bar{g}$. We call it the \emph{projection} of
$P^G$ onto $\bar{g}$.\footnote{Emerson and Namjoshi~\cite[Section $2.1$]{EN95}
define the related notion ``LTS projection".} The second abstraction
only simulates the movement of the token in $G$, restricted to
$\bar{g}$. We call it the \emph{graph LTS} of $G$ and $\bar{g}$.

{\em Notation.} Let $\bar{g}$ denote a tuple $(g_1,\dots,g_k)$ of {\em distinct} elements of $V_G$, and $\bar{g}'$ a $k$-tuple of distinct elements of $V_{G'}$. Write $v \in \bar{g}$ if $v = g_i$ for some $i$.

\medskip
\noindent
{\bf The projection $\abstraction{P^G}{\bar{g}}$.}
Informally, the {\em projection} of $P^G$ onto a tuple of process
indices $\bar{g}$ is the LTS $P^G$ and a new labeling\sj{``with a new
  labeling''? or ``with a modified labeling''?} that, for every $g_i \in \bar{g}$, replaces the indexed atom $p_{g_i}$ by the atom $p@i$; all other atoms are removed. Thus $p@i$ means that the atom $p \in \APproc$ holds in the process with index $g_i$. In other words, process indices are replaced by their {\em positions} in $\bar{g}$.


Formally, fix process $P$, topology $G$, and $k$-tuple $\bar{g}$ over $V_G$. Define the {\em projection of $P^G=(S,S_0,\ActionsInt \cup \{\tok\},\Delta,\Lambda)$ onto $\bar{g}$}, written $\abstraction{P^G}{\bar{g}}$
as the LTS $(S,S_0,\ActionsInt \cup \{\tok\},\Delta,L)$
over atomic propositions $\{p@i : p \in \APproc, i \in [k]\}$, where for all $s \in S$ the labeling $L(s)$ is defined as follows: $L(s) := \{p@i : p_{g_i} \in \Lambda(s), i \in [k]\}$.

\medskip
\noindent
{\bf The graph LTS $\tokengraph{G}{\bar{g}}$.}
Informally, $\tokengraph{G}{\bar{g}}$ is an LTS where states are nodes of the graph $G$, and transitions are edges of $G$. The restriction to $\bar{g}$ is modeled by labeling a state with the position of the corresponding node in $\bar{g}$.
\sr{add footnote citing nominals.}

Let $G = (V,E,x)$ be a topology, and let $\bar{g}$ be a $k$-tuple over $V_G$.
Define the {\em graph LTS $\tokengraph{G}{g}$} as the LTS
$(Q,Q_0,\Sigma,\Delta,\Lambda)$ over atomic propositions
$\{1,\dots,k\}$, with state set $Q :=V$, initial state set $Q_0 :=
\{x\}$, action set $\Sigma = \{{\sf a}\}$, transition relation with
$(v,{\sf a},w) \in \Delta$ iff $(v,w) \in E$, and labeling $\Lambda(v) := \{i\}$ if $v = g_i$ for some $1 \leq i \leq k$, and $\emptyset$ otherwise.\footnote{Atomic propositions that have to be true in exactly one state of a structure are called nominals in \cite{BLMV08,SaVa01}.}

\blackout{
$\lambda(v) := 
\begin{cases}

\{i\} 		& \mbox{if } v = g_i \mbox{ for some } 1 \leq i \leq k\\
\emptyset & \mbox{otherwise.}

\end{cases}
$
}
%
%




\medskip
\noindent
Fix a non-indexed temporal logic \TL, such as $\CTLstarmX$. We now define what it means for \TL\ to have the reduction property and the finiteness property. Informally, 
the reduction property says that if $G$ and $G'$ have the same connectivity (with respect to \TL\ and only viewing $k$-tuples $\bar{g},\bar{g}'$) then $P^G$ and $P^{G'}$ are indistinguishable (with respect to \TL-formulas over process indices in $\bar{g},\bar{g}'$).

\noindent
{\bf {\scshape reduction} property for \TL}\footnote{Properties of this type are sometimes named {\em composition} instead of {\em reduction}, see for instance \cite{rabi07}.}

\smallskip

\begin{tabular}{l}
For every $k \in \Nat$,  process $P \in \Psimptok$, topologies $G,G'$,
$k$-tuples ${\bar{g}},{\bar{g}'}$,\\
if $\tokengraph{G}{\bar{g}} \equiv_{\TL} \tokengraph{G'}{\bar{g}'}$ then 
$\abstraction{P^{G}}{\bar{g}} \equiv_{\TL} \abstraction{P^{G'}}{\bar{g}'}.$
\end{tabular}

\medskip
\noindent
{\bf {\scshape finiteness}  property for \TL}

\smallskip
\begin{tabular}{l}
For every $k \in \Nat$, there are finitely many equivalence classes
$[\tokengraph{G}{\bar{g}}]_{\equiv_{\TL}}$\\ where $G$ is an arbitrary topology, and ${\bar{g}}$ is a $k$-tuple over $V_G$.
\end{tabular}

\begin{theorem}[\textsc{reduction} \& \textsc{finiteness} $\implies$ Cutoffs for $\kTL$]
\label{thm:compfinicutoff}
If $\TL$ satisfies the \textsc{reduction} and the \textsc{finiteness} property, then
for every $k,\pgraph$,
$
\PMCP_{\pgraph}(\Psimptok,\kTL)
$
has a cutoff.
\end{theorem}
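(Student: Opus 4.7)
The plan is to construct a hierarchy of finitely-indexed equivalence relations on pairs $(G,\bar{g}_\ell)$ where $\bar{g}_\ell$ is an $\ell$-tuple over $V_G$, for $\ell = k, k-1, \dots, 0$. The coarsest of these, $\equiv_0$, will partition all topologies into finitely many classes, and will refine the relation ``$P^G$ satisfies the same $k$-indexed $\TL$ formulas as $P^{G'}$''. Picking one representative per $\equiv_0$-class intersecting $\pgraph$, and taking the maximum of $|V_{G^\ast}|$ over these representatives, will give the cutoff.

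First, I would observe that if $\pspec = Q_1 x_1 \dots Q_k x_k.\, \varphi$ with $\varphi \in \TL$, then $P^G \models \pspec$ is computed by playing the standard quantifier game on $V_G$: at level $i$ one picks (universally or existentially) a vertex to instantiate $x_i$, and at a leaf with assignment $\bar{g}$ one checks $\varphi[\bar{x} \mapsto \bar{g}]$ in $P^G$. Because $\varphi$ uses only atoms of the form $p_{x_i}$, its truth at a leaf is equivalent to evaluating $\varphi'$ (obtained by replacing $p_{x_i}$ with $p@i$) in $\abstraction{P^G}{\bar{g}}$. Thus, by the \textsc{reduction} property, the leaf value depends only on the $\equiv_\TL$-class of $\tokengraph{G}{\bar{g}}$.

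Second, I would define $\equiv_\ell$ by downward recursion: $(G,\bar{g}) \equiv_k (G',\bar{g}')$ iff $\tokengraph{G}{\bar{g}} \equiv_\TL \tokengraph{G'}{\bar{g}'}$; and for $\ell < k$, $(G, \bar{g}_\ell) \equiv_\ell (G', \bar{g}'_\ell)$ iff $\{[(G, \bar{g}_\ell \cdot g)]_{\equiv_{\ell+1}} : g \in V_G\} = \{[(G', \bar{g}'_\ell \cdot g')]_{\equiv_{\ell+1}} : g' \in V_{G'}\}$. By \textsc{finiteness}, $\equiv_k$ has finite index; inductively, since each $\equiv_\ell$-class is determined by a subset of the finite set of $\equiv_{\ell+1}$-classes, each $\equiv_\ell$ has finite index. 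In particular, $\equiv_0$ partitions all topologies into finitely many classes. Next, by downward induction on $\ell$, I would prove the semantic transfer lemma: if $(G, \bar{g}_\ell) \equiv_\ell (G', \bar{g}'_\ell)$, then for every formula $\Psi = Q_{\ell+1} x_{\ell+1} \dots Q_k x_k.\, \varphi$, the systems $P^G$ and $P^{G'}$ agree on $\Psi$ under the assignments $\bar{g}_\ell$ and $\bar{g}'_\ell$ respectively. The base case $\ell = k$ follows from the first paragraph and \textsc{reduction}. The inductive step is a back-and-forth: by the definition of $\equiv_\ell$, any choice of $g$ extending $\bar{g}_\ell$ can be matched by a $g'$ extending $\bar{g}'_\ell$ that is $\equiv_{\ell+1}$-equivalent, and vice versa; the induction hypothesis then handles both $\forall$ and $\exists$ symmetrically.

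Finally, since $\equiv_0$ has finite index, for each $\equiv_0$-class intersecting $\pgraph$ I would select one representative $G^\ast \in \pgraph$, and define $c := \max_{G^\ast} |V_{G^\ast}|$. If $P^G \models \pspec$ for all $G \in \pgraph$ with $|V_G| \le c$, then in particular $P^{G^\ast} \models \pspec$ for every chosen representative; applying the semantic transfer lemma at $\ell = 0$ to $G$ and its representative yields $P^G \models \pspec$ for arbitrary $G \in \pgraph$. The other direction of the cutoff equivalence is immediate. The main obstacle is carefully setting up the back-and-forth induction for mixed quantifier prefixes and bookkeeping the mild subtlety that $\tokengraph{G}{\bar{g}}$ is defined only for tuples of distinct vertices (handled by either restricting quantifiers to distinct values, or equivalently folding duplicated indices into shorter distinct tuples before applying the construction).
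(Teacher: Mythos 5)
Your proof is correct and reaches the same cutoff, but it organizes the quantifier-elimination step differently from the paper. Both arguments share the same core architecture: by \textsc{reduction}, the truth of the matrix $\varphi$ at a fully instantiated tuple depends only on the $\equiv_{\TL}$-class of $\tokengraph{G}{\bar{g}}$, and \textsc{finiteness} makes the set of such classes finite. Where you differ is in how you aggregate these ``leaf types'' over the quantifier prefix: the paper fixes a prefix, encodes the aggregation as a Boolean formula $B_G$ over one proposition per $\equiv_{\TL}$-class (with $\overline{\exists}$ as disjunction over $V_G$, etc.), and gets finiteness from the $2^{2^n}$ bound on Boolean functions; you instead build a prefix-independent tower of finite-index equivalences $\equiv_k, \dots, \equiv_0$ by iterated ``set of extension classes'', i.e.\ an Ehrenfeucht--Fra\"iss\'e-type construction, and prove a back-and-forth transfer lemma. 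Your version has the mild advantage of handling all $2^k$ prefixes with a single partition; the paper's version absorbs the exotic quantifiers more cheaply. That last point is the one place you must be careful: for the edge-restricted quantifiers $\exists x \in E(y)$ and $\forall x \in E(y)$ admitted by the paper's syntax, your $\equiv_\ell$ as literally defined is too coarse --- it records only the set of $\equiv_{\ell+1}$-classes achievable by \emph{arbitrary} $g \in V_G$, and the class of $(G, \bar{g}_\ell \cdot g)$ does not determine whether $g \in E_G(g_j)$, since $\equiv_{\TL}$ for a stutter-insensitive logic on $\tokengraph{G}{\bar{g}}$ need not detect adjacency among the distinguished vertices. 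You must enrich $\equiv_\ell$ to record, for each coordinate $j \le \ell$ (and for the set-membership quantifiers likewise), the set of classes achievable by neighbours of $g_j$; this keeps the index finite and the back-and-forth goes through. In the paper's encoding this is absorbed automatically because $\bigvee_{g \in E_G(g_j)}$ is still a Boolean formula over the same finite proposition set. You flag ``mixed quantifier prefixes'' as the bookkeeping obstacle, so you are aware of the issue, but the enrichment should be made explicit.
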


%

\begin{proof}

Fix quantifier prefix $Q_1 x_1 \dots Q_k x_k$.  We prove that there
exist finitely many topologies $G_1, \cdots, G_N \in \pgraph$ such
that for every $G \in \pgraph$ there is an $i \leq N$ such that for
all $P \in \Psimptok$, and all $\TL$-formulas $\varphi$ over atoms
$\APproc \times \{x_1,\cdots,x_k\}$ 
\[
P^G \models Q_1 x_1 \dots Q_k x_k.\ \varphi \iff P^{G_i} \models Q_1 x_1 \dots Q_k x_k.\ \varphi
\]
In particular, $\max\{|V_{G_i}| : 1 \leq i \leq N\}$ is a cutoff for $\PMCP_{\pgraph}(\Psimptok,\kTL)$.

Suppose
for simplicity of exposition that $Q_i x_i$ is a quantifier that also expresses that the value of $x_i$ is different from the values of $x_j \in \{x_1, \dots, x_{i-1}\}$.\footnote{All the types of quantifiers defined in Section~\ref{sec:definition-itl}, such as $\exists x \in E(y)$, can be dealt with similarly at the cost of notational overhead.} 
Fix representatives of $[\tokengraph{G}{\bar{g}}]_{\equiv_{\TL}}$ and define a function $r$ that maps $\tokengraph{G}{\bar{g}}$ to the representative of $[\tokengraph{G}{\bar{g}}]_{\equiv_{\TL}}$. 
Define a function $rep$ that maps $\abstraction{P^{G}}{\bar{g}}$ to $\abstraction{P^{H}}{\bar{h}}$, where $r (\tokengraph{G}{\bar{g}})= \tokengraph{H}{\bar{h}}$. 

For every $\equiv_\TL$-representative $\tokengraph{H}{\bar{h}}$ (i.e., $\tokengraph{H}{\bar{h}} = r(\tokengraph{G}{\bar{g}})$ for some topology $G$ and $k$-tuple $\bar{g}$), introduce a new Boolean proposition 
$q_{\tokengraph{H}{\bar{h}}}$. By the \textsc{finiteness} property of \TL\ there are finitely many such Boolean propositions, say $n$. 

 Define a valuation 
$e_\varphi$ (that depends on $\varphi$) of these new atoms by
\[
e_\varphi(q_{\tokengraph{H}{\bar{h}}}) :=
\begin{cases}
\top &\mbox{if } \abstraction{P^{H}}{\bar{h}}\models \varphi[p_{x_j} \mapsto p@j]\\
\bot &\mbox{otherwise}.
\end{cases}
\]

For every $G \in \pgraph$ define Boolean formula
$
B_{G} := (\overline{Q_1} g_1 \in V_G) \dots (\overline{Q_k} g_k \in V_G) \ q_{r(\tokengraph{G}{\bar{g}})},
$
where $\overline{Q}$ is the Boolean operation corresponding to $Q$,
e.g., $\overline{\exists} g_i \in V_G$ is interpreted as $\bigvee_{g
  \in V_G \setminus \{g_1,\dots,g_{i-1}\}}$.\footnote{Note that in the Boolean propositions $G$ is fixed while $\bar{g}=(g_1,\dots,g_n)$ ranges over $(V_G)^k$ and is determined by the quantification.}

Then (for all $P,G$ and $\varphi$)
\begin{align}
 & P^G \models Q_1 x_1 \dots Q_k x_k.\ \varphi \nonumber \\
&\iff	Q_1 g_1 \in V_G \dots Q_k g_k \in V_G: P^G \models \varphi[p_{x_j} \mapsto p_{g_j}] \nonumber\\
&\iff	Q_1 g_1 \in V_G \dots Q_k g_k \in V_G:  \abstraction{P^{G}}{\bar{g}} \models \varphi[p_{x_j} \mapsto p@j] \nonumber\\
&\iff Q_1 g_1 \in V_G \dots Q_k g_k \in V_G:  rep(\abstraction{P^{G}}{\bar{g}}) \models \varphi[p_{x_j} \mapsto p@j]\nonumber\\
&\iff e_\varphi(B_{G}) = \top \nonumber
\end{align}
Here $\varphi[p_{x_j} \mapsto p_{g_j}]$ is the formula resulting from replacing every atom in $\varphi$ of the form $p_{x_j}$ by the atom $p_{g_j}$, for $p \in \APproc$ and $1 \leq j \leq k$. Similarly $\varphi[p_{x_j} \mapsto p@j]$ is defined as the formula resulting from replacing (for all $p \in \APproc, j \leq k$) every atom in $\varphi$ of the form $p_{x_j}$  by the atom $p@j$. The first equivalence is by the definition of semantics of indexed temporal logic; the second is by the definition of $\abstraction{P^{G}}{\bar{g}}$; the third is by the \textsc{reduction} property of \TL; the fourth is by the definition of $e_\varphi$ and $rep$.



Fix $B_{G_1}, \dots, B_{G_N}$ (with $G_i \in \pgraph$) such that every $B_{G}$ ($G \in \pgraph$) is logically equivalent to some $B_{G_i}$. Such a finite set of formulas exists since there are $2^{2^n}$ Boolean formulas (up to logical equivalence) over $n$ Boolean propositions, and thus at most $2^{2^n}$ amongst the $B_G$ for $G \in \pgraph$.

%

By the equivalences above conclude  that for every $G \in \pgraph$ there exists $i \leq N$ such that $P^G \models Q_1 x_1 \dots Q_k x_k.\ \varphi$ if and only if $ P^{G_i} \models Q_1 x_1 \dots Q_k x_k.\ \varphi$. Thus `$\forall G \in \pgraph, P^G \models \varphi$' is equivalent to `$\bigwedge_{i \leq N} e_{\varphi}(B_{G_i})$' and so the integer $c := \max\{|V_{G_i}| : 1 \leq i \leq N\}$ is a cutoff for $\PMCP_{\pgraph}(\Psimptok,\kTL)$.
\qed
\end{proof}

\blackout{
At this point we can already conclude that `$\forall G \in \pgraph, P^G \models \varphi$' is equivalent to `$\bigwedge_{i \leq n} e_{\varphi}(B_{G_i,x_i})$', and thus we have reduced the PMCP to finitely many model-checking problems. Moreover the integer $c := \max\{|V_{G_i}| : 1 \leq i \leq n\}$ is a cutoff for $\PMCP_{\pgraph}(\Psimptok,\kTL)$.
}

%


\blackout{
For every integer $k \in \Nat$, the problem $\PMCP_{\pgraph}(\Pproctemp,\kTL)$ is decidable because one just needs to decide the truth value of `$\bigwedge_{i \leq n} e_{\varphi}(B_{G_i})$' (which can be done since processes in $\Pproctemp$ are finite-state). 
 However, the proof of Theorem~\ref{thm:compfinicutoff} does not show, for general topologies, how to {\em compute} a cutoff from $k$ (the identical situation occurs in our rewording of a result from Clarke et al., Theorem~\ref{thm:cttv_reworded}). Thus the proof does not show that $\PMCP_{\pgraph}(\Pproctemp,\piTL)$ is decidable.  On the other hand, for certain topologies $\pgraph$ and temporal logics \TL, our result is constructive and we can compute cutoffs given $k$ (Theorem~\ref{thm:explicit_cutoffs}). In these cases $\PMCP_{\pgraph}(\Pproctemp,\piTL)$ is decidable: the algorithm extracts $k$ from the input indexed-\TL\ formula, computes a description of the algorithm computing a cutoff for $k$-index \TL\ and runs the algorithm above.
}

\begin{remark} \label{rem:starTL}
The theorem implies that for every $k,\pgraph$, $\PMCP_\pgraph(\Pproctemp,\kTL)$ is decidable. Further, fix $\pgraph$ and suppose that given $k$ one could compute the finite set $G_1, \cdots, G_N$. Then by the last sentence in the proof one can compute the cutoff $c$. In this case, $\PMCP_\pgraph(\Pproctemp,\piTL)$ is decidable.
\end{remark}

\note{Does cutoff + reduction imply finiteness?}


\section{Existence of Cutoffs for $k$-indexed \dCTLstarmX }\label{sec:cutoffsdk}
The following theorem answers Question $1$ from the introduction.
 
\begin{theorem}\label{thm:cutoff_kctld}
Let $\pgraph$ be a parameterized topology. Then for all $k,d \in \Nat$,  the problem $\PMCP_\pgraph(\Psimptok,\dkCTLstarmX)$ has a cutoff. 
\end{theorem}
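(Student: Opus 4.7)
The plan is to invoke Theorem~\ref{thm:compfinicutoff} with $\TL := \dCTLstarmX$, which reduces the task to verifying the \textsc{reduction} and \textsc{finiteness} properties for this logic. I will address \textsc{finiteness} first (it is the easier of the two) and then devote most of the effort to \textsc{reduction}.

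For \textsc{finiteness}, I observe that each graph LTS $\tokengraph{G}{\bar{g}}$ has only one action and labels drawn from the finite set $\{\emptyset,\{1\},\dots,\{k\}\}$, and that $\dCTLstarmX$ is $X$-free with path-quantifier nesting depth bounded by $d$. Over a bounded set of atomic propositions, the $X$-free $\CTLstar$-fragment of nesting depth $d$ has, modulo logical equivalence, only finitely many formulas (one can see this by building a normal form: starting from the finitely many boolean combinations of atoms at nesting depth $0$, every additional layer of path quantifiers produces a finite set of new formulas). Consequently there are only finitely many distinct $\dCTLstarmX$-theories of pointed graph LTSs, so only finitely many $\equiv_{\dCTLstarmX}$-classes of $\tokengraph{G}{\bar{g}}$, giving \textsc{finiteness}.

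For \textsc{reduction}, the strategy is to introduce the two auxiliary notions hinted at in the introduction: (i) a $d$\emph{-stuttering trace equivalence}, generalizing the stuttering trace-equivalence of~\cite{SKS} to the branching-depth $d$ setting, and (ii) a $d$\emph{-contraction} of a graph LTS that retains exactly the branching information about token movement visible to $\dCTLstarmX$-formulas up to nesting depth $d$. The key composition-style lemma I would prove is: if $\tokengraph{G}{\bar{g}} \equiv_{\dCTLstarmX} \tokengraph{G'}{\bar{g}'}$ then for every $P \in \Psimptok$ the LTSs $\abstraction{P^{G}}{\bar{g}}$ and $\abstraction{P^{G'}}{\bar{g}'}$ are linked by a $d$-stuttering bisimulation, and hence are $\equiv_{\dCTLstarmX}$-equivalent. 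The construction proceeds by induction on $d$. The base case ($d{=}1$) is essentially a stuttering trace-equivalence argument in the spirit of Clarke et al.~\cite{CTTV04}: any run of $\abstraction{P^G}{\bar{g}}$ decomposes into visible segments (token at some $g_i$) and invisible segments (token at an unquantified process, during which only internal moves of unquantified processes occur, which are stuttering for the labeling). The graph-level trace information is exactly what the graph LTS records, so an $\LTLmX$-equivalence at the graph level transfers to a stuttering trace equivalence at the TPS level, combining the common local $P$-behavior at $\bar{g}$ with matching invisible segments on the other side. The inductive step handles one extra layer of path quantification by matching, for every path from a state in $\abstraction{P^G}{\bar{g}}$, a corresponding path in $\abstraction{P^{G'}}{\bar{g}'}$ whose subsequent $(d{-}1)$-behavior is $\equiv_{\dminoneCTLstarmX}$-equivalent; this matching is supplied by the assumed $\dCTLstarmX$-equivalence of the underlying graphs applied to every reachable vertex, together with the Emerson--Namjoshi~\cite[Thm.~2]{EN95} simulation idea for handling the local $P$-component.

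The main obstacle will be the inductive step of the \textsc{reduction} lemma: one must carefully align branching in the two TPSs so that, at every chosen moment, both the local-state trajectories of the $\bar{g}$-processes and the branching possibilities for the token's future route through unquantified processes remain in correspondence up to the remaining nesting depth. This is precisely where $d$-contraction does its work; without it, the invisible token-routing segments could differ between $G$ and $G'$ in ways that only higher-depth formulas could distinguish. Once the composition lemma is in place, \textsc{reduction} for $\dCTLstarmX$ follows directly, and together with \textsc{finiteness} Theorem~\ref{thm:compfinicutoff} yields the desired cutoff for $\PMCP_\pgraph(\Psimptok,\dkCTLstarmX)$.
\qed
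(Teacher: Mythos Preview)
Your overall plan---invoke Theorem~\ref{thm:compfinicutoff} and verify \textsc{reduction} and \textsc{finiteness} for $\TL = \dCTLstarmX$---matches the paper exactly, and your sketch of \textsc{reduction} (project paths to the graph LTS, use the assumed graph-level equivalence to find a matching path, lift back to the TPS) is along the same lines as the paper's Theorem~\ref{theorem:reduction}.

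The genuine gap is in your \textsc{finiteness} argument. You claim that over a finite atom set the $X$-free $\CTLstar$ fragment of nesting depth~$d$ has only finitely many formulas up to logical equivalence, arguing that ``every additional layer of path quantifiers produces a finite set of new formulas.'' This is false already at depth~$1$: inside a single path-quantifier block one has an arbitrary $\LTLmX$ path formula, and $\LTLmX$ over even one atom has infinitely many non-equivalent formulas (e.g.\ the formulas $\varphi_n := \ltlF(p \wedge \ltlF(\neg p \wedge \ltlF(p \wedge \cdots)))$ with $n$ alternations are pairwise inequivalent). So the normal-form counting argument does not go through, and you cannot conclude that there are only finitely many $\equiv_{\dCTLstarmX}$-theories from finiteness of the logic.

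The paper proves \textsc{finiteness} (Theorem~\ref{thm:finiteness}) by a concrete construction rather than a counting argument: it defines a marking $\mu_d$ on vertices of $G$ that records, up to destuttering, the possible $\mu_{d-1}$-label sequences along paths from a vertex to the first element of $\bar{g}$, and then quotients $G$ by the equivalence $\mu_d(u)=\mu_d(v)$ to obtain a \emph{$d$-contraction} $\dcon{G}{\bar{g}}$. Finiteness of the range comes from the fact that destuttered strings ending at some $g_i$ have bounded length over a finite alphabet (inductively in~$d$), not from any finiteness of the ambient logic. This construction is also what yields the explicit small cutoffs in Theorem~\ref{thm:explicit_cutoffs}, which a purely abstract finiteness argument would not give even if it were correct.
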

%
\begin{corollary}
Let $\pgraph$ be a parameterized topology. Then for all $k,d \in \Nat$, the problem $\PMCP_\pgraph(\Psimptok,\dkCTLstarmX)$ is decidable.
\end{corollary}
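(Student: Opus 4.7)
The plan is to instantiate Theorem~\ref{thm:compfinicutoff} with $\TL = \dCTLstarmX$. To do so, I need to verify that $\dCTLstarmX$ satisfies both the \textsc{reduction} and the \textsc{finiteness} properties. The theorem then yields a cutoff for $\PMCP_\pgraph(\Psimptok, \dkCTLstarmX)$ directly.

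\paragraph{Finiteness.}
First I would show that for every fixed $k$, there are only finitely many $\equiv_{\dCTLstarmX}$-classes of pointed graph LTSs $\tokengraph{G}{\bar{g}}$. The graph LTS uses a single action label $\{{\sf a}\}$ and atomic propositions from the finite set $\{1,\dots,k\}$. Up to logical equivalence, there are only finitely many $\dCTLstarmX$ formulas of nesting depth at most $d$ over this finite vocabulary (this can be established by induction on $d$: at depth $0$ the vocabulary is finite, and each additional layer of path quantification applied to a finite set of state subformulas yields, modulo Boolean combinations, another finite set, since $\CTLstarmX$ path formulas without $X$ over finitely many propositions are, up to stuttering equivalence and $\omega$-regularity, finite). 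Two graph LTSs are $\equiv_{\dCTLstarmX}$ iff they agree on the finitely many formulas of depth $\leq d$, giving finitely many classes.

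\paragraph{Reduction.}
This is the main technical step and I expect it to be the principal obstacle. I need to show that if $\tokengraph{G}{\bar{g}} \equiv_{\dCTLstarmX} \tokengraph{G'}{\bar{g}'}$ then $\abstraction{P^G}{\bar{g}} \equiv_{\dCTLstarmX} \abstraction{P^{G'}}{\bar{g}'}$ for every process $P \in \Psimptok$. The intuition is that the graph LTS captures exactly the branching possibilities of token movement restricted to the quantified vertices; the process template $P$ only contributes internal moves and local token-reception/sending behaviour, which are identical on both sides. Following the hint in the introduction, I would generalise stuttering trace-equivalence~\cite{SKS} to a \emph{stuttering bisimulation of depth $d$} that plays matching games over the LTS structure: at each round, we match states with the same local labelings (i.e., the same projections of local states of the $k$ quantified processes), and for the remaining $d-1$ rounds, we match outgoing branchings. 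Internal transitions of unquantified processes and chains of internal transitions of quantified processes are absorbed into the stutter, so that the matching need only synchronise on token-passing events, which are governed precisely by the graph LTS. The fact that $\tokengraph{G}{\bar{g}} \equiv_{\dCTLstarmX} \tokengraph{G'}{\bar{g}'}$ supplies, inductively in $d$, a winning strategy for the token-movement component of this game; lifting it to $\abstraction{P^G}{\bar{g}}$ uses the ``copycat'' simulation idea of Emerson and Namjoshi, whereby each process in $\bar{g}'$ mirrors the local moves of its counterpart in $\bar{g}$ whenever it holds the token. A standard stuttering-bisimulation preservation lemma for $\CTLstarmX$ then gives $\equiv_{\dCTLstarmX}$-equivalence of the two abstractions.

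\paragraph{Main obstacle.}
The chief difficulty is showing that the above depth-$d$ stuttering bisimulation exists whenever the graph LTSs are $\dCTLstarmX$-equivalent. The linear-time version (for $\LTLmX$, as in \cite{CTTV04}) requires only that reachability patterns of the token among $\bar{g}$ coincide, which is captured by a simple ``connection topology''. In the branching case, one must track the entire branching structure of token movement up to depth $d$, and convert $\dCTLstarmX$-equivalence on the small graph LTS into a bisimulation-like game on the exponentially larger abstractions. I would handle this by defining a notion of \emph{$d$-contraction} of a topology (as alluded to in the introduction) that canonises graph LTSs under $\equiv_{\dCTLstarmX}$, and prove by induction on $d$ that the contraction preserves all $\dCTLstarmX$-relevant branching of the token; combining this with the copycat simulation over $P$ then establishes the reduction property, completing the invocation of Theorem~\ref{thm:compfinicutoff}.
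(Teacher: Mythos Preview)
Your overall plan coincides with the paper's: verify the \textsc{reduction} and \textsc{finiteness} properties for $\dCTLstarmX$, invoke Theorem~\ref{thm:compfinicutoff} to obtain a cutoff (this is exactly Theorem~\ref{thm:cutoff_kctld}), and conclude decidability via Observation~\ref{obs:cutoffs_implies_decidability}. Your sketch of the reduction step is also in the right spirit and close to the paper's Theorem~\ref{theorem:reduction}.

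There is, however, a genuine gap in your finiteness argument. You claim that, up to logical equivalence, there are only finitely many $\dCTLstarmX$ formulas over the atoms $\{1,\dots,k\}$, asserting that ``$\CTLstarmX$ path formulas without $X$ over finitely many propositions are, up to stuttering equivalence and $\omega$-regularity, finite''. This is false already at $d=1$: over a single atom $p$, the $\LTLmX$ formulas $\phi_n := \ltlF(p \wedge \ltlF(\neg p \wedge \ltlF(p \wedge \cdots)))$ with $n$ alternations are pairwise inequivalent, so $\LTLmX$ has infinitely many formulas up to equivalence over one atom. Hence the ``finite theory implies finitely many classes'' route does not go through.

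What rescues finiteness is not a property of the logic but of the \emph{structures}: a graph LTS $\tokengraph{G}{\bar g}$ has at most $k$ vertices with nonempty label, every other vertex being labeled $\emptyset$. The paper exploits this via the $d$-contractions $\dcon{G}{\bar g}$ (Theorem~\ref{thm:finiteness}): for each vertex one records the \emph{destuttered} sequences of $\mu_{d-1}$-markings along paths up to the next element of $\bar g$; destuttering collapses the arbitrarily long runs of $\emptyset$-labeled vertices, which is precisely what bounds the number of possible markings. So the $d$-contractions you mention in your last paragraph are the central device for \textsc{finiteness}, not for \textsc{reduction} as you suggest; the reduction theorem is proved separately and does not rely on them.
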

To prove the Theorem it is enough, by Theorem~\ref{thm:compfinicutoff}, to show that the logic \dkCTLstarmX\ has the \textsc{reduction} property and the \textsc{finiteness} property.

%
%

\begin{theorem}[Reduction]\label{theorem:reduction}
For all $d,k \in \Nat$, topologies $G,G'$, processes $P \in \Psimptok$, 
 $k$-tuples ${\bar{g}}$ over $V_G$ and  $k$-tuples ${\bar{g}'}$ over $V_{G'}$:

If $\tokengraph{G}{\bar{g}} \equiv_{\dCTLstarmX} \tokengraph{G'}{\bar{g}'}$ then 
$\abstraction{P^{G}}{\bar{g}} \equiv_{\dCTLstarmX} \abstraction{P^{G'}}{\bar{g}'}.
$

\end{theorem}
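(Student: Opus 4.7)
The plan is to exploit the characterization of $\dCTLstarmX$-equivalence on (finite) LTSs in terms of a $d$-level stuttering bisimulation (this is the generalization of stuttering bisimulation from \cite{SKS}). Accordingly, I would prove the theorem in three steps: (i)~translate the hypothesis $\tokengraph{G}{\bar{g}} \equiv_{\dCTLstarmX} \tokengraph{G'}{\bar{g}'}$ into a $d$-stuttering bisimulation $R$ on the graph LTSs; (ii)~lift $R$ to a $d$-stuttering bisimulation $R'$ between $\abstraction{P^G}{\bar{g}}$ and $\abstraction{P^{G'}}{\bar{g}'}$; (iii)~apply the characterization in the reverse direction to conclude.

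For step~(ii), define $R'$ on global states by: $(s,s') \in R'$ whenever the distinguished processes carry matching local state, i.e.\ $s(g_i) = s'(g'_i)$ for every $i \in [k]$, and the token location $v$ in $s$ is $R$-related to the token location $v'$ in $s'$ in the graph LTSs. Note this implies the labelings in $\abstraction{P^G}{\bar{g}}$ and $\abstraction{P^{G'}}{\bar{g}'}$ agree. The initial states are related because $x_G$ and $x_{G'}$ are $R$-related (as initial states of $\tokengraph{G}{\bar{g}}$ and $\tokengraph{G'}{\bar{g}'}$) and the local states of distinguished processes are $(\iota_t$ at the token holder, $\iota_n$ elsewhere$)$ on both sides.

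The zig-zag conditions are verified by cases on the type of transition in $\abstraction{P^G}{\bar{g}}$: an internal transition at some $g_i$ is matched by the same internal transition of the corresponding distinguished process on the $G'$-side; an internal transition at a non-distinguished process is invisible in the projection and can be absorbed as a stuttering step; a token-passing transition from $v$ to $w$ in $G$ is matched by the $R$-corresponding (possibly stuttering) path of token movements in $G'$, realized in $P^{G'}$ by a sequence of alternating $\tsnd$/$\trcv$ transitions of the non-distinguished processes along that path, followed if necessary by a token-passing transition involving a distinguished process. The fact that non-distinguished processes are unobserved by the projection's labels lets us pick their local states freely along such a path, and the totality and alternation conditions (vi) and (vii) in the definition of process templates, together with the absence of direction-awareness (so that any edge of $G'$ can be used by any pair of $\tsnd$/$\trcv$ enabled neighbors), guarantee that these runs can always be built. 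A symmetric argument handles transitions from the $G'$ side.

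The main obstacle is exactly this last construction: maintaining, as an invariant of $R'$, enough control over the local states of \emph{non-distinguished} processes to guarantee that any token path required by $R$ on the graph side is realizable by an actual run in $P^{G'}$. Morally this is the branching-time analogue of the ``connection topology'' argument of \cite[Proposition~1]{CTTV04}: instead of matching a single finite path (sufficient for $\LTLmX$), we must match an entire $d$-deep tree of token itineraries, and the $d$-stuttering bisimulation $R$ on the graph LTSs is precisely what supplies this tree inductively. Once $R'$ is established, Theorem~\ref{theorem:reduction} follows by invoking the characterization theorem of \cite{SKS} on $\abstraction{P^G}{\bar{g}}$ and $\abstraction{P^{G'}}{\bar{g}'}$.
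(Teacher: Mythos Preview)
Your proposal is essentially the same approach as the paper's: project the behavior of $P^G$ to the token movement in $\tokengraph{G}{\bar{g}}$, use the assumed $\dCTLstarmX$-equivalence at the graph level to find a matching token itinerary in $\tokengraph{G'}{\bar{g}'}$, and lift that itinerary back to a run of $P^{G'}$ by exploiting direction-unawareness and the EN alternation restriction so that non-distinguished processes can always be driven to receive and forward the token. The paper phrases this as a path-by-path simulation using a generalized stuttering \emph{trace}-equivalence from~\cite{SKS}, whereas you package the same argument relationally as lifting a $d$-level stuttering equivalence $R$ on the graph LTSs to a relation $R'$ on global states; these are two presentations of the same construction, and your identification of the key invariant (distinguished local states match, token positions are $R$-related) and of the role of conditions~(vi)--(vii) is exactly right.
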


The idea behind the proof is to show that paths in $P^G$ can be simulated by paths in $P^{G'}$ (and vice versa). Given a path $\pi$ in $P^G$, first project it onto $G$ to get a path $\rho$ that records the movement of the token, then take an equivalent path $\rho'$ in $G'$ which exists since $\tokengraph{G}{\bar{g}} \equiv_{\dCTLstarmX}  \tokengraph{G'}{\bar{g}'}$, and then lift $\rho'$ up to get a path $\pi'$ in $P^{G'}$ that is equivalent to $\pi$. This lifting step uses the assumption that process $P$ is in $\Psimptok$, i.e., $P$ cannot control where it sends the token, or from where it receives it. The proof can be found in the full version of the paper~\cite{FullVersion}.

\todo{explain notation similarities and differences. Eg. where is the $h$ of EN?}

\begin{remark} \label{rem:reduction}
As immediate corollaries we get that the \textsc{reduction} property holds with $\TL = \LTLmX$ (take $d = 1$),  $\CTLstarmX$ (since if the assumption holds with $\TL = \CTLstarmX$ then the conclusion holds with $\TL =  \dCTLstarmX$ for all $d \in \Nat$, and thus also for $\TL = \CTLstarmX$) and, if $P$ is finite, also for $\TL = \CTLmX$ (since \CTLmX\ and \CTLstarmX\ agree on finite structures).\sr{reference to this last fact about finite structures}
\end{remark}


\medskip
\noindent
{\bf Finiteness Theorem.}
%
%
Theorem~\ref{thm:CTTVnocutoff} (\cite[Corollary $3$]{CTTV04}) states that there exists $\pgraph$ such that the problem $\PMCP_\pgraph(\Psimptok,\twoexistsCTLstarmX)$ does not have a cutoff. We observed that the formulas from their result have unbounded nesting depth of path quantifiers. This leads to the idea of stratifying $\CTLstarmX$ by nesting depth.

Recall from Section~\ref{sec:definition-itl} that i) $\dCTLstarmX$
denotes the syntactic fragment of \CTLstarmX\ in which formulas have
path-quantifier nesting depth at most $d$; ii) $M \equiv_{\dCTLstarmX} N$ iff $M$ and $N$ agree on all $\dCTLstarmX$ formulas. Write $[M]_{\dCTLstarmX}$ for the set of all LTSs $N$ such that $M \equiv_{\dCTLstarmX} N$.
\sr{Add dfn to spec section}

Following the method of Section~\ref{sec:method}
we prove that the following \textsc{finiteness} property holds (where
$k$ represents the number of process-index quantifiers in the prenex
indexed temporal logic formula).

\begin{remark}\label{rem:globalfairness}
For ease of exposition we sketch a proof under the assumption that path quantifiers in formulas ignore
  runs in which the token does not visit every process infinitely
  often. This is an explicit restriction in \cite{CTTV04} and implicit in \cite{EN95}.  In the full version~\cite{FullVersion} we remove
  this restriction. For the purpose of this paper this restriction only affects the explicit cutoffs in Theorem~\ref{thm:explicit_cutoffs}.
  \sj{should we include (or hint at) the amended construction? YES but no time.}
\end{remark}

\todo{better notation for contraction? $\hat{G}_{d,x,\bar{g}}$ or $\textsf{con}_d{(G_x)|\bar{g}}$}

\begin{theorem}[Finiteness] \label{thm:finiteness}
For all positive integers $k$ and $d$, there are finitely many equivalence classes $[\tokengraph{G}{\bar{g}}]_{\equiv_{\dCTLstarmX}}$ where $G$ is an arbitrary topology, and ${\bar{g}}$ is a $k$-tuple over $V_G$.
\end{theorem}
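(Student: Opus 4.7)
The plan is to construct, for each pair $(G,\bar{g})$, a canonical ``$d$-contraction'' $\dcon{G}{\bar{g}}$ whose state set has size bounded by a function $f(k,d)$ of $k$ and $d$, and to show that $\tokengraph{G}{\bar{g}} \equiv_{\dCTLstarmX} \dcon{G}{\bar{g}}$. Since there are only finitely many graph LTSs (up to isomorphism) over the fixed nominal signature $\{1,\dots,k\}$ with at most $f(k,d)$ states, the theorem then follows immediately: every equivalence class $[\tokengraph{G}{\bar{g}}]_{\equiv_{\dCTLstarmX}}$ contains the representative $\dcon{G}{\bar{g}}$, and there are finitely many such representatives.

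To define the contraction I would assign inductively to every vertex $v\in V_G$ a ``$d$-type'' $\tau_d(v)$ drawn from a finite set $T_d$. For $d=0$, $\tau_0(v)$ is just the atomic label of $v$ (a nominal index $i\in[k]$ if $v=g_i$, or $\bot$ otherwise), so $|T_0|\leq k+1$. For $d\geq 1$, $\tau_d(v)$ packages $\tau_{d-1}(v)$ together with the set of maximal outgoing paths from $v$ viewed as $\omega$-words over $T_{d-1}$, taken modulo $\destut$ and (invoking Remark~\ref{rem:globalfairness}) restricted to the fair paths where every nominal is visited infinitely often. Such stutter-free fair trace sets are $\omega$-regular and are fully classified by which $(d-1)$-types occur and which occur infinitely often; so there are only finitely many of them, and hence $|T_d|$ remains finite by induction. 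The contraction $\dcon{G}{\bar{g}}$ is then the quotient of $\tokengraph{G}{\bar{g}}$ identifying non-nominal vertices of identical $d$-type (with nominals kept distinct and edges lifted canonically); its state set has size at most $k+|T_d|$.

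Next I would prove by induction on $d'\leq d$ that vertices sharing a $d'$-type satisfy exactly the same $\xCTLstarmX{d'}$ state formulas in $\tokengraph{G}{\bar{g}}$. The inductive step is a stuttering variant of a Hennessy--Milner argument: given an $\xCTLstarmX{d'}$ formula $\ctlE \psi$ where $\psi$ is an LTL$\setminus${\sf X} combination of $\xCTLstarmX{d'-1}$ state subformulas, one replays a witnessing path from $v$ as a (stuttering-equivalent) path from any $v'$ with the same $d'$-type, since $\tau_{d'}(v)=\tau_{d'}(v')$ packages exactly the destuttered fair trace language over $(d'-1)$-types. Applied to the initial vertex, this yields $\tokengraph{G}{\bar{g}} \equiv_{\dCTLstarmX} \dcon{G}{\bar{g}}$.

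The main obstacle will be precisely calibrating the $d$-type so that it carries neither too little nor too much information: it must be coarse enough that $|T_d|$ is bounded by a function of $(k,d)$ alone (independently of $|V_G|$), yet fine enough that two vertices of the same $d$-type can mimic one another's outgoing paths up to stuttering while preserving matching $(d-1)$-types at every visited nominal. Threading the fairness convention of Remark~\ref{rem:globalfairness} through the inductive step (so that both sides compare only fair paths, and the destuttered fair traces are genuinely $\omega$-regular invariants) is the principal technical burden; the removal of this convention, as promised in the full version, requires keeping extra bookkeeping in $\tau_d$ about visits to non-nominals.
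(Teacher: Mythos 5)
Your overall architecture coincides with the paper's: assign to each vertex an inductively defined ``$d$-type'' whose range is bounded in terms of $k$ and $d$ alone, quotient $\tokengraph{G}{\bar{g}}$ by type equality to obtain a bounded $d$-contraction, prove $\dCTLstarmX$-equivalence to the contraction by a stuttering Hennessy--Milner induction, and conclude finiteness from the bounded size of the representatives. The paper's marking $\mu_d$ plays exactly the role of your $\tau_d$, and its $\dcon{G}{\bar{g}}$ is your quotient.

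However, the step where you justify $|T_d|<\infty$ has a genuine gap. You claim that the destuttered fair trace sets over $T_{d-1}$ ``are fully classified by which $(d-1)$-types occur and which occur infinitely often.'' That invariant is too coarse: \LTLmX\ observes the \emph{order} of letters in a destuttered word, not merely its occurrence sets. Concretely, take $k=3$, let $G$ be the directed $3$-cycle $g_1\to g_2\to g_3\to g_1$ and $G'$ the cycle $g_1'\to g_3'\to g_2'\to g_1'$. In both, the unique (hence fair) path from the initial vertex visits all three nominals infinitely often, so your types identify the two structures; yet the $\xCTLstarmX{1}$ formula $\ctlE\,\ltlG\bigl(1\to(1\ltlU 2)\bigr)$ separates $\tokengraph{G}{\bar{g}}$ from $\tokengraph{G'}{\bar{g}'}$. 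So either your type is the occurrence-set abstraction (and then the Hennessy--Milner mimicking step fails, since equal types need not yield stuttering-equivalent path sets), or it is the full destuttered trace language (and then finiteness of $T_d$ is exactly what remains to be proved --- there are infinitely many $\omega$-regular languages over a fixed finite alphabet, so ``$\omega$-regular'' does not bound their number). The paper breaks this circularity with a different invariant: $\mu_d(v)$ records only the destuttered $\mu_{d-1}$-label strings of the \emph{finite} path segments from $v$ up to the first visit of a vertex in $\bar{g}$, and the types of infinite paths are recovered afterwards by stitching such segments together at $\bar{g}$-vertices. At the bottom level every interior vertex of such a segment carries the empty label, so the destuttered strings have length at most three, which is what bootstraps the finiteness induction (the higher levels still require keeping only suitable representatives of strings, as worked out in the full version). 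This segmentation at $\bar{g}$-visits is the idea your proposal is missing.
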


\paragraph{Proof Idea.} 
We provide an algorithm that given positive integers $k,d$, topology $G$, $k$-tuple $\bar{g}$ over $V_G$, returns a LTS $ \dcon{G}{\bar{g}}$ such that
$\tokengraph{G}{\bar{g}} \equiv_{\dCTLstarmX}
\dcon{G}{\bar{g}}$. Moreover, we prove that for fixed $k,d$ the range of  $\dcon{\_}{\_}$\ak{if con is a function then add parenthesis around args?} is finite.

Recursively define a marking function $\mu_d$ that associates with each $v \in V_G$ a finite set (of finite strings over alphabet $\mu_{d-1}(V_G)$). For the base case define $\mu_0(v) := \Lambda(v)$, the labeling of $\tokengraph{G}{\bar{g}}$. The marking $\mu_d(v)$ stores (representatives) of all strings of $\mu_{d-1}$-labels of paths that start in $v$ and reach some element in $\bar{g}$. The idea is that $\mu_d(v)$ determines the set of \dCTLstarmX\ formulas that hold in $\tokengraph{G}{\bar{g}}$ with initial vertex $v$, as follows: stitch together these strings, using elements of $\bar{g}$ as stitching points, to get the \dCTLstarmX\ types of the infinite paths starting in $v$. This allows us to define a topology, called the $d$-contraction $ \dcon{G}{\bar{g}}$, whose vertices are the $\mu_d$-markings of vertices in $G$. In the full version~\cite{FullVersion} we prove that $\tokengraph{G}{\bar{g}}$ is $\dCTLstarmX$-equivalent to its $d$-contraction, and that the number of different $d$-contractions is finite, and depends on $k$ and $d$.
\todo{give bound on number of contractions. tower of exponentials}

\medskip
\noindent
{\bf Definition of $d$-contraction $\dcon{G}{\bar{g}}$.} 
Next we define $d$-contractions. 


\todo{Currently written to work under fairness assumption in which ignore paths in which token doesn't visit everyone infinitely often}

\medskip
\noindent
{\em Marking $\mu_d$.}
Fix $k,d \in \Nat$, topology $G$, and $k$-tuple ${\bar{g}}$ over $V_G$. Let $\Lambda$ be the labeling-function of $\tokengraph{G}{\bar{g}}$, i.e., $\Lambda(v) = \{i\}$ if $v = g_i$, and $\Lambda(v) = \emptyset$ for $v \not\in \bar{g}$. For every vertex $v \in V_G$ define a set $X(v)$ of paths of $G$ as follows: a path $\pi = \pi_1 \dots \pi_t$, say of length $t$, is in $X(v)$ if $\pi$ starts in $v$, none of $\pi_1, \dots, \pi_{t-1}$ is in $\bar{g}$, and $\pi_t \in \bar{g}$. Note that $X(g_i) = \{g_i\}$.

Define the marking $\mu_d$ inductively:
\[
\mu_d(v) := 
\begin{cases}
 \Lambda(v) &\text{if } d = 0\\
 \{\destut(\mu_{d-1}(\pi_1)\dots  \mu_{d-1}(\pi_t)) : \pi_1 \dots \pi_t \in X(v), t \in \Nat\} &\text{if } d > 0,
\end{cases}
\]
where $\destut(w)$ is the maximal substring $s$ of $w$ such that for
every two consecutive letters $s_i$ and $s_{i+1}$ we have that $s_i
\neq s_{i+1}$. Informally, remove identical consecutive letters of $w$
to get the `destuttering' $\destut(w)$.

\note{give a better definition of destutter}

\note{make use of this notion in proof of reduction} 

The elements of $\mu_d(v)$ ($d > 0$) are finite strings over the alphabet $\mu_{d-1}(V_G)$. For instance, strings in $\mu_1(v)$ are over the alphabet $\{ \{1\}, \{2\}, \dots, \{k\}, \emptyset\}$.\sr{should we change $\emptyset$ to something like $\{H\}$?}


%

\medskip
\noindent
{\em Equivalence relation $\sim_d$.}
Vertices $v,u \in V_G$ are \emph{$d$-equivalent}, written $u \sim_d
v$, if $\mu_d(v) = \mu_d(u)$. We say that $\sim_d$ \emph{refines} $\sim_j$ if
$u \sim_d v$ implies $u \sim_j v$. 

\begin{lemma}\label{lem:refines}
If $0 \leq j < d$, then $\sim_d$ refines $\sim_j$.
\end{lemma}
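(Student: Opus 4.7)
The plan is to iterate a single refinement step: for every $d \geq 1$, I will show that $\mu_d(v) = \mu_d(u)$ forces $\mu_{d-1}(v) = \mu_{d-1}(u)$. Given this, $u \sim_d v$ propagates down to $u \sim_{d-1} v$, and applying the step $d-j$ times yields $u \sim_j v$ for every $0 \leq j < d$, which is the claim.

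The key observation powering the single step is that destuttering never deletes the \emph{first} letter of a string: $\destut(a_1 a_2 \dots a_t)$ and $a_1 a_2 \dots a_t$ always agree on their first symbol. Combined with the fact that every path $\pi = \pi_1 \dots \pi_t \in X(v)$ satisfies $\pi_1 = v$, this yields: for every nonempty string $s \in \mu_d(v)$, the first letter of $s$ is exactly $\mu_{d-1}(v) \in \mu_{d-1}(V_G)$. So $\mu_d(v)$ records $\mu_{d-1}(v)$ explicitly in its elements.

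With this, I would split into two cases. If $\mu_d(v) = \mu_d(u) \neq \emptyset$, pick any $s$ in this common set; its first letter equals both $\mu_{d-1}(v)$ and $\mu_{d-1}(u)$, so these are equal. If instead $\mu_d(v) = \mu_d(u) = \emptyset$, then $X(v) = X(u) = \emptyset$: neither vertex lies in $\bar{g}$, and neither can reach $\bar{g}$ in $G$. When $d-1 \geq 1$, the same emptiness gives $\mu_{d-1}(v) = \emptyset = \mu_{d-1}(u)$ directly from the recursive definition. When $d-1 = 0$, we have $\mu_0(v) = \Lambda(v) = \emptyset = \Lambda(u) = \mu_0(u)$ since neither vertex lies in $\bar{g}$.

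I do not anticipate any real obstacle. The whole argument rests on the elementary invariant ``first letter survives destuttering,'' plus a careful accounting of the degenerate case in which a vertex cannot reach $\bar{g}$; no induction on the topology or the formula structure is needed beyond the straightforward outer iteration on $d-j$.
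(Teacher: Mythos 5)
Your proof is correct and takes essentially the same route as the paper: the paper's justification is precisely the observation that every string in $\mu_d(v)$ begins with the letter $\mu_{d-1}(v)$ (i.e., the first letter survives destuttering and equals the mark of the path's start vertex $\pi_1 = v$), iterated from $d$ down to $j$. Your explicit treatment of the degenerate case $\mu_d(v) = \mu_d(u) = \emptyset$ (a vertex outside $\bar{g}$ that cannot reach $\bar{g}$) is a small point the paper's one-line argument silently glosses over, and you close it correctly.
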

\noindent
Indeed, observe that for all nodes $v$, all strings in $\mu_d(v)$ start with the letter $\mu_{d-1}(v)$. Thus $\mu_d(v) = \mu_d(u)$ implies that $\mu_{d-1}(v) = \mu_{d-1}(u)$. In other words, if $u \sim_d v$ then $u \sim_{d-1} v$, and thus also $u \sim_j v$ for $0 \leq j < d$. 

\medskip
\noindent
{\em $d$-contraction $\dcon{G}{\bar{g}}$.} 
Define an LTS $ \dcon{G}{\bar{g}}$ called the $d$-contraction of $\tokengraph{G}{\bar{g}}$ as follows.
The nodes of the contraction are the $\sim_d$-equivalence classes. Put an edge between $[u]_{\sim_d}$ and $[v]_{\sim_d}$ if there exists $u' \in [u]_{\sim_d}, v' \in [v]_{\sim_d}$ and an edge in $G$ from $u'$ to $v'$. The initial state is $[x]_{\sim_d}$ where $x$ is the initial vertex of $G$. The label of $[u]_{\sim_d}$ is defined to be  $\Lambda(u)$ --- this is well-defined because, by Lemma~\ref{lem:refines}, $\sim_d$ refines $\sim_0$.

In the full version~\cite{FullVersion} we prove that $\tokengraph{G}{\bar{g}}$ is $\dCTLstarmX$-equivalent to its $d$-contraction, and that the number of different $d$-contractions is finite, and depends on $k$ and $d$. \qed


\section{Cutoffs for $k$-index \CTLstarmX\ and Concrete Topologies}
\label{sec:explicit_cutoffs}

\todo{give time complexity of decision procedure?}

\todo{define uni-ring, bi-ring, clique, star-topology}

The following two theorems answer Question $2$ from the introduction,
regarding the PMCP for specifications from \piCTLstarmX.

First, the PMCP is undecidable for certain (pathological) parameterized topologies $\pgraph$ and specifications from \piCTLstarmX.
\begin{theorem} \label{thm:undecprenex}
There exists a process $P \in \Psimptok$, and parameterized topologies \pgraph, \pgraphH, such that the following PMCPs are undecidable 
\begin{enumerate}
\item $\PMCP_\pgraph(\{P\},\piLTLmX )$.
\item $\PMCP_\pgraphH(\{P\},\twoCTLmX )$.
\end{enumerate}
Moreover, $\pgraph$ and $\pgraphH$ can be chosen to be computable sets of topologies.
\end{theorem}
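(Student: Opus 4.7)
The plan is to reduce from the non-halting problem for deterministic $2$-counter (Minsky) machines, which is undecidable. The main idea parallels the undecidability proof for direction-aware systems (Theorem~\ref{thm:DA}) referenced in the introduction, except that now, lacking directions, we use the pathological shape of the topologies in $\pgraph$ (resp.\ $\pgraphH$) to impose the necessary structure on token movement, and we use the expressive power of the logic (either a long quantifier prefix, or branching over paths) to enforce the correctness of the simulation.

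For part~(1), I would first fix $\pgraph$ to be a computable family of ``simulation-skeleton'' topologies: roughly, each $G \in \pgraph$ is built from a central control gadget plus two long chains whose lengths can encode the two counter values of some configuration, with a few auxiliary edges allowing the token to enter/leave each chain in a controlled way. The process $P \in \Psimptok$ has a small, fixed state space sufficient to simulate, on each token arrival, one micro-step of an arbitrary $2$-counter machine (e.g., mark/unmark a cell, detect a chain endpoint, record a control-state hint). Given a $2$-counter machine $M$, I would construct a prenex $\piLTLmX$ formula $\varphi_M$ whose $\forall$-quantifiers range over candidate ``transition witnesses'' (tuples of processes that are supposed to mark the configurations before and after an instruction), and whose $\exists$-quantifiers pin down the initial configuration and the putative halting process. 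The body of $\varphi_M$ is a fixed-size \LTLmX\ formula that hardcodes the finite transition table of $M$ and asserts that if every universally chosen witness is consistent with $M$'s transitions, then the halt state is never reached. Thus $\varphi_M$ holds on all $P^G$ (for $G \in \pgraph$) iff $M$ has no halting computation. The key point enabling this is that $\piLTLmX$ permits arbitrarily large quantifier prefixes, so a single formula can name enough ``position'' processes to reconstruct a run of $M$ on a topology whose size matches that run's length.

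For part~(2), only two index quantifiers are available, so the position-naming trick above is impossible; instead, the \emph{branching} structure of CTL must carry the unbounded information. I would design $\pgraphH$ as a family of topologies where, at each ``configuration vertex,'' the token has a nondeterministic choice between several outgoing edges that correspond to the different possible instructions of $M$ (increment/decrement/zero-test on each counter). The process $P$ again has a small fixed state space used to mark local configurations. Given $M$, the $\twoCTLmX$ formula $\varphi_M$ quantifies over two processes --- intuitively, ``current'' and ``next'' configuration carriers --- and then uses $\ctlA$- and $\ctlE$-path quantifiers to express the following: on every path, whenever the token is at the ``current'' process in control state $q$, there exists a (consistent) continuation to the ``next'' process in a control state compatible with $M$'s instruction at $q$; and along every path, the halting control state never appears. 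Because the two-quantifier logic can compare only pairs of processes at a time, the hard step is arranging the topology so that ``current'' and ``next'' can iterate along the computation without needing more than two simultaneous indices --- this is achieved by making the token cycle through the configuration vertices, using CTL path quantifiers to verify each pairwise transition.

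The main obstacle, in both parts, is ensuring that the topology-plus-process encoding is \emph{faithful}: every execution of $P^G$ that appears to encode a run of $M$ must in fact encode one correctly, so that spurious ``cheating'' runs cannot falsify (resp.\ satisfy) $\varphi_M$ independently of whether $M$ halts. This is delicate because processes in $\Psimptok$ are direction-unaware and so cannot by themselves police where the token goes; all such policing must come from the rigidity of the chosen topologies and from additional universal-quantifier guards in $\varphi_M$ that detect and rule out inconsistent runs (e.g., a decrement taken when the relevant counter chain is empty). Once faithfulness of the encoding is established, the correctness of the reduction, and hence undecidability, follow routinely, and the moreover-clause about computability of $\pgraph$ and $\pgraphH$ is immediate from their explicit description. \qed
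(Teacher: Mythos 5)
Your overall instinct---reduce from non-halting of two-counter machines---matches the general strategy of the paper's negative results, but the mechanism you propose for carrying out the reduction conflicts with the paper's own positive machinery. The central problem is that you place the burden of simulating $M$ on the process template and on the temporal behaviour of $P^G$ (the process ``simulates one micro-step'', ``marks cells'', ``detects a chain endpoint''), with counter values encoded as chain lengths that the formula is supposed to audit. For direction-unaware templates this cannot work: a process in $\Psimptok$ has no access to its position in the topology and cannot route or even sense the token's provenance, and, more decisively, Theorem~\ref{theorem:reduction} shows that the $\dCTLstarmX$-theory of $\abstraction{P^{G}}{\bar{g}}$ is a function of the $\dCTLstarmX$-theory of $\tokengraph{G}{\bar{g}}$ alone---so \emph{no} choice of $P\in\Psimptok$ adds distinguishing power beyond what the bare graph LTS already has. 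Moreover, for part (1) your matrix is $\LTLmX$, i.e.\ path-quantifier depth $d=1$, and by Theorem~\ref{thm:finiteness} a $k(M)$-indexed such formula sees $(G,\bar{g})$ only up to its $1$-contraction, under which your long counter chains collapse to a bounded object: the formula simply cannot read chain lengths beyond roughly $k(M)$, so it cannot verify transitions of a computation whose counter values are unbounded. Your own faithfulness worry is therefore not a ``delicate'' detail but the place where the plan breaks.

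The missing idea is to let the \emph{membership predicate of the computable class} $\pgraph$ do the unbounded computational work, rather than the formula or the process. Since $\pgraph$ may be any computable set of topologies, one can take $\pgraph$ to consist precisely of graphs that validly encode (finite, halting) computations of two-counter machines---validity being checkable by simulation, so $\pgraph$ is computable---together with a trivial token-passing $P$. Then $\varphi_M$ need only make a \emph{bounded-size} structural assertion, namely that the program gadget of the given topology does not spell out $M$'s instruction table; this needs $O(|M|)$ index quantifiers but never has to inspect the (arbitrarily long) configuration chains, and $M$ fails to halt iff $\varphi_M$ holds on every member of $\pgraph$. This is also the only reading consistent with the paper's remark that the result coexists with Theorem~\ref{thm:cutoff_kctld}: for each fixed $k$ a cutoff exists, and undecidability arises solely because one cannot compute which connection patterns are realized by $\pgraph$. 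The same criticism applies to part (2): the two prenex index quantifiers are fixed once for the whole formula and cannot ``iterate'' along a run; the unbounded traversal must come from nesting $\ctlA/\ctlE$ over the token-movement structure of members of $\pgraphH$ (which is exactly the resource that the $d$-bounded fragment removes), again with the computation itself certified by $\pgraphH$'s membership predicate. As written, your proposal leaves the decisive step---how a depth-one, fixed-prefix formula over a nondeterministically wandering token certifies an unboundedly long computation---unresolved.
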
\sj{add proof idea?\\yes ...}
\sr{also, exact quantification shape is known and should be stated}

Second, PMCP is decidable for certain (regular) parameterized
topologies and specifications from \universalpiCTLstarmX. This
generalizes results from Emerson and Namjoshi \cite{EN95} who show
this result for \allkCTLstarmX\ with $k = 1,2$ and uni-directional ring topologies. By Remark~\ref{rem:globalfairness}, these cutoffs apply under the assumption that we ignore runs that do not visit every process infinitely often.

\begin{theorem}
\label{thm:explicit_cutoffs}
If $\pgraph$ is as stated, then $\PMCP_\pgraph(\Psimptok, \allkCTLstarmX)$ has the stated cutoff.
\begin{enumerate}
\item If $\pgraph$ is the set of uni-directional rings, then $2k$ is a cutoff.
\item If $\pgraph$ is the set of bi-directional rings, then  $2k$ is a cutoff.
\item If $\pgraph$ is the set of cliques, then $k+1$ is a cutoff.
\item If $\pgraph$ is the set of stars, then $k+1$ is a cutoff.
\end{enumerate}
Consequently, for each $\pgraph$ listed, $\PMCP_\pgraph(\Psimptok,\universalpiCTLstarmX)$ is decidable.
\end{theorem}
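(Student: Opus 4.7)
The plan is to instantiate Theorem~\ref{thm:compfinicutoff} with $\TL = \CTLstarmX$: the \textsc{reduction} property holds by Theorem~\ref{theorem:reduction} combined with Remark~\ref{rem:reduction}, and \textsc{finiteness} holds by Theorem~\ref{thm:finiteness}. For each listed $\pgraph$ I would strengthen the abstract cutoff-existence argument by exhibiting, for every $G \in \pgraph$ and every $k$-tuple $\bar{g}$ of distinct vertices of $G$, an explicit small topology $G' \in \pgraph$ of size at most the claimed cutoff together with a $k$-tuple $\bar{g}'$ such that $\tokengraph{G}{\bar{g}} \equiv_{\CTLstarmX} \tokengraph{G'}{\bar{g}'}$. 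By \textsc{reduction} this yields $\abstraction{P^G}{\bar{g}} \equiv_{\CTLstarmX} \abstraction{P^{G'}}{\bar{g}'}$ for every $P \in \Psimptok$; since the index quantifiers are all universal, every failing (resp.\ satisfying) instance in the large system has a corresponding instance in the small system, which gives the cutoff.

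The contractions would be constructed as follows. For uni- and bi-directional rings, the $k$ marked vertices split the cycle into $k$ arcs of unmarked vertices; I would collapse each non-empty arc to a single unmarked vertex, yielding a ring of size at most $2k$. For cliques, retain the $k$ marked vertices and add a single unmarked vertex iff $|V_G| > k$, giving a clique of size at most $k+1$. For stars, retain the center (preserving whether it is marked) together with the marked leaves, plus at most one representative unmarked leaf, for a total of at most $k+1$ vertices; the initial vertex is inherited from $G$.

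The main obstacle is verifying $\tokengraph{G}{\bar{g}} \equiv_{\CTLstarmX} \tokengraph{G'}{\bar{g}'}$ for each construction. My plan is to exhibit a stuttering bisimulation in the sense of \cite{SKS}: a long walk through an unmarked arc (respectively, through the unmarked vertices of a clique or the unmarked leaves of a star) in $G$ is matched by repeatedly looping at the single unmarked representative in $G'$, and vice versa. The fairness restriction of Remark~\ref{rem:globalfairness}, which demands that every vertex is visited infinitely often, is what makes the matching work for fair infinite paths. Stuttering-invariance of $\CTLstarmX$ then yields the required equivalence. One subtle point is the handling of existential path quantifiers \emph{inside} the universal index prefix: the contraction must not introduce branching possibilities absent from $G$, which is why we keep exactly one unmarked witness per contracted region rather than zero. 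Finally, decidability of $\PMCP_\pgraph(\Psimptok,\universalpiCTLstarmX)$ for each listed $\pgraph$ follows by Remark~\ref{rem:starTL}: given a prenex formula, extract $k$, compute the cutoff $c(k)$ (which is $2k$ or $k{+}1$), and model-check $P^{G'}$ for the finitely many $G' \in \pgraph$ with $|V_{G'}| \leq c(k)$.
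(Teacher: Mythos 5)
Your proposal follows essentially the same route as the paper: for every $G\in\pgraph$ and every $k$-tuple $\bar g$ you exhibit a small $G'\in\pgraph$ (of size at most the stated cutoff) with $\tokengraph{G'}{\bar g'}$ equivalent to $\tokengraph{G}{\bar g}$, push this through Theorem~\ref{theorem:reduction}, and use the purely universal prefix so that per-tuple matching suffices; your collapses (arcs of unmarked ring vertices, the unmarked remainder of a clique, the unmarked leaves of a star) are exactly the $d$-contractions $\dcon{G}{\bar g}$ the paper computes. Two remarks. First, you cannot invoke Theorem~\ref{thm:finiteness} for $\TL=\CTLstarmX$ itself: that theorem is stated per nesting depth $d$, and finiteness genuinely fails for unbounded $d$ over arbitrary topologies (this is the content of Theorem~\ref{thm:CTTVnocutoff}). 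The paper sidesteps this by computing the markings $\mu_d$ explicitly for the concrete topologies and observing that $\sim_d$ coincides with $\sim_1$ for all $d>1$, so the contraction is independent of $d$ and the bound applies to full $\allkCTLstarmX$; your direct stuttering-bisimulation argument on the graph LTS achieves the same end (full $\CTLstarmX$-equivalence in one shot) and is an acceptable substitute, but the appeal to the finiteness theorem should be dropped as it is neither available nor needed once you construct the representatives explicitly. Second, your star construction yields $k+2$ vertices when the center is unmarked (center, $k$ marked leaves, and one unmarked representative leaf), exceeding the claimed cutoff of $k+1$; you should drop the unmarked witness leaf in that case, since a detour through an unmarked leaf destutters to a pause at the (unmarked) center and so introduces no new branching behavior.
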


This theorem is proved following Remark~\ref{rem:starTL}: given $k,d$, we compute a set $G_1,\dots,G_N \in \pgraph$ such that every $B_G$ for $G \in \pgraph$ is logically equivalent to some $B_{G_i}$, where the Boolean formula $B_G$ is defined as $\bigwedge_{\bar{g}} q_{\dcon{G}{\bar{g}}}$. To do this, note that $B_G$ is logically equivalent to $B_H$ if and only if $\{\dcon{G}{\bar{g}} : \bar{g} \in V_G\} = \{\dcon{H}{\bar{h}} : \bar{h} \in V_H\}$ (this is where we use that there is no quantifier alternation). So it is sufficient to prove that, if $c$ is the stated cutoff, 
\[
|V_G|,|V_H| \geq c \implies
\{\dcon{G}{\bar{g}} : \bar{g} \in V_G\} = \{\dcon{H}{\bar{h}} : \bar{h} \in V_H\}
\]




To illustrate how to do this, we analyze the case of uni-directional rings and cliques (the other cases are similar).

\medskip
\noindent
\emph{Uni-directional rings.}
Suppose $\pgraph$ are the uni-directional rings and let $G \in \pgraph$. Fix a $k$-tuple of distinct elements of $V_G$, say $(g_1, g_2, \dots,g_k)$. Define a function $f:V_G \to \{g_1, \dots, g_k\}$ that maps $v$ to the first element of $\bar{g}$ on the path $v, v+1, v+2, \dots$ (addition is mod $|V_G|$). In particular $f(g_i) = g_i$ for $i \in [k]$. In the terminology of the proof of Theorem~\ref{thm:finiteness}, $X(v)$ consists of the simple path $v,v+1,\cdots,f(v)$.

We now describe $\mu_d$.  Clearly $\mu_d(g_i) = \{ \mu_{d-1}(g_i)\}$. By induction on $d$ one can prove that if $v \not\in \bar{g}$ with $f(v) = g_j$ then $\mu_d(v) = \{\mu_{d-1}(v) \cdot \mu_{d-1}(g_j)\}$. So for every $d > 1$, the equivalences $\sim_d$ and  $\sim_1$ coincide.
\begin{center}

\begin{tabular}{>{$}l<{$}|| >{$}c<{$} | >{$}c<{$} | >{$}c<{$} | >{$}c<{$}}
d 		& 0 	& 1 & 2 & \dots \\

\hline
\hline

 \mu_d(v)  \text{ for } v = g_i	&  \{i\}  & \{ \{ i \} \} & \{ \{ \{ i \} \} \} & \dots \\
 \mu_d(v) \text{ if } v \not\in \bar{g} \text{ and } f(v) = g_j 	&  \emptyset  & \{ \emptyset \cdot \{j\} \} &\{  \{ \emptyset \cdot \{j\}\}  \cdot \{\{ j \} \} \} & \dots \\
\end{tabular}
\end{center}
 Thus for every $k \in \Nat$, 
 the  $d$-contraction
   $\dcon{G}{\bar{g}}$ is a ring of size at most $2k$ (in particular, it is independent of $d$). In words, the $d$-contraction of $G$ is the ring resulting by identifying adjacent elements not in $\bar{g}$. It is not hard to see that if $G,H$ are rings such that $|V_G|,|V_H| \geq 2k$ then for every $\bar{g}$ there exists $\bar{h}$ such that $\dcon{G}{\bar{g}} = \dcon{H}{\bar{h}}$.

\todo{compare contraction of rings with EN95}
\medskip
\noindent
\emph{Cliques.}
Fix $n \in \Nat$. Let $G$ be a clique of size $n$. That is: $V_G = [n]$ and $(i,j) \in E_G$ for $1 \leq i \neq j \leq n$. Fix a $k$-tuple of distinct elements of $V_G$, say $(g_1, g_2, \dots,g_k)$.  We now describe $\mu_d(v)$. Clearly $\mu_d(g_i) = \{\mu_{d-1}(g_i)\}$ and for $v \not\in \bar{g}$ we have $\mu_d(v) = \{ \mu_{d-1}(v)\cdot \mu_{d-1}(j) : j \in [k]\}$. So for every $d > 1$, the equivalences $\sim_d$ and  $\sim_1$ coincide, and the $d$-contraction $\dcon{G}{\bar{g}}$ is the clique of size $k+1$. 
In words, the $d$-contraction of $G$ results from $G$ by identifying all vertices not in $\bar{g}$.
 It is not hard to see that  if $G,H$ are cliques such that $|V_G|,|V_H| \geq k+1$ then for every $\bar{g}$ (of size $k$) there exists $\bar{h}$ such that $\dcon{G}{\bar{g}} = \dcon{H}{\bar{h}}$.

%
%
%

\begin{remark} \label{rem:kjb_error}
For cliques and stars, $k+1$ is also a cutoff for $\kCTLstarmX$.
Also, $2k$ is \emph{not} a cutoff for uni-rings and $\kLTLmX$ 
as stated in~\cite[Corollary 2]{KJB13}. To see this, let $tok_i$
express that the process with index $i$ has the token, and $adj(k,i):=
tok_i \rightarrow tok_i \ltlU tok_k 
\lor 
tok_k \rightarrow tok_k \ltlU tok_i$. Then
the formula 
$\exists{i}\exists{j}\forall{k}.\ adj(k,i) \lor adj(k,j)$,
holds in the ring of size $6$, but not $7$.
\end{remark}

\section{There are No Cutoffs for Direction-Aware Systems}

%

In the following, we consider systems where processes can choose which
directions are used to send or receive the token, i.e., process
templates are from $\Psnd$, $\Prcv$, or
$\Prcvsnd$. 
Let \textbf{B} be the
parameterized topology of all bi-directional rings, with directions
${\sf cw}$ (clockwise) and ${\sf ccw}$ (counter-clockwise).
The following theorem answers Question $3$ from the introduction.

\blackout{
\begin{proposition}
For every $\pgraph$, $\PMCP_{\pgraph}(\Prcv,\piLTLmX)$ is decidable 
in case the processes $\Prcv^*$ {\bf sense} where the token comes from, 
but can not choose where to send it.
\end{proposition}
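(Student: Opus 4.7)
The plan is to extend the method of Section~\ref{sec:method} to a direction-labeled graph LTS and then apply the \textsc{Reduction}/\textsc{Finiteness} template. I read $\Prcv^*$ as the subset of $\Prcv$ in which, at every local state with at least one enabled receive, a receive transition is enabled for \emph{every} direction in $\dirset_{\trcv}$; the sensed direction only determines the post-state, not the feasibility of receiving. Under this assumption the admissible token movements in $P^G$ coincide with those in the direction-unaware case, and the only extra bookkeeping is the direction-dependent post-state of the recipient.

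First, define the enriched graph LTS $\tokengraph{G}{\bar{g}}^{\dir}$ with state set $V_G$, initial vertex at the token origin, transition $\trans{v}{w}{d}$ whenever $(v,w) \in E_G$ and $\dir_{\trcv}(v,w) = d$, and vertex labels by nominals exactly as in Section~\ref{sec:method}. Second, re-prove \textsc{Reduction} in this enriched form: if $\tokengraph{G}{\bar{g}}^{\dir} \equiv_{\LTLmX} \tokengraph{G'}{\bar{g}'}^{\dir}$, then $\abstraction{P^{G}}{\bar{g}} \equiv_{\LTLmX} \abstraction{P^{G'}}{\bar{g}'}$ for every $P \in \Prcv^*$. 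The argument mirrors Theorem~\ref{theorem:reduction}: project a run of $P^G$ onto a direction-labeled path $\rho$, use stutter-equivalence of $\LTLmX$ to find an equivalent $\rho'$ in $\tokengraph{G'}{\bar{g}'}^{\dir}$, and lift $\rho'$ back to a run of $P^{G'}$ by matching each direction-labeled edge with a receive transition of $P$ carrying the same direction label.

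Third, prove enriched \textsc{Finiteness}: for fixed $k$ and the fixed finite set $\dirset_{\trcv}$, the number of $\equiv_{\LTLmX}$-classes of pointed direction-labeled graph LTSs with $k$ nominals is finite and effectively computable via a standard $\LTLmX$ tableau construction over state alphabet $2^{\{1,\dots,k\}}$ and edge alphabet $\dirset_{\trcv}$. Theorem~\ref{thm:compfinicutoff}, suitably re-stated for enriched graph LTSs, then yields for each $k$ a cutoff for $\PMCP_\pgraph(\Prcv^*,\kLTLmX)$, and effectiveness of the representatives upgrades this via Remark~\ref{rem:starTL} to decidability of the full-prenex problem $\PMCP_\pgraph(\Prcv^*,\piLTLmX)$.

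The main obstacle is the lifting step of the enriched \textsc{Reduction}: we must guarantee that each direction-labeled edge in the graph LTS actually corresponds to a feasible receive transition of $P$. This is exactly what the non-blocking sensing-only restriction provides; for general $\Prcv$ the recipient can block the chosen direction and the lifting breaks, in agreement with the undecidability of Theorem~\ref{thm:DA}. A minor subtlety is preserving the fairness assumption (the token visits every process infinitely often) under lifting, which is immediate once the direction-labeled projections are matched stutter-wise.
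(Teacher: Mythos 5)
Your overall strategy is sound and lands on the same key insight as the paper's own (sketched) argument, but the two proofs are organized differently. The paper does \emph{not} re-run the abstract \textsc{reduction}/\textsc{finiteness} template on an enriched graph LTS; instead it directly redefines the finite contraction (in the style of the connection topologies of Clarke et al.~\cite{CTTV04}): for every pair of quantified vertices $v_i,v_j$ and every incoming direction ${\sf d}$ it adds an explicit hub vertex $H_{i,j,{\sf d}}$, present iff the token can travel from $v_i$ through unquantified vertices and finally enter $v_j$ via ${\sf d}$, plus a two-vertex escape gadget $X_i \leftrightarrow Y_i$ for the case where the token can leave $v_i$ and never return to $\bar{g}$; hub and escape processes always relinquish the token and accept it from every direction. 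Both proofs hinge on exactly the observation you isolate: since $\Prcv^*$ processes can neither choose where to send nor block any receiving direction, every path of the token through $G$ is realizable in $P^G$, so the only data that matters is pairwise reachability annotated with the direction of entry into a quantified vertex. Your version is more uniform and slots into Theorem~\ref{thm:compfinicutoff}; the paper's version is more concrete and immediately yields small explicit representatives (hence effective cutoffs and, via Remark~\ref{rem:starTL}, decidability of the full prenex problem), which your appeal to ``a standard tableau construction'' for \textsc{finiteness} leaves comparatively vague -- the honest finiteness argument is the marking construction of Theorem~\ref{thm:finiteness} extended by a finite direction annotation, not a tableau.

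One technical point needs repair. As defined in this paper, $\equiv_{\LTLmX}$ compares LTSs via the state-label sequences of their paths; transition labels are invisible to the logic. So your hypothesis $\tokengraph{G}{\bar{g}}^{\dir} \equiv_{\LTLmX} \tokengraph{G'}{\bar{g}'}^{\dir}$, with directions carried only on edges, collapses to the direction-unaware equivalence and is too weak: two topologies with identical connectivity but different $\dir_{\trcv}$ labels on the edges entering $g_1$ would be equivalent in your sense, yet a $\Prcv^*$ process at $g_1$ may branch on the sensed direction and set an observable atom, so $\abstraction{P^{G}}{\bar{g}} \not\equiv_{\LTLmX} \abstraction{P^{G'}}{\bar{g}'}$. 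The fix is routine -- fold the incoming direction into the state label of the target vertex (or define the equivalence on destuttered state-action sequences) -- but without it the \textsc{reduction} step as literally stated is false. With that repair, and with the finiteness argument made explicit along the lines above, your proof goes through.
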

\begin{proof}[Sketch]
Note that the processes can not neither block reception of the message from any direction
nor hold the token forever.

We extend Theorem~\ref{cttv} by redefining the contraction of graph $(G,v_1,\cdots,v_k)$ as follows.
\begin{enumerate}
\item For each pair $v_i, v_j$ and every incoming direction $d$ into $v_j$, add a hub $H = H_{i,j,v}$ with edges $v_i$ to $H$ (labeled with all in-directions) and $H$ to $v_j$ labeled with in-direction $d$ iff there is a path in $G$ from $v_i$ to some $x \neq v_1, \cdots, v_k$ and an edge from $x$ to $v_j$ with in-direction $d$. 
\item For each $v_i$ for which there is an infinite path leaving $v_i$ and never hitting any vertex in $\vec{v}$ again, 
add vertices $X_i,Y_i$ and edges, labeled with all in-directions, from $v_i  \to X_i \leftrightarrow Y_i$.
\end{enumerate}

Every process at a vertex $H_{i,j,d},X_i$ or $Y_i$ works as follows: if it has the token it must give it up; if it does not have the token it receives it with all in-directions.

Obviously a computation in which the token moves from $v_i$ to $v_j$ entering $v_j$ via $d$ can be simulated by pushing the token through hub $H_{i,j,d}$. Conversely, a computation from $v_i$ to $H_{i,j,d}$ to $v_j$ (enters $v_j$ via $d$ by construction) can be simulated by a computation in the original that goes from $v_i$ to $x$ and from $x$ to $v_j$ with in-direction $d$. 

Similarly, a computation in which the token leaves $v_i$ never to hit any vertex in $\vec{v}$ is simulated by pushing the token from $v_i$ to $X_i$ and then to shuttle it between $X_i$ and $Y_i$. Conversely, if the token goes from $v_i$ to $X_i$, then in the original we 

The key point, that distinguishes this case from being able to choose where to send, is that whenever a process gives up the token, any one of its neighbors may be the recipient of the token. Thus: if a path in $G$ exists from $v_i$ to $v_j$ and the token starts with $v_i$ there is always a computation that will result in the token passing through that path.
\end{proof}
}

\begin{theorem} \label{thm:DA}
\begin{enumerate}
\item $\PMCP_{\textbf{B}}(\Prcvsnd, \forall\textsf{-LTL}\backslash\textsf{X})$ is undecidable.
\item For $\mathcal{F}$ equal to $\nineallLTLmX$ or $\nineexistsCTLmX$,
and $\P \in \{\Psnd , \Prcv \}$, 
there exists a parameterized topology $\pgraph$ such that $\PMCP_{\pgraph}(\P, \mathcal{F})$ is undecidable.
\end{enumerate}
\end{theorem}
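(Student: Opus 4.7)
The plan is to reduce the non-halting problem for deterministic two-counter Minsky machines to each claim (and the halting problem for the existential fragment in part~(2)). Since non-halting is $\Pi^0_1$-complete and halting is $\Sigma^0_1$-complete, undecidability of the indicated PMCPs will follow. Given a machine $M$, I would construct a process template $P \in \Pproctemp$ and an indexed formula $\varphi$ in the appropriate fragment such that $P^G \models \varphi$ for every $G$ in the indicated parameterized topology iff $M$ does not halt (dually for halting). The two counters are encoded by the configuration of the token-passing system, and any deviation from the intended faithful simulation is trapped in a ``protocol violation'' sink that is harmless for $\varphi$.

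For part~(1), the topology is a bi-directional ring and processes lie in $\Prcvsnd$, so each process chooses both the direction ({\sf cw} or {\sf ccw}) in which it sends the token and the direction from which it is willing to receive. A designated controller process simulates the finite-state control of $M$. The two halves of the ring encode the two counters: the value of $c_1$ (resp.\ $c_2$) is the number of consecutive processes on the {\sf cw}-side (resp.\ {\sf ccw}-side) of the controller that carry a distinguished mark. Increment and decrement are implemented by forwarding the token outward or inward along the appropriate half and flipping the boundary mark, while zero testing is realized by a short directional handshake that completes only when the first neighbour on the relevant side is the controller itself. Full direction-awareness is crucial: a process may refuse to accept a probe token from the ``wrong'' direction, so cheating runs are forced into the violation sink. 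The formula $\varphi := \forall x.\, \ltlG \neg \textit{halt}_x$, conjoined if necessary with safety clauses excluding runs that sit forever in the violation sink, lives in $\forall\textsf{-}\LTLmX$ and encodes non-halting of $M$; the single universal index quantifier is used to assert the property uniformly regardless of where the controller sits in the ring.

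For part~(2), only one of send/receive is direction-aware, so the bi-ring encoding fails: the token can no longer be reliably routed by control alone. I would instead hand-pick a parameterized topology $\pgraph$ that hardwires routing into the graph, using direction labels asymmetrically so that the one direction the process still controls suffices to implement the zero test. The nine index quantifiers bind distinguished vertices---a controller, the ``head'' and ``tail'' of each counter lane, and a few synchronizers used by the zero-test handshake---and use the $x \in Y$ and $x \in E(y)$ guards to assert the correct adjacency relations among these vertices, with the remaining structural constraints on $G$ built into $P$ so that any topology not of the intended shape immediately violates the protocol. The $\nineallLTLmX$ case then reduces from non-halting as above, while the $\nineexistsCTLmX$ case reduces from halting and asserts the existence of a $\CTLmX$ path reaching $\textit{halt}$ while respecting the protocol. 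The $\Psnd$ and $\Prcv$ constructions are dual: one puts the essential asymmetry on the send side, the other on the receive side.

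The hard part will be engineering a faithful zero test. Increment and decrement are routine forwarding protocols; the issue is to prevent spurious runs that pretend a positive counter is zero, or vice versa. In part~(1), the two ring directions give a handshake that only an empty side can complete, because a marked process one step in will refuse the probe token from the controller's direction. In part~(2), the same asymmetry must be re-created from the topology together with the single controllable direction, and the nine index quantifiers are needed precisely to anchor enough vertices that the formula can veto any cheating attempt. Verifying that no combination of ill-timed token moves, misplaced controllers, or pathological topologies can subvert the protocol is the technical heart of the construction and will require a careful case analysis of the template's transition relation.
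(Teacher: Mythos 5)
Your overall architecture coincides with the paper's: reduce 2CM non-halting to the PMCP by letting one process simulate the finite control while the remaining processes collectively store bounded counter values, so that quantifying over all system sizes recovers unbounded counters. But two of your mechanisms have concrete problems. First, in part~(1) you rely on a marked process ``refusing to accept a probe token from the wrong direction'' to force cheating runs into a violation sink. Refusing to receive does not redirect a run anywhere; it merely disables that token-passing transition, and the run takes some other enabled transition (or deadlocks). Nor can the formula filter bad runs here: with a single universal index quantifier you cannot simultaneously name the controller (where the halting atom would be observed) and the memory process that deviated. The part-(1) construction therefore has to be \emph{inherently} faithful. The paper achieves this by using the direction from which the token returns to the controller as a one-bit information channel: the controller encodes the command by circulating the token clockwise (each memory process increments a command counter mod~$6$), triggers execution by sending it counter-clockwise, and then reads off the outcome (executed / nonzero vs.\ failed / zero) from whether the token comes back clockwise or counter-clockwise. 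Relatedly, your zero test (``completes only when the first neighbour on the relevant side is the controller itself'') is inconsistent with your own prefix encoding, under which zero means the first neighbour is an \emph{unmarked memory process}.

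Second, for part~(2) you spend the nine quantifiers mostly on asserting adjacency constraints among distinguished vertices. Their actual role in the paper is different and is the crux of the argument: a memory process in $\Psnd$ cannot tell who sent it the token, hence cannot know which command is intended, so it \emph{guesses} non-deterministically; the nine variables name the controller and eight acknowledgment vertices (``counter $c$ is zero'', ``is not zero'', ``was incremented'', ``was decremented'' for $c\in\{1,2\}$), and the formula discards every run in which the controller receives the token back from an acknowledgment vertex other than the one matching the issued command. Without this guess-and-filter mechanism it is unclear how your construction copes with the fact that the uncontrolled direction destroys the controller's ability to address a specific memory process or to learn which one answered. The rest of your plan (dual treatment of $\Psnd$ and $\Prcv$, reducing the existential CTL fragment from halting) is consistent with the paper's outline.
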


\paragraph{Proof Idea.} 
We reduce the non-halting problem of two-counter machines
(2CMs) to the PMCP. The idea is that one process, the
\emph{controller}, simulates the finite-state control of the 2CM. The other
processes, arranged in a chain or a ring, are \emph{memory processes},
collectively storing the counter values with a fixed memory per
process. This allows a given system to simulate a 2CM with bounded
counters. Since a 2CM terminates if and only if it terminates for some
bound on the counter values, we can reduce the non-halting problem of
2CMs to the PMCP. The main work is to show that the
controller can issue commands, such as `increment counter $1$' and
`test counter $1$ for zero'. 
We give a detailed proof sketch for part $1$ of the theorem, and then
outline a proof for part $2$.

\noindent \emph{1. $\forall\textsf{-LTL}\backslash\textsf{X}$ and
  $\Prcvsnd$ in bi-directional rings}.

\noindent The process starting with the token becomes the
controller, all others are memory, each storing one bit for each
counter of the 2CM. The current value of a counter $c$ is the total
number of corresponding bits ($c$-bits) set to $1$. Thus, a system with $n$
processes can store counter values up to $n-1$. 

Fix a numbering of 2CM-commands, 
say 
0 $\mapsto$ `increment counter $1$', 
1 $\mapsto$ `decrement counter $1$',
2 $\mapsto$ `test counter $1$ for zero', etc.
Every process has a command variable that represents the command to
be executed when it receives the token from direction {\sf ccw}.

If the controller sends the token in direction {\sf cw}, the memory
processes will increment (mod $6$) the command variable, allowing the
controller to encode which command should be executed. Every process
just continues to pass the token in direction {\sf cw}, until it
reaches the controller again.


If the controller sends the token in direction {\sf ccw}, then the 
memory processes try to execute the command currently
stored. If it is an 'increment counter $c$' or 'decrement counter $c$'
command, the memory process tries to execute it (by
incrementing/decrementing its $c$-bit). If the process cannot execute
the command (because the $c$-bit is already $1$ for an increment, or
$0$ for a decrement), then it passes the token along direction {\sf ccw} and remembers that a
command is being tried. If the token reaches a memory process which can
execute the command, 
then it does so and passes the token back in
direction {\sf cw}. The processes that remembered that a command is
being tried will receive the token from direction {\sf cw}, and know
that the command has been successfully executed, and so will the
controller. 
 If the controller gets the token from {\sf ccw}, the command
failed. In this case, the controller enters a loop in which it
just passes the token in direction {\sf cw} (and no more commands are
executed).

If the command stored in the memory processes is a `test for zero
counter $c$', then the processes check if their $c$-bit is $0$. If
this is the case, it (remembers that a command is being tried and)
passes the token to the next process in direction {\sf ccw}. If the
token reaches a process for which the $c$-bit is $1$, then this
process sends the token back in direction {\sf cw}. Other memory
processes receiving it from {\sf cw} (and remembering that the command
is being tried), pass it on in direction {\sf cw}. In this case, the
controller will receive the token from {\sf cw} and know that counter
$c$ is not zero. On the other hand, if all memory processes store $0$
in their $c$-bit, then they all send the token in direction {\sf
  ccw}. Thus, the controller will receive it from {\sf ccw} and knows
that counter $c$ currently is zero. To terminate the command, it sends
the token in direction {\sf cw}, and all processes (which remembered
that a command is being tried), know that execution of this command is
finished.\sr{should probably point out that if test for zero failed
  then all processes have their pointer at the same command.} 

With the description above, a system with $n-1$ memory processes can
simulate a 2CM as long as counter values are less than $n$. Let $HALT$
be an atomic proposition that holds only in the controller's halting
states. Then
solving the PMCP for $\forall{i} \ltlG\neg HALT_i$ 
amounts to solving the non-halting problem of the 2CM.

\noindent \emph{2. $\nineallLTLmX$ and $\Psnd$}.

\noindent We give a proof outline. In this case
there are $2n$ memory processes, $n$ for each counter $c \in
\{1,2\}$. The remaining $9$ processes are special and called
`controller', 'counter $c$ is zero', `counter $c$ is not zero',
`counter $c$ was incremented', and `counter $c$ was decremented'. When
the controller wants to increment or decrement counter $c$, it sends
the token non-deterministically to some memory process for counter
$c$. When the controller wants to test counter $c$ for zero, it sends
the token to the first memory process. When a memory process receives
the token it does not know who
sent it, and in particular does not know the intended command. Thus,
it non-deterministically takes the appropriate action for one of the
possible commands. If its bit is set to $0$ then it either i) increments
its bit and sends the token to a special process `counter $c$ was
incremented', or ii) it sends the token to the next memory node in the
chain, or to the special process `counter $c$ is zero' if it is the
last in the chain. If its bit is set to $1$ then it either i) decrements its
bit and sends the token to a special process `counter $c$ was
decremented', or ii) sends the token to a special process `counter $c$
is not zero'. 

Even though incoming directions are not available to the processes, we
can write the specification such that, out of all the possible
non-deterministic runs, we only consider those in which the controller
receives the token from the expected special node (the
formula requires one quantified index variable for each of the special
nodes). So, if the controller wanted to increment counter $c$ it needs
to receive the token from process 'counter $c$ was incremented'. If
the controller receives the token from a different node, it means that
a command was issued but not executed correctly, and the formula disregards this run. Otherwise, the system of size $2n+1$ correctly simulates the 2CM
until one of the counter values exceeds $n$. 
\qed

\blackout{
\todo{old sketch}
\begin{proof}
Given a 2CM we build a topology $\pgraph$ (on Fig.\ref{fig:fig_piltl_cn_undec}), 
a process template $P \in \Psnd$,
and a formula $F$ of the form 
$\forall x \forall y_1 \cdots \forall y_8: \,  \ltlG\ValidRun \to \ltlG\neg HALT_x$, 
such that the given 2CM halts if and only if the answer to 
$\PMCP_\pgraph(\{P\},\{F\})$ is `No'.

On first receiving the token a process non-deterministically transitions into one of 11 components:
`controller' (at vertex $con$) that simulates the control of the 2CM, 
and for $c\in[2]$,
`memory $c$' (at vertices $m^c_i$) that stores the value of counter $c$, 
helpers $inc^c,dec^c,nz^c,z^c$ that just passes the token to $con$.
A special initialization phase together with formula $F$ ensures that 
a process at vertex $con$ is in `controller' component, 
processes at vertices $m^c_i$ are in `memory $c$' component, etc.

The `memory $c$' component 
stores a binary value $b$ that starts with $b = 0$.
If a process has $b = 0$ and has the token then the process can either
\begin{itemize}
\item send the token along the $INC^c$ edge and set $b = 1$, or 
\item send the token along the $M^c$ edge,  or 
\item in case the process is $m^c_n$ it can send the token along the $Z^c$ edge,
\end{itemize} 
and if a process has $b = 1$ and holds the token then the process can either 
\begin{itemize}
\item send the token along a $DEC^c$ edge and set $b = 0$, or 
\item send the token along the $NZ^c$ edge.
\end{itemize}

%
\noindent The controller process simulates the 2CM:
\begin{itemize}
\item If the command of the 2CM is \textit{`increment $c$'} 
then the controller sends the token along $M^c$ edge and waits 
for the token from edge $inc^c$. 
Although the controller cannot choose the token receiving direction, 
the formula part $\ltlG \ValidRun$ filters out runs with other receiving directions.


\item If the command is
\textit{`test $c$ for zero'}, 
then the controller sends the token along $TZ^c$ edge, 
and then guesses the result:
if it guessed `$c$ is zero', then it waits the token from $z^c$ node, 
otherwise -- from $nz^c$.
The formula part $\ltlG\ValidRun$ filters out runs with wrong guesses.
\end{itemize}

\iffig
\begin{figure}
 \begin{subfigure}[b]{0.49\textwidth}
    \centering
      \begin{tikzpicture}[->,
node distance = 1.5cm, 
auto, 
semithick, 
inner sep=.01cm,
bend angle=45]
    \tikzset{every state/.style={rectangle,rounded corners,minimum size=2em}}
    \tikzset{every edge/.append style={font=\small, right}}
    \tikzset{box state/.style={draw,rectangle,rounded corners,inner sep=.1cm}}
    
    \tikzset{SnakeStyle/.style = {snake,segment amplitude=.2mm,segment length=1mm,line after snake=2mm}}
    
    \node[state] (con) {$con$};
    \node[state] (inc) [below right of=con] {$inc^c$};
    \node[state] (dec) [right of=inc] {$dec^c$};
    \node[state] (nz) [right of=dec] {$nz^c$};
    \node[state] (z) [right of=nz] {$z^c$};
    
    \path (dec)  edge (con);
    \path (inc)  edge (con);
    \path (nz)  edge (con);
    \path (z)  [bend right=5] edge (con);
    
    \node[state] (m0) [below=2cm of con] {$m_1^c$};
    \node[box state] (dots) [below=1cm of m0] {$\ldots$};
    \node[state] (mn) [below=1cm of dots] {$m_n^c$};
    
    \path (con)  edge node {$TZ^c$} (m0);
    \path (con)  [bend right=35] edge (m0);
    \path (con)  [dotted, bend right=35] edge (dots);
    \path (con)  [bend right=35] edge node [near start] {$M^c$} (mn);

    \path (m0) edge node {$INC^c$} (inc);
    \path (m0) [bend right=15] edge (dec);
    \path (m0) [bend right=15] edge [near end, right] node {$DEC^c$} (dec);
    \path (m0) [bend right=15] edge [near end, right] node {$NZ^c$} (nz);
    \path (m0) edge [near start, left] node {$M^c$} (dots);
    
    \path (dots) edge [near start, left] node {$M^c$} (mn);
    
    \path (dots) [dotted] edge (inc);
    \path (dots) [dotted, bend right=15] edge (dec);
    \path (dots) [dotted, bend right=15] edge (nz);
    
    \path (mn) edge (inc);
    \path (mn) [bend right=15] edge (dec);
    \path (mn) [bend right=15] edge (nz);
    \path (mn) [bend right=15] edge node [near end] {$Z^c$} (z);
    
\end{tikzpicture}
    \caption{Edges used in the simulation of a 2CM.}
    \label{fig:fig_piltl_ny_undec_sim}
 \end{subfigure}
 \begin{subfigure}[b]{0.49\textwidth}
    \centering
      \begin{tikzpicture}[->,
node distance = 1.5cm, 
auto,
semithick, 
inner sep=.01cm,
bend angle=45]
    \tikzset{every state/.style={rectangle,rounded corners,minimum size=2em}}
    \tikzset{every edge/.append style={font=\small, right}}
    
    \tikzset{SnakeStyle/.style = {snake,segment amplitude=.2mm,segment length=1mm,line after snake=1mm}}
    
    \node[state] (con) {$con$};
    \node[state] (inc) [below of=con] {$inc^c$};
    \node[state] (dec) [right of=inc] {$dec^c$};
    \node[state] (nz) [right of=dec] {$nz^c$};
    \node[state] (z) [above of=nz] {$z^c$};
    

    \path (con)  edge[bend right=10] (inc);
    \path (con)  edge[bend right=5] (dec);
    \path (con)  edge[bend right=5] (nz) ;
    \path (con)  edge[bend right=5] (z)  ;

    \path (inc)  edge[bend right=10] (con);
    \path (dec)  edge[bend right=5] (con);
    \path (nz)   edge[bend right=5] (con);
    \path (z)    edge[bend right=5] (con);

\end{tikzpicture}
    \caption{Edges used in the initialization phase.}
    \label{fig:fig_piltl_ny_undec_init}
 \end{subfigure}
\caption{Topology $G$ used in Theorem~\ref{thm:NC}($c\in [2]$, edges to the `sink' are not shown)}
\label{fig:fig_piltl_ny_undec}
\end{figure}
\fi
\ak{use different colors for edges of different type.}
\ak{add labels to edges on fig(b)}

Intuitively, ``$\ltlG\ValidRun$ is true'' 
means that during the run almost all guesses of the processes were correct: 
the controller correctly guessed results of 
\textit{`test $c$ for zero'} commands,
and there was a memory process who correctly guessed the command
and sent the token to one of $inc^c,dec^c,nz^c,z^c$ 
(although some of the memory processes could wrongly guess the 
\emph{`test $c$ for zero'} command).

Consider how a simulation of an \emph{`increment $c$'} command works.
The controller sends the token along $M^c$ edge 
to a non-deterministically chosen memory node.
If the memory process has its $b=0$, 
then it guesses the command to execute: 
\emph{`increase $c$'} or \emph{`test $c$ for zero'}.  
If it guesses the former, then it sets $b=1$ and sends the token to $inc^c$ 
completing the command. 
If it guesses \emph{`test $c$ for zero'} then it sends the token along $M^c$ edge.
It is still possible to complete the command if the token end up 
in the memory process with $b=0$ that correctly guesses the command.
Otherwise, the token will be sent to one of $dec^c,nz^c,z^c$
nodes and $\ltlG \ValidRun$ will be falsified. 

Similarly works simulation of \emph{`test $c$ for zero'}. If the controller
guesses `$c$ is zero', then in order $\ltlG \ValidRun$ to be true, 
the token should go through all of the memory nodes, which is possible 
only if all the memory nodes have $b=0$ and correctly guess the command.
Note that the memory process having $b=1$ never sends the token along 
$M^c$ edge to the next memory process. \qed
\end{proof}
}

\blackout{
\begin{theorem}\label{thm:CN}
There exists a parameterized topology $\pgraph$ such that the problem $\PMCP_{\pgraph}(\Prcv,\piLTLmX)$ is undecidable.
\end{theorem}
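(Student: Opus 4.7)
The plan is to reduce the non-halting problem of two-counter machines (2CMs) to $\PMCP_{\pgraph}(\Prcv, \piLTLmX)$, following the same template as Theorem~\ref{thm:DA}(2) but adapted to the setting in which processes cannot choose where to send the token, yet can sense from which direction they receive it.

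First I would define the parameterized topology $\pgraph$. Each topology contains a distinguished \emph{controller} vertex $con$, and for each counter $c \in \{1,2\}$ a chain of $n$ \emph{memory} vertices $m_1^c, \dots, m_n^c$ together with four \emph{helper} vertices $inc^c, dec^c, z^c, nz^c$. Edges are added so that (i) $con$ has an edge to every memory vertex; (ii) every memory vertex has edges to each helper of its counter and to its successor in the chain; (iii) each helper has an edge back to $con$. The single sending direction is shared by all edges, but each of the finitely many helper-to-controller edges carries a distinct receiving direction $r_{inc^c}, r_{dec^c}, r_{z^c}, r_{nz^c}$, which will be the only way information flows back to the controller. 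The process template $P \in \Prcv$ has, upon first receiving the token, a non-deterministic choice of role (controller / memory-$c$ / helper); an initial handshake together with the formula's $Valid$ conjunct will pin each vertex to its intended role.

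Next I would describe how a system with $n$ memory vertices per counter simulates the 2CM with counter bounds $n$. The controller stores the current 2CM state and, between commands, holds the token. To execute \emph{increment $c$} it moves to a state that will accept a subsequent receive only from direction $r_{inc^c}$, and then sends the token. Because sending is non-deterministic, the token may reach any memory vertex $m_i^c$ willing to receive. Memory vertices with bit $0$ may either (a) set their bit to $1$ and forward to $inc^c$, or (b) forward along the chain. Helper $inc^c$ immediately forwards to $con$. The other commands are handled symmetrically: for \emph{decrement $c$} the controller gates its next receive to $r_{dec^c}$; for \emph{test $c$ for zero} it non-deterministically guesses the outcome, gating its next receive to either $r_{z^c}$ or $r_{nz^c}$, while memory vertices holding a $1$-bit are forbidden from forwarding along the $z^c$ path. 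The formula will discard runs in which the guess was wrong.

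I would then express a prenex $\forall$-$\LTLmX$ formula $\varphi := \forall x_{con}\,\forall x_{inc^1}\,\forall x_{dec^1}\,\forall x_{z^1}\,\forall x_{nz^1}\,\forall x_{inc^2}\,\forall x_{dec^2}\,\forall x_{z^2}\,\forall x_{nz^2}.\ \ltlG Valid \to \ltlG \neg HALT_{x_{con}}$, where $HALT$ marks halting control states and $Valid$ is a conjunction of $\LTLmX$ properties enforcing: the role of each named vertex (via the handshake), that a controller command is matched by a traversal through exactly the corresponding helper, and the standard 2CM well-formedness requirements. Correctness then has two directions: if the 2CM halts using counter values $< n$, the system with $n$ memory nodes per counter admits a valid run reaching $HALT$, so $P^G \not\models \varphi$; conversely, if the 2CM never halts, no $P^G$ has such a run, so $P^G \models \varphi$ for every $G \in \pgraph$.

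The main obstacle is the asymmetry introduced by $\Prcv$: unlike in the $\Psnd$ variant of Theorem~\ref{thm:DA}(2), the controller cannot steer the token to the intended counter. The completeness direction is the easy one (we may always choose the "right" non-deterministic branch), but soundness requires carefully designing $Valid$ so that every memory mis-guess either is provably dismissed or forces a later violation, all expressible with only $\LTLmX$ and without knowing the sender of the token. Given a correct $Valid$, undecidability follows since $\pgraph$ is computable and $\varphi$ is a fixed formula depending only on the 2CM. \qed
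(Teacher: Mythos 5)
Your high-level plan (reduce from 2CM non-halting; one controller, per-counter memory chains, helper vertices named by universally quantified index variables; a formula of the shape $\ltlG \mathit{Valid} \to \ltlG \neg HALT_{x_{con}}$ that discards runs where the token visits the wrong helper) is the same template the paper uses for Theorem~\ref{thm:DA}(2), and the paper indeed obtains the $\Prcv$ case by adapting that construction. However, your adaptation has a genuine gap at exactly the point where $\Prcv$ differs from $\Psnd$: you repeatedly let \emph{senders} make routing decisions. You write that a memory vertex with bit $0$ ``may either (a) set their bit to $1$ and forward to $inc^c$, or (b) forward along the chain,'' and that bit-$1$ vertices are ``forbidden from forwarding along the $z^c$ path.'' In $\Prcv$ we have $\card{\dirset_\tsnd}=1$: a process that sends has a single send action, and the token then goes nondeterministically to \emph{any} out-neighbour whose current state accepts the receive-direction labelling that edge. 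The sender's internal state (did it just increment? is its bit $1$?) cannot influence the destination. Concretely, a vertex that chooses (a) and sets its bit to $1$ may nevertheless have its token delivered to the next chain vertex, which also increments; the token then reaches $inc^c$ and your $\mathit{Valid}$ conjunct is satisfied, yet the counter increased by $2$. Symmetrically, the token can reach $nz^c$ from a bit-$0$ vertex that ``gave up'' on the chain, so a valid run can report ``not zero'' for a zero counter. Either failure corrupts a later zero-test and destroys soundness of the reduction (a non-halting 2CM could then have a valid run reaching $HALT$). Your closing sentence acknowledges that $\mathit{Valid}$ must dismiss every mis-guess, but this cannot be done by the formula alone: the formula sees only the nine quantified vertices, and the two bad scenarios above are observationally identical (at those vertices) to the intended ones.

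The missing idea is to invert the information flow: in $\Prcv$ every certificate must be about the \emph{receiver's} state, since willingness to accept a given receive-direction is the only state-dependent influence on routing. For instance, label all edges from an ``increment-$c$'' command vertex into the counter-$c$ memory with a direction that only bit-$0$ vertices accept, and make setting the bit a deterministic consequence of accepting that direction; then the token leaving that command vertex certifies exactly one increment, and the formula only has to control when that (quantified) command vertex holds the token. ``Counter $c$ is zero'' can still be certified by a chain whose edges are accepted only by bit-$0$ vertices, but ``counter $c$ is not zero'' cannot be certified by a send towards $nz^c$; it must be certified by another receiver-side event (e.g., a successful decrement followed by a restoring increment). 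Wrong guesses and misroutes should then lead to blocked/stuck runs, which are harmless for a non-halting reduction. Without reworking the construction along these lines, the proof as proposed does not go through.
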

\begin{proof}
The idea is similar to $\Psnd$ case, see Appendix~\ref{appendix:dir-aware}.
\qed
\end{proof}


\begin{theorem} \label{thm:CCbirings}
The problem $\PMCP_{\textbf{BR}}(\Prcvsnd, \oneSLTLmX)$ is undecidable.
\end{theorem}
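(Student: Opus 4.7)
The plan is to reduce from the non-halting problem for deterministic two-counter machines (2CMs), a standard undecidable problem and the usual source of undecidability in this area~\cite{EN95,EsparzaFM99}. The key observation is that a 2CM halts iff it halts under some bound $n$ on its counter values, since a 2CM whose counters are capped at $n$ is a finite-state system. Thus if I can build a process template $P$ such that the instance of size $\Theta(n)$ faithfully simulates the 2CM as long as counter values remain below $n$, and label a controller state $\mathit{HALT}$ when the 2CM reaches its halting state, then $\forall i.\, \ltlG \neg \mathit{HALT}_i$ holds in every instance iff no bounded simulation halts iff the 2CM does not halt.

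For part~1 (bi-directional rings, $\Prcvsnd$, $\forall\textsf{-LTL}\backslash\textsf{X}$), I would designate the process that initially holds the token as a \emph{controller} simulating the 2CM's finite control and let the remaining processes serve as \emph{memory cells}, each storing one bit per counter; the value of counter $c$ is represented by the number of memory cells whose $c$-bit is $1$, so a ring of size $n+1$ simulates counters up to $n$. Because the template is in $\Prcvsnd$, both send and receive directions are under process control, and I would exploit this to split each command into a \emph{dispatch} phase and an \emph{execute} phase: sending the token clockwise lets each memory cell advance a local command register mod the number of command types, after which the controller sends the token counter-clockwise to execute the command that the register names. Increment/decrement: the first memory cell able to perform the bit update does so and returns the token clockwise to signal success; otherwise it is passed counter-clockwise and eventually returns to the controller, signalling failure. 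Test-for-zero: all cells pass the token counter-clockwise while their $c$-bit is $0$, and the first cell with $c$-bit~$1$ sends the token back clockwise. In either case the controller reads the outcome from the direction of arrival, and $\forall i.\, \ltlG \neg \mathit{HALT}_i$ captures non-halting.

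For part~2 (arbitrary topology, $\Psnd$ or $\Prcv$, $\nineallLTLmX$ or $\nineexistsCTLmX$) only one of the two directions is process-controlled, so the clean bi-ring protocol above no longer makes dispatch unambiguous. Instead I would work with a tailored family of topologies containing nine distinguished nodes picked out by the nine index quantifiers: one \emph{controller} and eight \emph{flag nodes} $\mathsf{inc}^c, \mathsf{dec}^c, \mathsf{nz}^c, \mathsf{z}^c$ for $c\in\{1,2\}$. The remaining nodes form two chains of memory cells (one per counter) connected to the controller and to the flag nodes. When the controller issues a command it sends the token into the relevant chain (nondeterministically choosing a cell in the $\Psnd$ case); the receiving cell \emph{guesses} which command is being executed, updates its bit if consistent with that guess, and routes the token to the corresponding flag node. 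The specification then has the shape $\forall \bar x.\, \ltlG \mathit{VALID}_{\bar x} \to \ltlG \neg \mathit{HALT}_{x_{\mathrm{con}}}$, where $\mathit{VALID}$ is an $\LTLmX$ predicate referring only to the nine named nodes and asserting that between successive commands the controller receives the token \emph{only} from the flag matching the command it issued. Runs in which a memory cell guesses wrongly are ruled out by $\mathit{VALID}$. The $\nineexistsCTLmX$ version is obtained by swapping the outer path quantifier to $\exists$ and asking for the existence of a halting run. The $\Prcv$ case is symmetric: memory cells choose receive directions instead, and guess the sender rather than the command.

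The main obstacle is precisely controlling what the nondeterminism of memory cells can do. Under restricted direction-awareness the cells might produce spurious ``simulations'' that reach $\mathit{HALT}$ even when the true 2CM does not, so I need to verify that $\mathit{VALID}$ (together with the nine index quantifiers naming the distinguished nodes) filters out exactly these cheating runs, leaving a bijection between 2CM configurations and faithful system runs up to the counter bound. A secondary technical point is checking that the $\Psimptok$-style progress assumption on $P$ (item vii of the process template definition) is compatible with the protocol above---in particular that the controller's idle loops and the memory cells' retry loops alternate $\tsnd$/$\trcv$ correctly along every infinite action-labeled path. Once these bookkeeping checks go through in both parts, undecidability of the non-halting problem transfers to the PMCP.
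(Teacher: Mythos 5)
Your construction for the bi-directional ring case is essentially identical to the paper's: a controller simulating the 2CM's finite control, memory processes each holding one bit per counter, clockwise circulation to advance a command register mod the number of command types, counter-clockwise circulation to execute, and the direction of the token's return telling the controller whether the command succeeded — all checked against the single-index specification $\forall i.\, \ltlG \neg \mathit{HALT}_i$. This matches the paper's proof, so no further comparison is needed.
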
\todo{introduce 1-SLTL in def of prenex indexed formulas}
\begin{proof}[full proof is in Appendix~\ref{appendix:dir-aware}]
The process having the token initially simulates 2CM, 
while others store counters values. 
The controller conveys the command to memory processes 
by sending the token in ${\sf cw}$ 
direction a special number of times. The controller signals 
the memory processes to execute the command by sending the token 
in ${\sf ccw}$ direction. \qed
\end{proof}
}

\blackout{
\begin{question}
Can we characterize for which graphs $\PMCP$ with directions is decidable. 
Graphs of bounded diameter?
\end{question}
}

%

\section{Extensions} \label{sec:extensions} 

There are a number of extensions of direction-unaware TPSs for which
the theorems that state existence of cutoffs (Theorems~\ref{thm:cutoff_kctld} and \ref{thm:explicit_cutoffs}) still hold. We describe these in order to highlight assumptions that make the proofs work:
\begin{enumerate}
\item Processes can be infinite-state. 
\item The EN-restriction on the process template $P$ can be relaxed: replace item $vii)$ in Definition~\ref{def:process_template} by 
``For every state $q$ that has the token there is a finite path 
$q \dots q'$ such that $q'$ does not have the token, 
and for every $q$ that does not have the token there is a finite path
$q \dots q'$ such that $q'$ has the token''.

\item One can further allow \emph{direction-sensing} TPSs,
which is a direction-aware TPS with an additional restriction 
on the process template: 
``If $\trans{q}{q'}{d} \in \Trans$ for some direction $d \in \dirset_\tsnd$,
then for every $d \in \dirset_\tsnd$ there exists 
a transition $\trans{q}{q''}{d} \in \Trans$''; 
and a similar statement for $\dirset_\trcv$. Informally: 
we can allow processes to change state according to the direction 
that the token is (non-deterministically) sent to or received, 
but the processes are not allowed to block any particular direction.
\ak{Did you check this item is allowed?}

\item One can further allow the token to carry a value but with the
  strong restriction that from every state that has the token and
  every value $v$ there is a path of internal actions in $P$ which
  eventually sends the token with value $v$, and the same for receiving.
\sr{This should be checked. It can wait for the journal version.}

\end{enumerate}
These conditions on $P$ all have the same flavor: they ensure that a process can not choose what information to send/receive, whether that information is a value on the token or a direction for the token.

\section{Related work}\label{sec:related} 


Besides the results that this paper is directly based on~\cite{Suzuki,EN95,CTTV04}, there are several other relevant papers.

Emerson and Kahlon~\cite{EK04} consider token-passing in uni- and bi-directional rings, where processes are direction-aware and tokens carry messages (but can only be changed a bounded number of times). However, the provided cutoff theorems only hold for specifications that talk about two processes (in a uni-directional ring) or one process (in a bi-directional ring), process templates need to be deterministic, an cutoffs depend on the size of the process implementation.

German and Sistla~\cite{GS92} provide cutoffs for the PMCP for systems with pairwise synchronization. Although pairwise synchronization can simulate token-passing, their cutoff results are restricted to cliques and $1$-indexed \LTL. Moreover, their proof uses vector-addition systems with states and their cutoff depends on the process template and the specification formula. 



Delzanno et al.~\cite{DelzannoSZ10} 
study a model of broadcast protocols on arbitrary topologies, 
in which a process can synchronize with all of its available neighbors `at once' 
by broadcasting a message (from a finite set of messages).
They prove undecidability of PMCP for systems with arbitrary topologies 
and $1$-indexed safety properties, and that the problem becomes decidable if one restricts the topologies to `graphs with bounded paths' (such as stars). 
Their proof uses the machinery of well-structured transitions systems,
and no cutoffs are provided.
They also show undecidability of the PMCP in the case of non-prenex indexed properties 
of the form $\ltlG(\exists{i}.s(i) \in B)$
on general and the restricted topologies. 

Rabinovich~\cite[Section $4$]{rabi07} proves, using the composition method, that if monadic second-order theory of the set of topologies in $\pgraph$ is decidable, then the PMCP is decidable for propositional modal logic. The systems considered are defined by a very general notion of product of systems (which includes our token passing systems as a subcase).

The PMCP for various fragments of non-prenex indexed \LTL\ is
undecidable, see German and Sistla~\cite[Section $6$]{GS92} for
systems with pairwise synchronization, and John et al.~\cite[Appendix
  $A$]{igor12} for systems with no synchronization at all.

\blackout{
Rabinovich~\cite{rabi07} defines a very general notion of product of systems (which includes our token passing systems as a subcase) and proves a composition theorem in the spirit of Feferman-Vaught \cite{FV59} for specifications from basic propositional (multi-) modal logic. He also briefly considers the parameterized model checking problem \cite[Section $4$]{rabi07} and proves that if a parameterized topology $\pgraph$ is algorithmically nice (i.e., has decidable monadic second-order theory in the signature of graphs) \rb{what does it mean ``the parameterized topologies $\pgraph$ have decidable monadic second-order theory''? does it mean the topology can be described in MSO?} then PMCP is decidable for propositional modal logic. Note that his definition of product of systems can introduce new atomic predicates of the form $\forall i. p_i$ \cite[Example $20. (1)$]{rabi07} (and thus can form formulas that are not in the prenex fragment of indexed temporal logic), as well as monadic second-order quantification (e.g., in rings one can express that there are an even number of $i$ such that $p_i$). 
The exact relationship between the resulting specification language and (full, not necessarily prenex) indexed-temporal logic is left for future work.
\rb{future work of whom? us? Rabinovich?}
}

\ifblack
\paragraph{Cutoffs for  QA\CTLmX}

If we do not restrict to prenex forms, the usual decompositions and
cutoffs do not work anymore, even over rings.

\begin{theorem}[Igor 2012]
\end{theorem}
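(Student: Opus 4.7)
The plan is to reduce the non-halting problem for deterministic two-counter machines (2CMs) to each PMCP instance. A 2CM halts iff it halts with some bounded counter values, so it suffices to build, for each part, a process template $P$ and a specification $\varphi$ such that the TPSs $\{P^G : G \in \pgraph\}$ faithfully simulate the 2CM as long as the counters stay below a bound that grows with $|V_G|$, and such that $\varphi$ holds on every $P^G$ iff the 2CM never halts.

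For part 1 (bi-directional rings, $\Prcvsnd$, $\forall\textsf{-LTL}\backslash\textsf{X}$), I would designate the initial token holder as a \emph{controller} simulating the 2CM's finite control, and let the remaining $n-1$ processes be \emph{memory cells} each storing one bit per counter. The value of counter $c$ is encoded as the number of cells whose $c$-bit is $1$, so a ring of size $n$ supports counter values up to $n-1$. The two directions \textsf{cw} and \textsf{ccw} then provide a signalling channel: sending \textsf{cw} encodes the next 2CM-command (each cell increments, modulo a fixed constant, a local ``command'' register whenever it forwards the token \textsf{cw}), while sending \textsf{ccw} triggers execution. For increment/decrement, the first cell whose bit allows the action performs it and returns the token \textsf{cw}; if no cell can execute, the token returns to the controller from \textsf{ccw}, which then enters an absorbing no-op loop that keeps $\neg HALT$ true so the specification is not violated by overflow. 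Zero-test is analogous: cells with $c$-bit $=0$ forward \textsf{ccw}, and the first cell with $c$-bit $=1$ bounces the token back \textsf{cw}. Crucially, $\Prcvsnd$ lets each cell both choose outgoing direction and filter by incoming direction, cleanly separating the two phases. The specification $\forall i.\ \ltlG \neg HALT_i$ then holds on every $P^G$ iff the 2CM never halts.

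For part 2 (arbitrary topologies, $\Psnd$ or $\Prcv$, 9-indexed logic), when only one of the two directions is controllable we cannot use a single ring. Instead I would construct a topology with nine distinguished roles --- a controller plus, for each counter $c$, nodes $\textsf{inc}^c$, $\textsf{dec}^c$, $\textsf{nz}^c$, $\textsf{z}^c$ --- together with chains of memory cells wired appropriately. In the $\Psnd$ case the controller selects its intended target via its outgoing direction; a memory cell, unable to see who sent it the token, non-deterministically guesses the pending command and routes the token to whichever distinguished outcome node matches its guess. The prefix $\forall x_0 \dots \forall x_8$ binds the nine distinguished roles, and the body has the form $\ltlG\, \ValidRun \to \ltlG\, \neg HALT_{x_0}$, where $\ValidRun$ is an $\LTLmX$ formula over the atoms of the quantified processes that holds on a run exactly when the quantified indices really are the intended special nodes \emph{and} every command issued by the controller is followed by the token returning from the expected outcome node. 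Runs in which a memory cell guessed wrongly falsify $\ValidRun$, hence they do not constrain $\neg HALT$. The $\Prcv$ case is symmetric (non-determinism moves to the sender and filtering to the receiver), and the $\nineexistsCTLmX$ variant replaces $\forall$ with $\exists$ and uses a path quantifier to pick the legitimate simulation branch.

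The main obstacle is designing $\ValidRun$ and the topology of part 2 so that (i) every bounded-counter run of the 2CM can be faithfully simulated by a legitimate run in some $P^G$ and (ii) no ``cheating'' run can satisfy $\ValidRun$ while reaching a halting state. The nine index quantifiers are the minimum scaffolding that lets the formula pin the special nodes down by name; with fewer, the specification cannot tell intended token exchanges from spurious ones. Part 1 is easier because having both direction choices realises the two-phase protocol on a single ring without any such quantifier scaffolding, which is why a single universal index suffices there.
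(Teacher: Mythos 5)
The statement you were asked to prove is the empty theorem environment labelled ``[Igor 2012]''. In the source it sits in a disabled block under the heading ``Cutoffs for QA\CTLmX'', immediately after the sentence ``If we do not restrict to prenex forms, the usual decompositions and cutoffs do not work anymore, even over rings'', and just before a proposition exhibiting a \emph{direction-unaware} process $P \in \Psimptok$ and a non-prenex formula that holds in the uni-directional ring of size $n$ iff $n$ is even. From that context (and from the citation of John et al.\ in the related-work section), its intended content is a negative result about \emph{non-prenex}, quantified-atom indexed temporal logic --- formulas with index quantifiers in the scope of temporal operators, such as $\ltlG(\exists i.\ p_i)$ --- on ordinary, direction-unaware token-passing systems.

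Your proposal does not address this statement. What you have written is, almost verbatim, the paper's own proof sketch of Theorem~\ref{thm:DA}: the 2CM non-halting reduction with a controller and memory cells on a bi-directional ring for $\Prcvsnd$, and the nine-special-node topology with the $\ltlG\,\ValidRun \to \ltlG\,\neg HALT_{x}$ scheme for $\Psnd$ and $\Prcv$. That argument is fine for Theorem~\ref{thm:DA}, but it establishes undecidability for \emph{prenex} indexed \LTLmX\ on \emph{direction-aware} systems, where the hardness comes from the processes' ability to choose sending or receiving directions. The ``[Igor 2012]'' statement concerns the orthogonal phenomenon: the processes are direction-unaware, so Theorem~\ref{thm:cutoff_kctld} guarantees cutoffs for every prenex fragment, and the failure of cutoffs (even on rings) must come entirely from allowing index quantifiers under temporal operators, which breaks the projection/composition machinery of Section~\ref{sec:method}. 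Your construction cannot be adapted by inspection --- with $P \in \Psimptok$ the controller has no directions with which to signal commands --- so a genuinely different mechanism is required, e.g.\ formulas that count or parity-check the set of processes satisfying an atom at a given moment, in the spirit of the even/odd ring example that follows the theorem in the source.
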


In what follows, $P^n$ is the composition of $P$ for the uni-directional ring of size $n$.
\begin{proposition}
There exists a process $P \in \Psimptok$ and formula $\pspec$ of $QA\LTL$ (or $QA\CTLmX$), that uses 'until', such that $P^n \models \pspec$ iff $n$ is even.
\end{proposition}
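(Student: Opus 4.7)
The idea is to let each process non-deterministically choose a colour $\mathsf R$ or $\mathsf B$ on its first move and keep it forever, then write a \CTLmX\ formula asserting the existence of a reachable state of $P^n$ in which the resulting colouring is a proper $2$-colouring of the token-ring; such a state exists iff the cycle $C_n$ is bipartite iff $n$ is even. Concretely, let $P\in\Psimptok$ have local states $\iota_t,\iota_n,t_{\mathsf R},t_{\mathsf B},n_{\mathsf R},n_{\mathsf B}$ (with $T=\{\iota_t,t_{\mathsf R},t_{\mathsf B}\}$), initial states $\{\iota_t,\iota_n\}$, irreversible internal transitions $\iota_n\to n_{\mathsf R},n_{\mathsf B}$ and $\iota_t\to t_{\mathsf R},t_{\mathsf B}$ that guess the colour, and the natural colour-preserving send/receive transitions $t_X\stackrel{\tsnd}{\to}n_X$ and $n_X\stackrel{\trcv}{\to}t_X$ for $X\in\{\mathsf R,\mathsf B\}$. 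The atoms are $p$ (in $T$), $\mathsf R$ (in $\{n_{\mathsf R},t_{\mathsf R}\}$) and $\mathsf B$ (in $\{n_{\mathsf B},t_{\mathsf B}\}$). Because the colour-picking moves are irreversible, every infinite run of $P^n$ contains only finitely many internal transitions and then cycles the token $i\mapsto i{+}1\pmod n$.

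Define the formulae
\[
\mathrm{succ}(i,j) := \mathbf A\,\ltlG\bigl(p_i \to \mathbf A(p_i\,\ltlU\,p_j)\bigr),\ \chi := \forall i\neq j.\ \mathrm{succ}(i,j)\to(\mathsf R_i\oplus \mathsf R_j),\ \pspec := \mathbf E\,\ltlF\,\chi.
\]
This formula is in \CTLmX, its index quantifiers are nested inside $\ltlF$ (so it is QA), and it uses $\ltlU$ non-trivially inside $\mathrm{succ}$. One first verifies that at every reachable state of $P^n$, $\mathrm{succ}(i,j)$ holds iff $j=i{+}1\pmod n$: whenever $p_i$ becomes true, $p_i$ remains true along every infinite continuation until the token is forcibly passed (necessarily to $i{+}1$), so $\mathbf A(p_i\,\ltlU\,p_{i+1})$ holds, while $\mathbf A(p_i\,\ltlU\,p_j)$ fails for $j\notin\{i,i{+}1\}$ because $p_i$ drops before $p_j$ ever becomes true. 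Consequently $\chi$ is equivalent, at any reachable state, to the parity system $\mathsf R_i + \mathsf R_{i+1}\equiv 1\pmod 2$ for every $i\in[n]$. Summing over $i$ gives $0\equiv n\pmod 2$, so this system is solvable iff $n$ is even: if $n$ is even, the run in which every process guesses its colour from the alternating $2$-colouring reaches, once all guesses are made, a state satisfying $\chi$, witnessing $\mathbf E\ltlF\chi$; if $n$ is odd, no reachable state satisfies $\chi$, so $\pspec$ fails.

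The main obstacle is justifying that the inner $\mathbf A(p_i\,\ltlU\,p_j)$ works at all in \CTLmX\ with no $\ltlX$: one must show that no infinite schedule of the global LTS can starve the token-holder. This is exactly why the colour-picking transitions are made irreversible --- along any infinite run at most $n$ internal transitions can fire, after which every non-token process is committed to a colour and has only its incoming $\trcv$ available, so by totality of the global transition relation the only enabled global move is the token pass from the current holder to its unique ring-successor. This forces $p_{i+1}$ to eventually become true on every infinite continuation from a state with $p_i$, which is precisely what $\mathbf A(p_i\,\ltlU\,p_{i+1})$ demands.
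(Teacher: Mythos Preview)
The paper does not actually prove this proposition --- the proof environment in the source contains only a placeholder --- so there is no authorial argument to compare against. Your construction, however, has a real gap: the formula you write down is not in the QA fragment.

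The paper defines quantified-atom temporal logic as the fragment ``in which there are no temporal operators in the scope of quantifiers''; equivalently, index quantifiers may be applied only to Boolean combinations of indexed atoms. Your subformula $\chi = \forall i\neq j.\ \mathrm{succ}(i,j)\to(\mathsf R_i\oplus \mathsf R_j)$ places $\forall i\neq j$ over $\mathrm{succ}(i,j) = \mathbf A\,\ltlG\bigl(p_i\to\mathbf A(p_i\,\ltlU\,p_j)\bigr)$, which contains $\ltlG$ and $\ltlU$. Hence $\chi$, and therefore $\pspec$, is not a QA formula at all. Your parenthetical ``its index quantifiers are nested inside $\ltlF$ (so it is QA)'' has the restriction exactly backwards: QA permits index quantifiers inside temporal operators but forbids temporal operators inside index quantifiers. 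A secondary point: your formula relies essentially on the branching quantifier $\mathbf E$, so even setting the QA issue aside it would address only the $\CTLmX$ half of the statement, not QA\LTL.

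The $2$-colouring idea is the right one and can be salvaged by moving the adjacency test from the formula into the process. Have each process, immediately after sending, enter a transient state labelled $\mathsf{js}_X$ (``just sent, colour $X$''), and immediately after receiving enter a state labelled $\mathsf{jr}_X$; each is left by a single internal transition. A token pass between equally-coloured neighbours then produces a global state satisfying the purely QA predicate
\[
\mathrm{Viol}\ :=\ \bigl((\exists i.\,\mathsf{js}_{\mathsf R,i})\wedge(\exists j.\,\mathsf{jr}_{\mathsf R,j})\bigr)\ \vee\ \bigl((\exists i.\,\mathsf{js}_{\mathsf B,i})\wedge(\exists j.\,\mathsf{jr}_{\mathsf B,j})\bigr),
\]
which has no temporal operators under its index quantifiers. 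The formula $\mathbf E\,\ltlG\,\neg\mathrm{Viol}$ (or an $\ltlU$-phrased variant) is then genuine QA\CTLmX. For even $n$, the run that picks the alternating colouring and schedules each $\mathsf{js}$ to be cleared before the next pass witnesses the formula; for odd $n$ every colouring has a monochromatic edge, and the global state immediately after the token crosses that edge is unavoidable on every run and satisfies $\mathrm{Viol}$, so the formula fails. Your parity argument then goes through unchanged.
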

\begin{proof}
\todo{finish}
\end{proof}


\fi

\section{Summary}

The goal of this work was to find out under what conditions there are
cutoffs for temporal logics and token-passing systems on general
topologies. We found that stratifying prenex indexed \CTLstarmX\ by
nesting-depth of path quantifiers allowed us to recover the existence
of cutoffs; but that there are no cutoffs if the processes are allowed
to choose the direction of the token. 
In all the considered cases
where there is no cutoff we show that the PMCP problem is actually
undecidable.\sr{im not sure how to include the questions here...}

Our positive results are provided by a construction that generalizes
and unifies the known positive results, and clearly decomposes the
problem into two aspects: tracking the movement of the token through
the underlying topology, and simulating the internal states of the
processes that the specification formula can see.  The construction
yields small cutoffs for common topologies (such as rings, stars, and
cliques) and specifications from prenex indexed \CTLstarmX. 

\blackout{

The following is not in the paper but can be dealt with.
\begin{list}{$-$}{}
\item the weaker notion of direction-sensing in which a process must receive the token but can change its state depending on the direction it came from. can recover cutoffs exist.
\item two other notions of fairness: CTTV (global fairness in which you ignore non fair runs), and the condition that allows you always to push the token through.
\item multiple identical tokens
\item binary value with suitable restriction.
\item decidable QATL fragment, and QATL and prenexTL are expressively incomparable.
\item the cutoffs exist proofs work (except for \CTLmX) for infinite processes P.
\end{list}

\begin{remark}
The proof (of Theorem 8) also shows that there exists finitely many topologies $H_1,\cdots,H_m$ such that if $P^{H_i} \models \phi$ for $1 \leq i \leq m$ then $P^G \models \phi$ for almost all\footnote{i.e., for all except possibly a finite number.}\ak{why except possible a finite number?}  $G \in \pgraph$.
\end{remark}

{\bf Atom-quantified temporal logic (AQTL)} is the fragment of ITL in which quantification can only apply to Boolean combinations of atoms, e.g., $\eventually \exists i p_i$. 
Another popular fragment is \emph{quantified-atom temporal logic, QATL} in which there are no temporal operators in the scope of quantifiers.

We consider \LTLmX as the syntactic fragment of \CTLstarmX\ in which there is a single path quantifier, namely $\ctlA$, at the front of the formula.

\subsection{Open problems.} 
\todo{find a place for open problems... conclusion?}
The following problems are left open:

\begin{question}
\ak{this question makes sense for each constant $k$, 
not in general.}
If processes know to whom they send tokens, but not from where they
receive them, is $\PMCP$ for $k-\LTLmX$ decidable? What is a suitable
decomposition of a big graph?
\end{question}
Decompositions need to take into account outgoing directions, and may
become significantly more complex.

\begin{question}
Is $\PMCP(\Pproctemp_{simptok}, \{biring^{recv'}\}, \Pspec)$ decidable?
(if a process can sense but cannot control the direction) 
\end{question}

\begin{question}
Is $\PMCP(\Pproctemp_{simptok}, \{biring^{send'}\}, \Pspec)$ decidable?
(if a process can control but cannot sense the direction) 
\end{question}

}

\smallskip
\noindent
{\bf Acknowledgments.}
%
We thank Roderick Bloem for detailed comments on numerous drafts and Krishnendu Chatterjee for important comments regarding the structure of the paper. We thank Roderick Bloem, Igor Konnov, Helmut Veith, and Josef Widder for discussions at an early stage of this work that, in particular, pointed out the relevance of direction-unawareness in \cite{CTTV04}.

\bibliographystyle{splncs03}
\bibliography{lit}



\end{document}